\documentclass[aps,pra,11pt,onecolumn,nofootinbib,tightenlines,superscriptaddress]{revtex4-2}

\usepackage[T1]{fontenc}
\usepackage{lmodern}
\usepackage{microtype}
\usepackage{amsmath,amssymb,mathtools,amsthm}
\usepackage{braket}
\usepackage{array,booktabs}
\usepackage[table]{xcolor}
\usepackage[most,breakable]{tcolorbox}
\usepackage{placeins}

\definecolor{verylightgray}{gray}{0.96}
\definecolor{morelightgray}{gray}{0.975}
\definecolor{arimotoaccent}{HTML}{C77800}
\definecolor{mainresultaccent}{HTML}{176B87}
\definecolor{technicalaccent}{HTML}{657A45}

\usepackage{tikz}
\usetikzlibrary{positioning, arrows.meta, decorations.pathmorphing, fit, calc}

\usepackage[colorlinks=true,urlcolor=blue,citecolor=blue,linkcolor=blue]{hyperref}

\newtheorem{theorem}{Theorem}
\newtheorem{lemma}[theorem]{Lemma}
\newtheorem{proposition}[theorem]{Proposition}
\newtheorem{arimotoproposition}[theorem]{Proposition}

\newtheorem{fact}{Fact}
\newtheorem{remark}[theorem]{Remark}
\let\oldremark\remark
\renewcommand{\remark}{\oldremark\upshape}

\tcbset{
 resultbox/.style={
   enhanced,
   breakable,
   boxrule=0.5pt,
   arc=1.5pt,
   outer arc=1.5pt,
   left=5.75pt,
   right=5.75pt,
   top=4pt,
   bottom=4pt,
   grow sidewards by=10pt,
   before skip=8pt,
   after skip=8pt
 }
}
\tcolorboxenvironment{arimotoproposition}{
 resultbox,
 colback=arimotoaccent!11!white,
 colframe=arimotoaccent!38!white
}
\tcolorboxenvironment{theorem}{
 resultbox,
 colback=mainresultaccent!9!white,
 colframe=mainresultaccent!38!white
}
\tcolorboxenvironment{lemma}{
 resultbox,
 colback=black!2!white,
 colframe=technicalaccent!32!white
}
\tcolorboxenvironment{proposition}{
 resultbox,
 colback=black!2!white,
 colframe=technicalaccent!32!white
}

\newcommand{\norm}[1]{\left\lVert#1\right\rVert}
\DeclareMathOperator{\Tr}{Tr}
\DeclareMathOperator{\supp}{supp}
\DeclareMathOperator{\id}{id}

\newcommand{\ClassicalCapacity}{C}
\newcommand{\QuantumCapacity}{Q}
\newcommand{\HolevoQuantity}{C^{(1)}}
\newcommand{\CoherentInformation}{Q^{(1)}}
\newcommand{\RenyiClassicalCapacity}[1]{C_{#1}}
\newcommand{\RenyiQuantumCapacity}[1]{Q_{#1}}
\newcommand{\RenyiHolevoQuantity}[1]{C_{#1}^{(1)}}
\newcommand{\RenyiCoherentInformation}[1]{Q_{#1}^{(1)}}
\newcommand{\RenyiMutualInformation}[1]{I_{#1}}
\newcommand{\HolevoAlt}{\chi}

\begin{document}

\title{No information transmission through quantum channels above capacity}
\hypersetup{
 pdftitle={No information transmission through quantum channels above capacity},
 pdfauthor={Hao-Chung Cheng, Marco Tomamichel},
 pdfkeywords={quantum capacity, classical capacity, entanglement generation, coherent information, Holevo information, sandwiched Renyi divergence, strong converse}
}

\begin{abstract}
We show that the capacity of a quantum channel demarcates a phase transition:
while reliable transmission below capacity is always possible, any attempt to
transmit information above it fails catastrophically. Specifically,
we prove exponential strong converse theorems for unassisted quantum and classical
communication over arbitrary finite-dimensional memoryless quantum channels.
At rates beyond the respective capacity, the entanglement-generation fidelity
and the success probability for classical communication decay exponentially
with the number of channel uses. This rules out transmission above capacity
even when one tolerates arbitrarily large errors. Our proof
follows the classical Arimoto strategy, augmented by a crucial new ingredient: integral representations of Rényi information measures that lead to
asymptotic continuity bounds for R\'enyi capacities. 

\end{abstract}

\author{Hao-Chung Cheng}
\affiliation{Department of Electrical Engineering, National Taiwan University, Taipei 10617, Taiwan}
\affiliation{Physics/Mathematics Division, National Center for Theoretical Sciences, Taiwan}
\affiliation{Hon Hai (Foxconn) Quantum Computing Center, Taiwan}

\author{Marco Tomamichel}
\affiliation{Department of Electrical and Computer Engineering,
National University of Singapore, Singapore 117583, Singapore}
\affiliation{Centre for Quantum Technologies, National University of Singapore, Singapore 117543, Singapore}

\date{September 9, 2026}

\maketitle

\begingroup
\footnotesize
\textbf{Use of artificial intelligence:}
OpenAI Codex with ChatGPT~6 Astra proposed showing asymptotic continuity via derivative bounds (the crucial new ingredient) and presented a first correct but arguably opaque proof. The current
exposition of the argument via an integral representation is due to the authors. Artificial intelligence was also used to prepare the initial manuscript and helped with literature searches and revisions. The authors have reviewed the manuscript carefully and take responsibility for its contents and any remaining errors.
\par
\endgroup

\section{Introduction}
\label{sec:introduction}

The modern theory of communication began with Shannon's coding theorem for a
classical noisy channel~\cite{shannon1948mathematical}. It identifies the
channel capacity $C$ as the largest communication rate at which the decoding error can vanish
as the block length grows. Shannon's converse was weak: at rates above $C$ it rules out
decoding errors tending to zero, but it does not force the error to approach
one~\cite{fano1961transmission}.
Wolfowitz subsequently proved the strong converse, according to which the
error must instead tend to one~\cite{wolfowitz1957coding}. Gallager quantified
the exponential decrease of the error below capacity
\cite{gallager1968information}, and Arimoto gave the complementary exponential
bound on the success probability above capacity
\cite{arimoto1973converse}. These results yield the sharp transition shown
in Fig.~\ref{fig:error-rate}. A weak converse leaves open a region above
capacity in which communication could retain a fixed probability of success;
only the strong converse makes capacity a genuine phase-transition point
between asymptotically reliable communication and complete failure.

\begin{figure}[ht!]
\centering
\begingroup
\definecolor{errorrateblue}{HTML}{0072B2}
\definecolor{errorrateorange}{HTML}{D55E00}
\begin{tikzpicture}[
  x=3.6cm,y=2.15cm,font=\footnotesize,
  line cap=round,line join=round,
  >={Stealth[length=1.7mm,width=1.1mm]}
]
  \fill[errorrateblue!3] (0,0) rectangle (1,1);
  \fill[errorrateorange!3] (1,0) rectangle (2,1);
  \draw[black!45,densely dashed] (1,0) -- (1,1.03);
  \draw[->,black!85] (0,0) -- (2.08,0);
  \draw[->,black!85] (0,0) -- (0,1.08);
  \begin{scope}
    \clip (0,0) rectangle (2,1);
    \foreach \steepness/\shade in {3/30,8/50,20/72,50/90}{
      \draw[black!\shade,line width=.75pt]
        plot[domain=0:1,samples=101,variable=\x,smooth]
        (\x,{pow(\x,\steepness)/(1+pow(\x,\steepness))});
      \draw[black!\shade,line width=.75pt]
        plot[domain=1:2,samples=101,variable=\x,smooth]
        (\x,{1/(1+pow(\x,-\steepness))});
    }
  \end{scope}
  \draw[errorrateblue,line width=1.1pt] (0,0) -- (1,0);
  \draw[errorrateorange,line width=1.1pt] (1,1) -- (2,1);
  \filldraw[fill=white,draw=errorrateblue,line width=.8pt]
    (1,0) circle[radius=1.35pt];
  \filldraw[fill=white,draw=errorrateorange,line width=.8pt]
    (1,1) circle[radius=1.35pt];
  \node[anchor=east] at (-.035,0) {$0$};
  \node[anchor=east] at (-.035,1) {$1$};
  \node[rotate=90] at (-.16,.52) {optimal error};
  \node[anchor=north] at (1,-.055) {capacity $\ClassicalCapacity$};
  \node[anchor=north east] at (2.08,-.055) {rate $r$};
  \node[text=errorrateblue,align=center] at (.49,.82)
    {reliable\\communication};
  \node[text=errorrateorange,align=center] at (1.52,.18)
    {strong-converse\\regime};
\end{tikzpicture}
\endgroup
\caption{The asymptotic coding transition. Darker curves represent longer
codes. Direct coding theorems drive the optimal error to zero below capacity,
whereas a strong converse drives it to one above capacity. The corresponding
exponents quantify the approach to the limiting step. No behavior at the
transition point itself is asserted.}
\label{fig:error-rate}
\end{figure}

The goal to establish the same sharp coding transition for finite-dimensional
memoryless quantum channels $\mathcal N:A\to B$ has been elusive so far. Such a channel can carry
either classical or quantum information, and even the underlying capacity
formulas have a more intricate structure. 

For unassisted classical communication, the Holevo--Schumacher--Westmoreland coding theorem~\cite{holevo1998capacity,schumacher1997sending} 
shows that reliable communication is possible below the classical capacity, given by
\begin{equation}
 \ClassicalCapacity(\mathcal N)
 =\lim_{n\to\infty}\frac1n\,\HolevoQuantity(\mathcal N^{\otimes n}),
 \qquad
 \HolevoQuantity(\mathcal M)
 =\max_{\{p_x,\rho_A^x\}_x} I(X \!: \! B) .
 \label{eq:intro-classical-capacity}
\end{equation}
Here the optimization is over finite input ensembles $\{p_x,\rho_A^x\}_x$ and the mutual information is evaluated for the 
classical-quantum state $\omega_{XB} = \sum_xp_x\ket{x}\!\bra{x}\otimes
\mathcal M(\rho_A^x)$. Holevo had already established the corresponding weak converse~\cite{holevo1973bounds}.
The single-letter formula
$\HolevoQuantity(\mathcal M)$ is the Holevo quantity, also commonly denoted by
$\HolevoAlt(\mathcal M)$. However, the capacity is a regularized formula: the Holevo quantity is evaluated on tensor powers
of the channel, allowing the states in the ensemble to be entangled across
$n$ input copies, and the resulting quantity is normalized per copy before
taking the limit.\footnote{Thus $n$ is a regularization parameter, distinct from the
block length of an operational communication code.} This regularization is necessary and cannot in general
be replaced by their $n=1$ terms. Shor related additivity of the
Holevo quantity to other central additivity conjectures, and Hastings then
disproved them by exhibiting a strict advantage from entangled inputs
\cite{shor2004equivalence,hastings2009superadditivity}.

Moving to quantum communication, the Lloyd--Shor--Devetak coding theorem~\cite{lloyd1997capacity,shor2002capacity,devetak2005capacity} shows that quantum communication is possible below the unassisted quantum capacity, given by
\begin{equation}
 \QuantumCapacity(\mathcal N)
 =\lim_{n\to\infty}\frac1n\,\CoherentInformation(\mathcal N^{\otimes n}),
 \qquad
 \CoherentInformation(\mathcal M)
 =\max_{\rho}
    \left\{H\!\left(\mathcal M(\rho)\right)
      -H\!\left(\mathcal M^{\rm c}(\rho)\right)\right\}.
 \label{eq:intro-quantum-capacity}
\end{equation}
Here $\mathcal M^{\rm c}$ is a complementary channel. The corresponding weak converse was established in
\cite{schumachernielsen1996quantum,barnum1998information}. The single-letter formula
$\CoherentInformation(\mathcal M)$ is the coherent information.
Equation~\eqref{eq:intro-quantum-capacity}
is again regularized, with the input state optimized jointly across
$n$ channel copies.  Coherent information is
also superadditive~\cite{divincenzo1998noisy}; even more, for every fixed $n$
there are channels with $\CoherentInformation(\mathcal N^{\otimes n})=0$ but $\QuantumCapacity(\mathcal N)>0$.~\cite{cubitt2015unbounded}. Regularization is thus necessary here too.

Exponential decay of the error below capacity in Fig.~\ref{fig:error-rate} is already
established by earlier coding bounds. For classical communication,
error-exponent bounds are available for classical-quantum channels
\cite{hayashi2007error,cheng2023simple,renes2025tight,liyang2025reliability,chengliu2025packing}.
For quantum communication, quantum coding and decoupling bounds likewise yield
exponential error decay
\cite{hamada2003fidelity,morganwinter2014pretty,cheng2026joint,berta2026tight}.
These bounds give exponential decay at every rate below the respective
regularized capacities. Completing the picture in Fig.~\ref{fig:error-rate} thus only requires an exponential
strong converse above capacity.

Exponential strong converses were previously known for entanglement-assisted
communication~\cite{guptawilde2015multiplicativity,li2023strong},
classical-quantum channels
\cite{ogawa1999strong,winter1999coding,mosonyi2017strong,chenggao2024strong},
and classical communication over certain covariant
channels~\cite{koenigwehner2009strong} as well as entanglement-breaking and
Hadamard channels~\cite{Wilde3}. For quantum communication, the Rains-information bound
gives a full strong converse for generalized dephasing
channels~\cite{tomamichel2017strong}, while degradable channels initially
admitted a ``pretty strong'' converse~\cite{morganwinter2014pretty}. Full
exponential strong converses for degradable and antidegradable channels were
obtained recently~\cite{kondra2026sharp}. A recent result for stabilizer codes
over Pauli channels also treats a regularized capacity, but only within that
restricted class of codes~\cite{tomamichel2026stabilizer}. For unrestricted
unassisted communication over a general quantum channel, the regularization
and the correlations across channel uses appear to constitute the
obstacle for an exponential strong converse.

The quantum Arimoto approach bounds the fidelity or decoding success
probability through a sandwiched R\'enyi information quantity of order greater
than one~\cite{sharmawarsi2013fundamental,Wilde3,wildewinter2014erasure}.
To obtain exponential decay at a fixed rate above capacity, one needs an order that works
uniformly over all block lengths. Continuity at each fixed block length does
not provide such a choice: the required order may approach one as the block
grows, leaving no positive exponential rate. The missing ingredient is
therefore a uniform comparison with the corresponding ordinary information
quantities.

We establish this comparison for both the sandwiched R\'enyi coherent
information and Holevo quantity
(Proposition~\ref{prop:finite-block-continuity}). After normalization by the
block length, their excess over the respective capacities is bounded by an
explicit correction that vanishes as the R\'enyi order decreases to one.
The correction depends only on the order and the local channel dimensions,
allowing arbitrary correlations across channel uses. Consequently, the
regularized R\'enyi quantities converge to the quantum and classical
capacities, respectively (Theorem~\ref{thm:main}). Combined with the Arimoto
bounds, this proves exponential strong converses for unassisted entanglement
generation and classical communication over arbitrary finite-dimensional
memoryless quantum channels (Theorem~\ref{thm:operational}).

Our proof begins with one-shot integral representations expressing the
R\'enyi quantities through ordinary entropy differences and mutual information
evaluated on tilted states. These states need not lie in the channel's
Stinespring image, preventing a direct comparison with optimization over
admissible inputs. We control their local support violations uniformly in the
number of channel uses and restore support one use at a time. Adapting the
Leung--Smith telescoping argument~\cite{leung2009continuity}, we bound each
entropy change using conditional-entropy
continuity~\cite{alicki2004continuity,winter2016tight} with only local
dimension factors. The resulting total correction is linear in the block
length, and its value per channel use vanishes as the R\'enyi order approaches
one.

Section~\ref{sec:results} presents the main results and a detailed proof outline.
Section~\ref{sec:integral-representations} derives the integral representations,
Section~\ref{sec:projection} establishes the local support estimates and
restoration procedure, and Section~\ref{sec:continuity} proves the uniform
bounds and continuity of the R\'enyi capacities.
Section~\ref{sec:conclusion} concludes. The appendices provide the proofs of
the operational strong-converse exponent characterizations and discuss other
quantum communication criteria.

\section{Proof Outline and Main Results}
\label{sec:results}

\subsection{The Arimoto route to a strong converse}\label{sec:arimoto}

The Arimoto method converts a R\'enyi information bound into an exponential
converse. It uses the sandwiched R\'enyi divergence
$\widetilde D_\alpha$, formally introduced in
Section~\ref{sec:integral-representations}. For quantum communication we focus
on entanglement generation. Fix a
finite-dimensional channel $\mathcal N:A\to B$. An $(n,M)$ code prepares a
state $\rho_{RA^n}$ locally, sends $A^n$ through
$\mathcal N^{\otimes n}$, and applies a decoder
$\mathcal D:B^n\to B_0$, where $\dim R=\dim B_0=M$. Let
$\ket{\Phi}_{RB_0}=M^{-1/2}\sum_{j=1}^M\ket j_R\ket j_{B_0}$ and
$\Phi_{RB_0}=\ket{\Phi}\!\bra{\Phi}_{RB_0}$. Its squared generation fidelity is
\begin{equation}
 F_{\mathrm{eg}}=\Tr\!\left[
 \Phi_{RB_0}(\id_R\otimes\mathcal D\circ\mathcal N^{\otimes n})(\rho_{RA^n})
 \right].
 \label{eq:generation-fidelity}
\end{equation}
The retained marginal $\rho_R$ is unrestricted. In terms of the sandwiched
R\'enyi divergence, define the sandwiched R\'enyi coherent information as
\begin{equation}
 \RenyiCoherentInformation{\alpha}(\mathcal N^{\otimes n})
 :=\sup_{\rho_{A^n}}\inf_{\sigma_{B^n}}
       \widetilde D_\alpha\!\left(
       (\id_R\otimes\mathcal N^{\otimes n})(\psi^\rho)
       \middle\| I_R\otimes\sigma_{B^n}\right),
 \label{eq:coherent-information}
\end{equation}
where $\psi^\rho_{RA^n}$ is any purification of $\rho_{A^n}$ and the
infimum is over density operators on $B^n$. Reference isometries leave the
divergence unchanged, so the value does not depend on the choice of
purification. A binary-test data-processing argument
\cite{sharmawarsi2013fundamental,Wilde3,wildewinter2014erasure} gives the
one-shot converse
\begin{equation}
 F_{\mathrm{eg}}\le
 2^{-\frac{\alpha-1}{\alpha}
       [\log M-\RenyiCoherentInformation{\alpha}
          (\mathcal N^{\otimes n})]},\qquad\forall\,\alpha>1.
 \label{eq:code-bound}
\end{equation}
Indeed, after decoding, the test for $\Phi_{RB_0}$ accepts the code state with
probability $F_{\mathrm{eg}}$ and $I_R\otimes\sigma_{B^n}$ with weight $1/M$;
data processing for $\widetilde D_\alpha$ and optimization over the input and
$\sigma_{B^n}$ give Eq.~\eqref{eq:code-bound}.

Regularizing over tensor powers defines the R\'enyi quantum capacity of order
$\alpha$,
\begin{equation}
 \RenyiQuantumCapacity{\alpha}(\mathcal N)
 :=\lim_{n\to\infty}\frac1n
       \RenyiCoherentInformation{\alpha}(\mathcal N^{\otimes n}).
 \label{eq:regularization}
\end{equation}
The optimized conditional R\'enyi entropy is additive on product states.
Equivalently for the channel optimization, product inputs give
\begin{equation} \label{eq:regularized-supremum}
 \RenyiCoherentInformation{\alpha}(\mathcal N^{\otimes(n+m)})
 \ge \RenyiCoherentInformation{\alpha}(\mathcal N^{\otimes n})
      +\RenyiCoherentInformation{\alpha}(\mathcal N^{\otimes m}).
\end{equation}
Fekete's lemma therefore shows that the limit in Eq.~\eqref{eq:regularization}
exists and $\RenyiCoherentInformation{\alpha}(\mathcal N^{\otimes n})
\le n\RenyiQuantumCapacity{\alpha}(\mathcal N)$. Writing $M=2^{nr}$ defines
the rate $r=n^{-1}\log M$. Equation~\eqref{eq:code-bound} and the preceding
inequality then give the Arimoto converse.

\begin{arimotoproposition}[Arimoto converse for entanglement generation]
\label{prop:arimoto-quantum}
For every $\alpha>1$, the fidelity of every $(n,2^{nr})$
entanglement-generation code satisfies
\begin{equation}
 F_{\mathrm{eg}}\le
 2^{-n\frac{\alpha-1}{\alpha}
       [r-\RenyiQuantumCapacity{\alpha}(\mathcal N)]}.
 \label{eq:regularized-arimoto-bound}
\end{equation}
\end{arimotoproposition}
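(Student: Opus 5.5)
The plan has two parts: first establish the one-shot bound~\eqref{eq:code-bound} for a single $(n,M)$ code, then replace the block quantity $\RenyiCoherentInformation{\alpha}(\mathcal N^{\otimes n})$ by $n\RenyiQuantumCapacity{\alpha}(\mathcal N)$ using superadditivity.

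\emph{Step 1 (one-shot bound).} Fix $\alpha>1$, the code state $\rho_{RA^n}$, and any density operator $\sigma_{B^n}$. Let $\omega_{RB^n}=(\id_R\otimes\mathcal N^{\otimes n})(\rho_{RA^n})$. Since $\rho_{RA^n}$ need not be pure, purify it by an additional system $R'$. The purification is also a purification of $\rho_{A^n}$ with reference $RR'$, so it is admissible in Eq.~\eqref{eq:coherent-information}. Data processing under the partial trace over $R'$ gives
\begin{equation*}
\widetilde D_\alpha\bigl(\omega_{RB^n}\big\| I_R\otimes\sigma_{B^n}\bigr)\le\widetilde D_\alpha\bigl(\omega_{RR'B^n}\big\| I_{RR'}\otimes\sigma_{B^n}\bigr).
\end{equation*}
Next apply the channel $\id_R\otimes\mathcal D$ followed by the two-outcome measurement $\{\Phi_{RB_0},I-\Phi_{RB_0}\}$. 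The first state becomes the distribution $(F_{\mathrm{eg}},1-F_{\mathrm{eg}})$. The operator $I_R\otimes\sigma_{B^n}$ becomes $I_R\otimes\mathcal D(\sigma)$. Because $\Tr[\Phi(I_R\otimes\tau)]=\Tr[\tau]/M=1/M$ for any density operator $\tau$, this second operator maps to the positive vector $(1/M,\,M-1/M)$.

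Data processing for $\widetilde D_\alpha$ with a positive (not necessarily normalized) second argument, together with dropping the second term of the classical sum (valid since $\alpha>1$), gives
\begin{equation*}
\widetilde D_\alpha(\omega\|I_R\otimes\sigma)\ge\tfrac{1}{\alpha-1}\log\bigl(F_{\mathrm{eg}}^{\alpha}M^{\alpha-1}\bigr).
\end{equation*}
Rearranging, $\log F_{\mathrm{eg}}\le-\frac{\alpha-1}{\alpha}\bigl[\log M-\widetilde D_\alpha(\omega\|I_R\otimes\sigma)\bigr]$. Taking the infimum over $\sigma_{B^n}$, and noting that the resulting value is at most the supremum over inputs, yields Eq.~\eqref{eq:code-bound}.

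\emph{Step 2 (regularization).} Superadditivity~\eqref{eq:regularized-supremum} holds because product inputs are admissible and the optimized sandwiched quantity is additive on product states; this additivity is the standard multiplicativity of $\inf_\sigma\widetilde D_\alpha(\cdot\|I\otimes\sigma)$, which I take as given in the paper. Fekete's lemma then gives $\frac1n\RenyiCoherentInformation{\alpha}(\mathcal N^{\otimes n})\le\sup_k\frac1k\RenyiCoherentInformation{\alpha}(\mathcal N^{\otimes k})=\RenyiQuantumCapacity{\alpha}(\mathcal N)$. Finiteness holds because each block quantity is bounded by $n\log\dim B$ (up to dimension constants). Substituting this bound and $\log M=nr$ into Eq.~\eqref{eq:code-bound}, and using $\frac{\alpha-1}{\alpha}>0$ so that the exponent is monotone in the right direction, gives Eq.~\eqref{eq:regularized-arimoto-bound}.

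\emph{Main obstacle.} The only delicate point is the binary-test step. It needs data processing of $\widetilde D_\alpha$ with the unnormalized second argument $I_R\otimes\sigma$ (valid for $\alpha\ge1/2$), the observation that the test's acceptance weight on $I_R\otimes\mathcal D(\sigma)$ is exactly $1/M$ regardless of the decoder, and care that the reference system $R$ in the code may be larger than a purification. I handle the last issue by the purify-then-trace-out argument above.
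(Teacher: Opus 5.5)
\textbf{Gap in Step 1: the purification step is false.} Your binary-test computation and your Step~2 coincide with the paper. The problem is the purification step in Step~1. The partial trace over $R'$ maps $I_{RR'}\otimes\sigma_{B^n}$ to $(\dim R')\,I_R\otimes\sigma_{B^n}$, not to $I_R\otimes\sigma_{B^n}$. Data processing therefore only gives
\[
\widetilde D_\alpha(\omega_{RB^n}\|I_R\otimes\sigma_{B^n})\le\widetilde D_\alpha(\omega_{RR'B^n}\|I_{RR'}\otimes\sigma_{B^n})+\log\dim R',
\]
and the penalty $\log\dim R'\ge\log\operatorname{rank}\rho_{RA^n}$ can exceed $\log M$, which makes the bound vacuous.

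Without the penalty, your inequality fails. Take $n=1$, the replacer channel $\mathcal N(X)=\Tr[X]\,\ket{0}\!\bra{0}_B$, and the preparation $\rho_{RA}=\ket{0}\!\bra{0}_R\otimes I_A/\dim A$. Then $R'$ ends up maximally entangled with the environment, so $\omega_{RR'B}=\ket{0}\!\bra{0}_R\otimes I_{R'}/\dim A\otimes\ket{0}\!\bra{0}_B$. With $\sigma_B=\ket{0}\!\bra{0}_B$, your left side is $0$ while your right side is $-\log\dim A$. At order one, your claim amounts to $H(R'|RB^n)_\omega\le0$, which by purity is $H(R'|E^n)_\omega\ge0$; this fails whenever $H(R'|E^n)_\omega<0$. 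Running the test $\Phi_{RB_0}\otimes I_{R'}$ directly on the purified state does not help either, since its acceptance weight on $I_{RR'}\otimes\mathcal D(\sigma_{B^n})$ is $\dim R'/M$.

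\textbf{The fix: handle mixed preparations by linearity.} The paper uses linearity of the fidelity in the preparation. For a fixed decoder, $F_{\mathrm{eg}}$ is linear in $\rho_{RA^n}$. Decomposing $\rho_{RA^n}=\sum_k q_k\psi^k_{RA^n}$ into pure states writes $F_{\mathrm{eg}}$ as the $q$-average of the fidelities of the pure preparations $\psi^k_{RA^n}$.
\begin{itemize}
\item For each pure preparation, your binary-test argument runs with the code's own reference $R$.
\item Each $\psi^k_{RA^n}$ is directly admissible in Eq.~\eqref{eq:coherent-information}: its Schmidt rank is at most $\dim A^n$, and reference isometries leave the conditional entropy unchanged. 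So a large $R$ is harmless; mixedness is the real issue.
\item The resulting bound does not depend on the preparation, so averaging yields Eq.~\eqref{eq:code-bound}.
\end{itemize}
Alternatively, joint quasi-convexity of $\widetilde D_\alpha$ for $\alpha>1$ shows that $-\widetilde H_\alpha^\uparrow(R|B^n)$ of the mixed output is at most the maximum over its pure components. With either repair, the rest of your proposal goes through unchanged.
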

\noindent Proposition~\ref{prop:arimoto-quantum} is folklore, but a formal
proof is given in Appendix~\ref{app:one-shot} for completeness.

\smallskip
Let us for a moment assume that we had the asymptotic continuity statement
\begin{equation}
 \boxed{\displaystyle
 \lim_{\alpha\searrow1}\RenyiQuantumCapacity{\alpha}(\mathcal N)
 \overset{?}{=}\QuantumCapacity(\mathcal N).}
 \label{eq:renyi-quantum-continuity-goal}
\end{equation}
Then, for every rate $r>\QuantumCapacity(\mathcal N)$, we could choose a fixed
$\alpha>1$ for which the exponent in Eq.~\eqref{eq:regularized-arimoto-bound}
is positive, and the fidelity would decay exponentially with the number of
channel uses. Hence, the only missing ingredient for an exponential strong
converse is indeed Eq.~\eqref{eq:renyi-quantum-continuity-goal}, and the goal
of the remainder is to establish exactly this continuity statement.

The quantum statement above uses squared fidelity. Entanglement transmission
and reference-assisted worst-case subspace fidelity are bounded by the
entanglement-generation fidelity, while ordinary worst-case pure-state
fidelity is at most twice as large. Hence the same exponential strong converse
property follows for the standard unassisted definitions.

\medskip
For classical communication, an $(n,M)$ code uses a POVM
$\{D_m\}_{m=1}^M$ on $B^n$ and has average success probability
\begin{equation}
 P_{\mathrm s}=\frac1M\sum_{m=1}^M
       \Tr\!\left[\mathcal N^{\otimes n}(\rho_{A^n}^m)D_m\right].
 \label{holevo:eq:success}
\end{equation}
For an input ensemble $\{p_x,\rho_A^x\}_x$, we define, consistently with
Eq.~\eqref{eq:intro-classical-capacity}, the c-q output state
$\omega_{XB} :=\sum_xp_x\ket{x}\!\bra{x}\otimes
\mathcal M(\rho_A^x)$. For any c-q state $\omega_{XB}$, define its sandwiched
R\'enyi mutual information of order $\alpha>1$ by
\begin{equation}
 \RenyiMutualInformation{\alpha}(X; B)_\omega
 :=\inf_{\sigma_B}
       \widetilde D_\alpha\!\left(
       \omega_{XB}
       \,\middle\|\,
       \omega_X\otimes\sigma_B\right),
 \label{holevo:eq:renyi-mutual-information}
\end{equation}
where the infimum is over density operators on $B$. The R\'enyi Holevo
quantity and the R\'enyi classical capacity of order $\alpha$ are then given by
\begin{equation}
 \RenyiHolevoQuantity{\alpha}(\mathcal M)
 :=\sup_{\{p_x,\rho_A^x\}_x}
       \RenyiMutualInformation{\alpha}(X; B)_{\omega} \ , \qquad  \RenyiClassicalCapacity{\alpha}(\mathcal N)
 :=\lim_{n\to\infty}\frac1n
          \RenyiHolevoQuantity{\alpha}(\mathcal N^{\otimes n}).
 \label{holevo:eq:definition} 
\end{equation}
A similar argument as the one used above yields the
following converse statement.

\begin{arimotoproposition}[Arimoto converse for classical communication]
\label{prop:arimoto-classical}
For every $\alpha>1$, the success probability of every $(n,2^{nr})$ classical
code satisfies
\begin{equation}
 P_{\mathrm s}\le
 2^{-n\frac{\alpha-1}{\alpha}
       [r-\RenyiClassicalCapacity{\alpha}(\mathcal N)]}.
 \label{holevo:eq:regularized-arimoto-bound}
\end{equation}
\end{arimotoproposition}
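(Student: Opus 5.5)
The proof has two steps: a one-shot bound obtained from data processing with a binary test, then regularization via superadditivity, just as for Proposition~\ref{prop:arimoto-quantum}.

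\emph{One-shot bound.} Fix an $(n,M)$ code with codewords $\rho_{A^n}^m$ and POVM $\{D_m\}$. Take the uniform ensemble $p_m=1/M$ with states $\rho^m_{A^n}$, and let
\[
\omega_{XB^n}=\frac1M\sum_m\ket m\!\bra m\otimes\mathcal N^{\otimes n}(\rho^m_{A^n}).
\]
For any density operator $\sigma_{B^n}$, define the binary test $T=\sum_m\ket m\!\bra m\otimes D_m$, which satisfies $0\le T\le I$. Then $\Tr[\omega_{XB^n}T]=P_{\mathrm s}$. For the product state we get
\[
\Tr[(\omega_X\otimes\sigma_{B^n})T]=\frac1M\sum_m\Tr[\sigma D_m]=\frac1M .
\]
Apply data processing for $\widetilde D_\alpha$ ($\alpha>1$) under the measurement channel $\{T,I-T\}$ to reduce to classical binary distributions. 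Drop the second term of the classical Rényi sum, using $(1-P_{\mathrm s})^\alpha(1-1/M)^{1-\alpha}\ge0$. This gives
\[
\widetilde D_\alpha(\omega_{XB^n}\|\omega_X\otimes\sigma_{B^n})\ge\frac{1}{\alpha-1}\log\bigl(P_{\mathrm s}^\alpha M^{\alpha-1}\bigr).
\]
Rearranging,
\[
P_{\mathrm s}\le 2^{-\frac{\alpha-1}{\alpha}[\log M-\widetilde D_\alpha(\omega_{XB^n}\|\omega_X\otimes\sigma_{B^n})]}.
\]
Take the infimum over $\sigma_{B^n}$ to obtain $\RenyiMutualInformation{\alpha}(X;B^n)_\omega$. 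Since the code's ensemble is one admissible ensemble for $\mathcal N^{\otimes n}$, this is at most $\RenyiHolevoQuantity{\alpha}(\mathcal N^{\otimes n})$.

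\emph{Regularization.} It remains to show $\RenyiHolevoQuantity{\alpha}(\mathcal N^{\otimes n})\le n\RenyiClassicalCapacity{\alpha}(\mathcal N)$. As in the quantum case, this follows from superadditivity plus Fekete's lemma. Take ensembles $\{p_x,\rho^x\}$ for $\mathcal N^{\otimes n}$ and $\{q_y,\tau^y\}$ for $\mathcal N^{\otimes m}$. The product ensemble $\{p_xq_y,\rho^x\otimes\tau^y\}$ yields the output state $\omega_{XB^n}\otimes\omega'_{YB^m}$. We need
\[
\RenyiMutualInformation{\alpha}(XY;B^nB^m)_{\omega\otimes\omega'}=\RenyiMutualInformation{\alpha}(X;B^n)_\omega+\RenyiMutualInformation{\alpha}(Y;B^m)_{\omega'} .
\]
This is the main obstacle, because the infimum runs over possibly non-product $\sigma_{B^nB^m}$. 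I would invoke the known additivity of sandwiched Rényi mutual information on product states, and I only need the "$\ge$" direction here.

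To prove that direction I would use the closed form of the optimizer for c-q states. Write $\omega_{XB}=\sum_xp_x\ket x\!\bra x\otimes\omega_B^x$. Then
\[
\widetilde D_\alpha(\omega_{XB}\|\omega_X\otimes\sigma)=\frac{1}{\alpha-1}\log\sum_xp_x\widetilde Q_\alpha(\omega^x_B\|\sigma).
\]
With this expression, Sibson's identity or the duality
\[
\RenyiMutualInformation{\alpha}(X;B)=\frac{\alpha}{\alpha-1}\log\Bigl\|\Bigl(\sum_xp_x\,\ldots\Bigr)\Bigr\|
\]
is not directly available for sandwiched quantities. Instead I would rely on the standard result, due to Beigi, Gupta--Wilde, and Hayashi--Tomamichel, that $\inf_\sigma\widetilde D_\alpha(\rho_{AB}\|\rho_A\otimes\sigma_B)$ is additive under tensor products. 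That result is proved via multiplicativity of the relevant $\alpha\to$ norm or via the dual variational formula, and I would cite it here.

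Given superadditivity, Fekete's lemma gives
\[
\RenyiClassicalCapacity{\alpha}(\mathcal N)=\sup_n\frac1n\RenyiHolevoQuantity{\alpha}(\mathcal N^{\otimes n}),
\]
so $\RenyiHolevoQuantity{\alpha}(\mathcal N^{\otimes n})\le n\RenyiClassicalCapacity{\alpha}(\mathcal N)$. Substituting $M=2^{nr}$ into the one-shot bound then yields Eq.~\eqref{holevo:eq:regularized-arimoto-bound}.
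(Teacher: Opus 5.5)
Your proposal is correct and follows the paper's proof essentially step for step. It uses the same test $\sum_m\ket{m}\!\bra{m}\otimes D_m$ with data processing to binary distributions and discarding the second term, then superadditivity of the R\'enyi Holevo quantity via additivity of the optimized divergence with fixed classical marginal (the paper cites Hayashi--Tomamichel, Lemma~7) and Fekete's lemma; the half-finished aside about Sibson's identity is unnecessary and can be deleted.
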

\noindent A proof of Proposition~\ref{prop:arimoto-classical} is given
in Appendix~\ref{app:classical-operational} for completeness. The missing
asymptotic statement, 
\begin{equation}
 \boxed{\displaystyle
 \lim_{\alpha\searrow1}\RenyiClassicalCapacity{\alpha}(\mathcal N)
 \overset{?}{=}\ClassicalCapacity(\mathcal N) ,}
 \label{holevo:eq:renyi-classical-continuity-goal}
\end{equation}
would make the exponent in
Eq.~\eqref{holevo:eq:regularized-arimoto-bound} positive at every rate above
$\ClassicalCapacity(\mathcal N)$.

\subsection{One-shot integral representations}

It remains only to prove the two continuity statements in
Eqs.~\eqref{eq:renyi-quantum-continuity-goal}
and~\eqref{holevo:eq:renyi-classical-continuity-goal}. Recall that
$\CoherentInformation(\mathcal M)$ is the maximum of $H(B)-H(E)$ over
Stinespring outputs, whereas $\HolevoQuantity(\mathcal M)$ is the maximum of
$I(X \!: \! B)$ over c-q channel output states. Our main one-shot ingredient
represents their R\'enyi counterparts as averages of these same quantities.
For each input state or input family, the quantum tilted states
$\zeta_{BE}(u)$ and the c-q tilted states $\xi_{XBE}(u)$ are defined in
Eqs.~\eqref{eq:tilted-state} and~\eqref{holevo:eq:classical-extensions},
respectively, in Section~\ref{sec:integral-representations}. The integral representations are in terms of these states.

\begin{theorem}[Integral representations]
\label{thm:integral-representations}
For every finite-dimensional channel $\mathcal M$ and every $\alpha>1$, set
$s_\alpha=(\alpha-1)/\alpha$. Then
\begin{equation}
 \RenyiCoherentInformation{\alpha}(\mathcal M)
 =\sup_{\rho_A}\left\{\frac1{s_\alpha}\int_0^{s_\alpha}
       \bigl[H(B)_{\zeta(u)}-H(E)_{\zeta(u)}\bigr]\,\mathrm{d}u\right\},
 \label{eq:integral-coherent}
\end{equation}
and
\begin{equation}
 \RenyiHolevoQuantity{\alpha}(\mathcal M)
 =\sup_{\{\rho_A^x\}_x}\left\{\frac1{s_\alpha}\int_0^{s_\alpha}
       I(X:B)_{\xi(u)}\,\mathrm{d}u\right\}.
 \label{holevo:eq:integral}
\end{equation}
The second supremum is over finite families of pure input states. For each
family, the probability distribution defining $\xi(u)$ is optimized
separately at each value of $u$.
\end{theorem}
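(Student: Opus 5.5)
The plan is to use the elementary identity $f(s)=\int_0^s f'(u)\,\mathrm{d}u$ for a function with $f(0)=0$. I will show that, for fixed input, $s_\alpha$ times the one-shot R\'enyi quantity is such a function $f(s)$ of the variable $s=s_\alpha\in(0,1)$. Its derivative at $u$ should be exactly the von Neumann quantity $H(B)-H(E)$, respectively $I(X:B)$, evaluated on a tilted state. Dividing by $s_\alpha$ and taking the supremum over inputs then gives Eqs.~\eqref{eq:integral-coherent} and~\eqref{holevo:eq:integral}. The tilted states $\zeta_{BE}(u)$ and $\xi_{XBE}(u)$ will be whatever normalized states make this derivative identity hold.

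\emph{Quantum case.} Fix $\rho_A$ with purification $\psi_{RA}$, and let $\ket{\phi}_{RBE}$ be the image under a Stinespring isometry of $\mathcal M$.
\begin{enumerate}
\item I will first eliminate the infimum over $\sigma_B$. The minimization $\inf_\sigma \widetilde D_\alpha(\phi_{RB}\|I_R\otimes\sigma_B)$ is minus the optimized sandwiched conditional entropy $\widetilde H^{\uparrow}_\alpha(R|B)$. By the duality $\widetilde H^{\uparrow}_\alpha(R|B)=-\widetilde H^{\uparrow}_\beta(R|E)$ with $1/\alpha+1/\beta=2$, it becomes a Petz-type conditional entropy on $RE$ (of the same type after duality) that has a closed form. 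Concretely, I expect $s\,\RenyiCoherentInformation{\alpha}$ at fixed input to be $f(s)=\log \Tr\bigl[\Gamma(s)\bigr]$, with $\Gamma(s)$ built from powers of $\phi_E$ and $\phi_{RE}$ (equivalently of $\phi_B$) whose exponents are affine in $s$. By construction $f(0)=\log\Tr\phi=0$.
\item I will differentiate $f$ in $s$, using $\tfrac{\mathrm d}{\mathrm ds}\Tr X^{a+bs}=b\,\Tr[X^{a+bs}\log X]$ together with the Schmidt decomposition linking the spectra of $R$, $B$ and $E$. I expect the derivative to take the form $\Tr[\hat\Gamma(u)(\log(\cdot)_B-\log(\cdot)_E)]$ with $\hat\Gamma(u)=\Gamma(u)/\Tr\Gamma(u)$.
\item I will identify this expression as $H(B)_{\zeta(u)}-H(E)_{\zeta(u)}$ for the normalized tilted pure state $\zeta(u)$ on $BE$. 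This identification should be the definition of Eq.~\eqref{eq:tilted-state}.
\end{enumerate}
The supremum over $\rho_A$ commutes with the division by $s_\alpha$, since $s_\alpha$ does not depend on the input.

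\emph{Classical case.} Fix a finite family of pure inputs $\{\psi^x_A\}$, so that each output $\mathcal M(\psi^x)$ extends to a pure state on $BE$. Sandwiched $I_\alpha$ of a c-q state has no closed form in $\sigma$. I will instead use Sion's minimax theorem to exchange $\sup_p$ and $\inf_\sigma$ (concavity in $p$, convexity in $\sigma$, compactness), and work with the Sibson/augustin-type expression that results for fixed $p$.
\begin{enumerate}
\item For fixed $p$, I will treat the classical register $X$ as part of the reference and reuse the quantum computation above. This gives $G(s,p)$ with $G(0,p)=0$ and $\partial_s G(u,p)=I(X:B)_{\xi(u)}$, where $\xi(u)$ is the analogous tilted state on $XBE$.
\item Set $g(s)=\max_p G(s,p)$. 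By Danskin's envelope theorem, $g$ is absolutely continuous and $g'(u)=\partial_sG(u,p^\star_u)$ almost everywhere, where $p^\star_u$ is the optimizer at $u$. This is exactly why the statement lets the distribution be optimized separately at each $u$.
\end{enumerate}
The reduction to pure input families follows from concavity: a mixed input can be split into pure components with a larger classical label, which does not decrease the value.

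The main obstacle is the derivative identity in the quantum step 2, namely producing the tilted state and verifying that the derivative is exactly $H(B)-H(E)$ and not some cross-entropy. The first thing I would try is to write everything in the Schmidt basis of $\phi_{R|BE}$, reducing to commuting matrix functions where possible. For the remaining non-commuting factors I would use the Fr\'echet derivative of the trace, $\Tr[\mathrm{D}\log]$, and cyclicity. A secondary technical point is the validity of the envelope and minimax steps at boundary distributions where the optimal $p$ has small support; compactness of the simplex should handle this.
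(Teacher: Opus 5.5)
Your overall architecture matches the paper's: write $s$ times the R\'enyi quantity as $\int_0^s f'(u)\,\mathrm{d}u$ with $f(0)=0$, and in the classical case use a Danskin-type argument with the distribution re-optimized at each $u$. However, quantum step~1, on which everything else rests, is false.

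\emph{The quantum step.} The optimized sandwiched conditional entropy $\widetilde H^\uparrow_\alpha(R|B)$ has no closed form. Its minimizer $\sigma_B^\star$ is characterized only implicitly, by the nonlinear fixed-point equation of Fact~\ref{fact:ht-fixed-point}. The duality you invoke ($1/\alpha+1/\beta=2$) maps it to another optimized sandwiched entropy, which is equally implicit. The Sibson-type closed form in powers of $\phi_{RE}$ belongs to the optimized Petz entropy $\bar H^\uparrow$. Its dual is the sandwiched entropy with $\sigma_B=\phi_B$ held fixed, which is not the quantity in Eq.~\eqref{eq:coherent-information}. So there is no $\Gamma(s)$ with affine exponents to differentiate, and steps~2--3 have nothing to act on.

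The ``cross-entropy'' problem you flag is exactly the crux, and Schmidt bases or Fr\'echet derivatives will not resolve it. The missing mechanism in the paper has three parts:
\begin{enumerate}
\item the minimax form $2^{-s\widetilde H^\uparrow_{1/(1-s)}(R|B)}=\min_{\sigma_B}\max_{\tau_E}\Tr[\omega_{BE}(\sigma_B^{-s}\otimes\tau_E^{s})]$;
\item the Hayashi--Tomamichel envelope theorem (Fact~\ref{fact:ht-envelope}), which removes the unknown $s$-dependence of $\sigma_B^\star$ and gives the derivative $\Tr[\zeta_{BE}(I_B\otimes\log\tau_E^\star-\log\sigma_B^\star\otimes I_E)]$;
\item the identities $\zeta_E=\tau_E^\star$ (Schatten duality) and $\zeta_B=\sigma_B^\star$ (fixed point), which turn this cross-entropy into $H(B)_\zeta-H(E)_\zeta$.
\end{enumerate}

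Two smaller points. First, $\zeta_{BE}$ must not be pure on $BE$, since otherwise the integrand vanishes; it is the $BE$-marginal of a tilted pure state on $RBE$. Second, $\zeta(u)$ cannot be ``whatever makes the identity hold.'' The theorem concerns the specific local conjugations of Eqs.~\eqref{eq:tilted-state} and~\eqref{holevo:eq:tilted-state}, and Lemma~\ref{lem:local-support} later needs exactly that form. Without it the statement would be nearly vacuous.

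\emph{The classical step.} Absorbing $X$ into the reference does not reduce to the quantum computation, because the comparison operator is $\omega_X\otimes\sigma_B$, not $I_X\otimes\sigma_B$. For a fixed non-optimal $p$, freezing the optimal $\nu_B$ gives
\begin{equation*}
\partial_sG(u,p)=D(w\|p)+I(X:B)_{\xi^{(w)}}=D(\xi_{XB}\|p_X\otimes\xi_B),
\end{equation*}
with $w_x\propto p_x\norm{\nu_B^{-u/2}\omega_B^x\nu_B^{-u/2}}_{1/(1-u)}^{1/(1-u)}$. This is not a mutual information. It reduces to $I(X:B)_\xi$ only at a saddle pair, where all active letters have equal tilted norms, Eq.~\eqref{holevo:eq:active-letters}, so that $w=p^\star$.

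Your Danskin step evaluates only at $p^\star_u$, so this is repairable, but you must establish that saddle (information-radius) structure explicitly. Your Sion step never does so. It also must be applied to the exponentiated objective $\sum_xp_x\norm{\cdot}_\alpha^\alpha$, which is linear in $p$, since the divergence itself is not concave in $p$. You additionally need a bound on $\partial_sG$ uniform in $p$ to get absolute continuity of $g$. Finally, the reduction to pure families is data processing on the classical label, not concavity.
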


\noindent The theorem is the main new one-shot ingredient. The proof
of Theorem~\ref{thm:integral-representations} is given in
Section~\ref{sec:integral-representations}.

Fix a Stinespring isometry $U:A\to B\otimes E$ and write
$\mathcal N^{\mathrm c}:A\to E$ for the complementary channel.
Suppose for the moment that every tilted state $\zeta_{BE}(u)$ in
\eqref{eq:integral-coherent} were supported on
$\operatorname{im}U$. It would then be the Stinespring output of
an admissible input, so
\begin{equation}
 H(B)_\zeta-H(E)_\zeta
 \le \CoherentInformation(\mathcal N).
 \label{eq:ideal-support-bound}
\end{equation}
Substitution into the integral representation would prove
$\RenyiCoherentInformation{\alpha}(\mathcal N)\le
\CoherentInformation(\mathcal N)$. Applying the same one-shot argument to
$\mathcal N^{\otimes n}$ for every $n$ and regularizing would also give
$\RenyiQuantumCapacity{\alpha}(\mathcal N)\le\QuantumCapacity(\mathcal N)$,
and the argument would be finished. However, this
idealised statement cannot hold in general. R\'enyi coherent information
is monotone in the order, so $\RenyiCoherentInformation{\alpha}$ is normally
larger than $\CoherentInformation$ for $\alpha>1$. Indeed, the tilt acts separately on
the output and environment marginals
and need not preserve the correlated Stinespring subspace. The c-q
tilted ensemble has the same obstruction. The remaining task is therefore
quantitative: show that for $\alpha$ close to one each tilted state has only a
small component outside the Stinespring image.

\subsection{One channel image at a time}

Consider the $n$-fold channel
$\mathcal N^{\otimes n}:A^n\to B^n$, with input systems
$A_1,\ldots,A_n$, output systems $B_1,\ldots,B_n$, and environment systems
$E_1,\ldots,E_n$. Its Stinespring isometry is
$U^{\otimes n}:A^n\to B^nE^n$. We write $B_{-i}$ and $E_{-i}$ for the
collections of all output and environment systems other than $B_i$ and $E_i$.
For a quantum input $\rho_{A^n}$, let
$\zeta_{B^nE^n}(u)$ be the tilted state obtained by applying
Eq.~\eqref{eq:integral-coherent} to $\mathcal N^{\otimes n}$. For a finite
family of pure inputs $\{\rho_{A^n}^x\}_x$, let
$\xi_{XB^nE^n}(u)$ be the corresponding c-q tilted state from
Eq.~\eqref{holevo:eq:integral}, where $X$ labels the family.

For each $i$, let $P_i$ act as $UU^\dagger$ on $B_iE_i$ and as the identity
on $B_{-i}E_{-i}$. We show that the probability of violating this support
constraint at any one channel use is of order $u^2$. Specifically,
Lemma~\ref{lem:local-support} yields the two local support bounds
\begin{align}
 \Tr[(I-P_i)\zeta(u)]
 &\le2u^2\left(\sqrt{\dim B}
       +(\dim B)^{u/2}\sqrt{\dim E}\right)^2,
 \label{eq:local-support-outline-quantum}\\
 \Tr[(I_X\otimes(I-P_i))\xi(u)]
 &\le2u^2\left(\sqrt{\dim B}
       +(\dim B)^{u/2}\sqrt{\dim E}\right)^2.
 \label{eq:local-support-outline}
\end{align}
Crucially, the dimensions of the other $n-1$ channel uses do not appear.
This estimate remains valid for correlated inputs and, in the classical case,
independently of the ensemble size.

We then restore the constraints $P_1,\ldots,P_n$ sequentially. Starting from
$\zeta^{(0)}=\zeta(u)$ in the quantum case or
$\xi^{(0)}=\xi(u)$ in the classical case, at step $i$ we project the pair
$B_iE_i$ onto $\operatorname{im}U$ and replace the rejected component by a
fixed state in that subspace. If the right sides of
Eqs.~\eqref{eq:local-support-outline-quantum} and
\eqref{eq:local-support-outline} are at most $\delta^2/2$, consecutive
states are within trace distance $\delta$. The operation acts only on
$B_iE_i$, so it preserves the marginal on $B_{-i}E_{-i}$ in the quantum
case and on $XB_{-i}E_{-i}$ in the classical case. The entropy chain rule
therefore gives, for example,
\begin{equation}
 H(B^n)_{\zeta^{(i)}}-H(B^n)_{\zeta^{(i-1)}}
 =H(B_i|B_{-i})_{\zeta^{(i)}}
  -H(B_i|B_{-i})_{\zeta^{(i-1)}}.
\label{eq:outline-conditional-entropy-telescoping}
\end{equation}
A continuity bound for conditional entropy
\cite{berta2026sharpcontinuity,chengliu2026sharp} controls this
difference using only $\delta$ and $\dim B_i$, irrespective of the size of
$B_{-i}$. The same argument applies to $E_i$ for coherent information. For
classical communication, the mutual-information chain rule produces two
conditional-entropy differences with first system $B_i$, one conditioned on
$B_{-i}$ and one on $XB_{-i}$. Summing these local estimates over
$i=1,\ldots,n$ is the Leung--Smith telescoping argument
\cite{leung2009continuity} and gives a correction linear in $n$. The final
state lies in $\operatorname{im}(U^{\otimes n})$, where the ordinary
coherent- or Holevo-information optimization applies.
Section~\ref{sec:projection} contains the full argument.

\subsection{Continuity of the R\'enyi capacities and the exponential strong
converse}

Let us summarize how the ingredients fit together. Applying
Theorem~\ref{thm:integral-representations} to
$\mathcal N^{\otimes n}$ expresses the order-$\alpha$ coherent and Holevo
information as integrals of ordinary entropy differences and mutual
information evaluated on the tilted states.
Equations~\eqref{eq:local-support-outline-quantum} and
\eqref{eq:local-support-outline} show that these states leave each local Stinespring image by only
$O((\alpha-1)^2)$, uniformly in the input, the number of channel uses and,
in the classical case, the ensemble size. Restoring the $n$ local images one
at a time and applying the conditional-entropy continuity bound at each step
therefore changes the relevant information quantity by at most
$n\varepsilon(\alpha)$, where $\varepsilon(\alpha)\to0$ as
$\alpha\searrow1$. The repaired state lies in the image of
$U^{\otimes n}$, so its information is bounded by the corresponding
order-one quantity of $\mathcal N^{\otimes n}$. Dividing by $n$ and
regularizing proves the desired continuity.

\begin{theorem}[Continuity of R\'enyi capacities]\label{thm:main}
For every finite-dimensional quantum channel $\mathcal N$,
\begin{equation}
 \lim_{\alpha\searrow1}\RenyiQuantumCapacity{\alpha}(\mathcal N)
 =\QuantumCapacity(\mathcal N)
 \quad\text{and}\quad
 \lim_{\alpha\searrow1}\RenyiClassicalCapacity{\alpha}(\mathcal N)
 =\ClassicalCapacity(\mathcal N).
 \label{eq:capacity-continuity}
\end{equation}
\end{theorem}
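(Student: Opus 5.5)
The proof has two parts: a lower bound, which comes from monotonicity in the order, and an upper bound, which is the real content of the theorem.

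\emph{Lower bound.} Sandwiched R\'enyi divergences are nondecreasing in $\alpha$, and they reduce to the relative entropy at $\alpha=1$. Hence $\RenyiCoherentInformation{\alpha}(\mathcal N^{\otimes n})\ge\CoherentInformation(\mathcal N^{\otimes n})$ for every $n$. For the coherent information this uses that $\inf_{\sigma}D(\rho_{RB}\|I_R\otimes\sigma_B)=-H(R|B)=H(B)-H(RB)$, which equals $H(B)-H(E)$ for a pure $RBE$ state. Likewise $\RenyiHolevoQuantity{\alpha}\ge\HolevoQuantity$. Dividing by $n$ and letting $n\to\infty$ gives $\RenyiQuantumCapacity{\alpha}\ge\QuantumCapacity$ and $\RenyiClassicalCapacity{\alpha}\ge\ClassicalCapacity$. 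Both Rényi capacities are also monotone in $\alpha$, so the limits as $\alpha\searrow1$ exist and are at least the ordinary capacities.

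\emph{Upper bound.} I want a uniform finite-block estimate of the form
\begin{equation}
 \RenyiCoherentInformation{\alpha}(\mathcal N^{\otimes n})\le\CoherentInformation(\mathcal N^{\otimes n})+n\,\varepsilon(\alpha),
 \qquad
 \varepsilon(\alpha)\xrightarrow{\alpha\searrow1}0,
\end{equation}
where $\varepsilon$ depends only on $\alpha$, $\dim B$ and $\dim E$. I expect this is what Proposition~\ref{prop:finite-block-continuity} states; I would prove it as follows.

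\begin{enumerate}
\item Fix $\rho_{A^n}$ and apply Theorem~\ref{thm:integral-representations} to $\mathcal N^{\otimes n}$. This writes the quantity as the average over $u\in[0,s_\alpha]$ of $H(B^n)-H(E^n)$ evaluated on $\zeta(u)$.
\item For each $u\le s_\alpha$, Lemma~\ref{lem:local-support} shows that every local support violation is at most $\delta(u)^2/2$. Here $\delta(u)=2u\bigl(\sqrt{\dim B}+(\dim B)^{u/2}\sqrt{\dim E}\bigr)=O(u)$, uniformly in $n$ and in the input.
\item Restore the support constraints sequentially, obtaining $\zeta^{(0)},\dots,\zeta^{(n)}$, with consecutive states within trace distance $\delta$.
\item At step $i$, the marginal on $B_{-i}E_{-i}$ is unchanged. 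By Eq.~\eqref{eq:outline-conditional-entropy-telescoping} and its analogue for $E$, the step changes $H(B^n)-H(E^n)$ by at most
\begin{equation}
 f(\delta,\dim B)+f(\delta,\dim E),
 \qquad
 f(\delta,d)=\delta\log d+(1+\delta)h\!\left(\tfrac{\delta}{1+\delta}\right)
\end{equation}
(or whichever sharper conditional continuity bound the paper cites).
\item The final state $\zeta^{(n)}$ is supported on $\operatorname{im}U^{\otimes n}$. It is pure on $B^nE^n$ only if we are careful, so the restoration should be viewed on a purified state $R B^nE^n$, or the replacement state chosen so that $\zeta^{(n)}$ is $U^{\otimes n}\tau U^{\otimes n\dagger}$ for some $\tau_{A^n}$. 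Then $H(B^n)-H(E^n)$ on $\zeta^{(n)}$ equals $H(\mathcal N^{\otimes n}(\tau))-H((\mathcal N^{\otimes n})^{\rm c}(\tau))\le\CoherentInformation(\mathcal N^{\otimes n})$. This holds even for mixed $\tau$: coherent information for a mixed input equals $H(B)-H(E)$ of the Stinespring output.
\end{enumerate}
Averaging over $u$ and taking the supremum gives the estimate with $\varepsilon(\alpha)=\sup_{u\le s_\alpha}[f(\delta(u),\dim B)+f(\delta(u),\dim E)]$, which tends to $0$.

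The classical case is the same, except that the telescoping uses $I(X:B^n)$. The chain rule gives two conditional differences with first system $B_i$, conditioned on $B_{-i}$ and on $XB_{-i}$, which yields $\varepsilon'(\alpha)=\sup_{u\le s_\alpha}2f(\delta(u),\dim B)$. The repaired c-q state has conditional states in $\operatorname{im}U^{\otimes n}$, so it is the output of an admissible ensemble and its mutual information is at most $\HolevoQuantity(\mathcal N^{\otimes n})$. Here the distribution is optimized at each $u$; the bound holds pointwise in $u$, so this is harmless.

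\emph{Conclusion.} Divide by $n$ and let $n\to\infty$ to get $\RenyiQuantumCapacity{\alpha}\le\QuantumCapacity+\varepsilon(\alpha)$, and similarly in the classical case. Letting $\alpha\searrow1$ finishes the proof.

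\emph{Main obstacle.} I expect step 5 to be the delicate point: the restored state must be exactly a Stinespring output of some legitimate input, and its $H(B)-H(E)$ value must not exceed the coherent information even though $\zeta^{(n)}$ may be mixed on $B^nE^n$. The key fact is that any state supported on $\operatorname{im}U^{\otimes n}$ equals $U^{\otimes n}\tau U^{\otimes n\dagger}$ for a unique $\tau_{A^n}$. The order-one coherent information of a mixed input is then exactly $H(B)-H(E)$ of this output, which is at most $\CoherentInformation(\mathcal N^{\otimes n})$.
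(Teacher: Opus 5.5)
Your proposal is correct and follows the paper's proof essentially step for step. The paper likewise combines, in Proposition~\ref{prop:finite-block-continuity}, the integral representation, the $n$-independent local support bound of Lemma~\ref{lem:local-support}, and sequential restoration with conditional-entropy continuity charged only to $\dim B_i,\dim E_i$; like you, it notes that the repaired state is the Stinespring output of $(U^{\otimes n})^\dagger\vartheta^{(n)}U^{\otimes n}$, then regularizes and lets $\alpha\searrow1$. One harmless slip: with quantum conditioning on $B_{-i}$, the Alicki--Fannes--Winter bound carries $2\delta\log d$ rather than $\delta\log d$ (your $f$ is the classical-conditioning version, while the paper uses $\delta\log(d^2-1)+h_2(\delta)$); the paper's extra steps, reducing to full-rank inputs via Fact~\ref{fact:conditional-renyi-continuity} and treating $\dim B=1$ or $\dim E=1$ separately, do not affect your argument.
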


\noindent The stronger finite-block inequalities, including an explicit continuity
modulus in the R\'enyi order, are stated in
Proposition~\ref{prop:finite-block-continuity} below.
A formal proof of Theorem~\ref{thm:main} is given in
Section~\ref{sec:continuity}.

Theorem~\ref{thm:main} supplies the missing input to the Arimoto bounds in
Propositions~\ref{prop:arimoto-quantum} and~\ref{prop:arimoto-classical}.
Choosing $\alpha>1$ sufficiently close to one immediately gives exponential
decay above the respective capacities.
Appendices~\ref{app:quantum-exponent} and~\ref{app:classical-exponent} give the
remaining operational arguments. The matching achievability bound is proved in
Appendix~\ref{app:eg-upper} for entanglement generation and obtained from
Mosonyi and Ogawa~\cite[Theorem~VI.1]{mosonyi2017strong} in
Appendix~\ref{app:classical-upper} for classical communication, showing that
the resulting exponents are exact.

For codes with $M\ge2^{nr}$, let
$F_{\mathrm{eg}}^{\star}(n,r)$ and $P_{\mathrm s}^{\star}(n,r)$ denote the
supremal entanglement-generation fidelity and classical success probability,
respectively. The corresponding strong-converse exponents are
\begin{align}
 E_{\mathrm{sc,eg}}(r,\mathcal N)
 &:=\lim_{n\to\infty}-\frac1n\log F_{\mathrm{eg}}^{\star}(n,r),
 \label{eq:exponent-definition}\\
 E_{\mathrm{sc,c}}(r,\mathcal N)
 &:=\lim_{n\to\infty}-\frac1n\log P_{\mathrm s}^{\star}(n,r).
 \label{holevo:eq:exponent-definition}
\end{align}
Both limits exist. Indeed, tensoring independent codes makes either optimal
performance supermultiplicative. On the other hand, for
$M_n=\lceil2^{nr}\rceil$, product preparation has entanglement-generation
fidelity $1/M_n$, and always guessing one message has classical success
probability $1/M_n$. Thus the negative logarithm of either optimal
performance is a finite, nonnegative, subadditive sequence. Fekete's lemma
gives the limits in the two definitions above and places them in $[0,r]$.

\begin{theorem}[Exponential strong converses]\label{thm:operational}
For every finite-dimensional quantum channel $\mathcal N$ and every $r\ge0$,
\begin{align}
 E_{\mathrm{sc,eg}}(r,\mathcal N)
 &=\sup_{\alpha>1}\frac{\alpha-1}{\alpha}
       \bigl[r-\RenyiQuantumCapacity{\alpha}(\mathcal N)\bigr],
 \label{eq:exponent-bound}\\
E_{\mathrm{sc,c}}(r,\mathcal N)
&=\sup_{\alpha>1}\frac{\alpha-1}{\alpha}
       \bigl[r-\RenyiClassicalCapacity{\alpha}(\mathcal N)\bigr].
\label{holevo:eq:exponent-bound}
\end{align}
The first exponent is strictly positive for $r>\QuantumCapacity(\mathcal N)$, and the
second is strictly positive for $r>\ClassicalCapacity(\mathcal N)$.
\end{theorem}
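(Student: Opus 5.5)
The plan is to establish each exponent formula by proving matching lower and upper bounds, and then to deduce strict positivity from Theorem~\ref{thm:main}. I treat the entanglement-generation exponent in detail; the classical one follows the same template.

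\textbf{Lower bound on the exponent (converse side).} Fix $\alpha>1$. Proposition~\ref{prop:arimoto-quantum} bounds the fidelity of every $(n,M)$ code with $M\ge 2^{nr}$. Since the right-hand side of Eq.~\eqref{eq:code-bound} is decreasing in $M$, we obtain
\[
F_{\mathrm{eg}}^{\star}(n,r)\le 2^{-n\frac{\alpha-1}{\alpha}[r-\RenyiQuantumCapacity{\alpha}(\mathcal N)]}.
\]
Taking $-\frac1n\log$ and the limit, which exists by the Fekete argument preceding the theorem, yields
\[
E_{\mathrm{sc,eg}}\ge \frac{\alpha-1}{\alpha}\bigl[r-\RenyiQuantumCapacity{\alpha}(\mathcal N)\bigr]
\]
for every $\alpha>1$, and hence the supremum. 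The classical case is identical, using Proposition~\ref{prop:arimoto-classical}.

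\textbf{Upper bound on the exponent (achievability side).} Here I would construct codes whose fidelity decays no faster than the Rényi expression. The standard approach is Mosonyi--Ogawa for classical communication, as the paper indicates: their Theorem~VI.1 gives the strong-converse exponent of a classical-quantum channel as a sup over $\alpha>1$ of the Rényi mutual information expression. I would apply it to the c-q channel induced by an optimal ensemble for $\mathcal N^{\otimes k}$, and then use blocks of length $k$, so that the exponent of $\mathcal N$ is at most $\frac1k$ times the c-q exponent. Letting $k\to\infty$ converts $\frac1k\RenyiHolevoQuantity{\alpha}(\mathcal N^{\otimes k})$ into $\RenyiClassicalCapacity{\alpha}$. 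Two points need care. First, the sup over $\alpha$ and the limit over $k$ must be exchanged. I would handle this by convexity/monotonicity of $s\mapsto s\,\RenyiHolevoQuantity{\alpha}$ together with the fact that the sup over $k$ equals the limit (superadditivity), using a minimax argument on the concave-in-$s$ function. Second, the rate must be rounded to $M\ge 2^{nr}$.

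For entanglement generation I would argue analogously. Fix a block length $k$ and an input $\psi_{RA^k}$ and view it as a one-shot task. The plan is to obtain a fidelity lower bound of the form $2^{-n s[r-\frac1k \tilde I_\alpha]}$, with $s=\frac{\alpha-1}{\alpha}$, via a random-coding/decoupling argument combined with a change-of-measure (``tilting'') step. The simplest version of that step is to postselect on a typical projection with the probability cost made exact by the sandwiched quantity. This is the Appendix~\ref{app:eg-upper} content, and I expect it to be the main obstacle. A fallback that suffices for the stated positivity claim is only the inequality ``$\ge$''; the equality then needs the achievability bound, which I would prove by reducing to the classical strong-converse achievability plus a coherent-information-rate code in the Dueck--Körner style. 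Concretely, one achieves fidelity $\approx 2^{-n(\text{cost})}$ by transmitting at rate below capacity and guessing the excess $n(r-R)$ qubits. One then optimizes over typed subspaces whose tilted weight reproduces the Rényi exponent.

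\textbf{Positivity.} Let $r>\QuantumCapacity(\mathcal N)$. By Theorem~\ref{thm:main} we may choose $\alpha>1$ with $\RenyiQuantumCapacity{\alpha}(\mathcal N)<r$. Then the term $\frac{\alpha-1}{\alpha}[r-\RenyiQuantumCapacity{\alpha}]>0$ is a lower bound for the supremum, so the exponent is strictly positive. The same holds for $r>\ClassicalCapacity(\mathcal N)$, using the classical continuity statement.
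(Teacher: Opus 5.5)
\textbf{Converse half, positivity, and the classical side.} Your converse half (the Arimoto bound for $M\ge2^{nr}$, then $n\to\infty$ and the supremum over $\alpha$) is correct and coincides with the paper. So is your positivity argument via Theorem~\ref{thm:main}.

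Your classical achievability via Mosonyi--Ogawa on the block channel $\mathcal N^{\otimes k}$ is also the paper's route, but it needs two repairs:
\begin{itemize}
\item The induced c-q channel must have \emph{all} input states of $\mathcal N^{\otimes k}$ as its alphabet. Only then does their $\chi^*_\alpha$ equal $\RenyiHolevoQuantity{\alpha}(\mathcal N^{\otimes k})$ at every order simultaneously; a single ``optimal ensemble'' is optimal at one order only.
\item Interchanging $\inf_k$ and $\sup_\alpha$ needs no minimax argument, and no concavity in $s$ (which you have not established). The paper uses $f_k(t)\le f_k(s)+r(t-s)$, which follows from nonnegativity and order monotonicity and makes a finite grid in $s$ accurate uniformly in $k$.
\end{itemize}

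\textbf{The gap: achievability for entanglement generation.} The genuine gap is the ``$\le$'' direction of Eq.~\eqref{eq:exponent-bound}. Your only concrete mechanism is to code at a rate $R<\QuantumCapacity(\mathcal N)$ and embed into dimension $2^{nr}$. This yields only $E_{\mathrm{sc,eg}}(r,\mathcal N)\le r-\QuantumCapacity(\mathcal N)$. That is in general strictly larger than $\sup_{\alpha>1}\frac{\alpha-1}{\alpha}[r-\RenyiQuantumCapacity{\alpha}(\mathcal N)]$, because every term carries a factor $\frac{\alpha-1}{\alpha}<1$ and $\RenyiQuantumCapacity{\alpha}\ge\QuantumCapacity$. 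Classical strong-converse achievability does not transfer, since a classical-message code generates no entanglement. The phrase ``typed subspaces whose tilted weight reproduces the R\'enyi exponent'' names the target rather than a mechanism.

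\textbf{What is missing.} You need a quantum Dueck--K\"orner change of measure, plus a device that produces the \emph{sandwiched} quantity. The paper proceeds in four steps.
\begin{enumerate}
\item Take an auxiliary state $\tau_{RB}$ with $H(R|B)_\tau<0$. A one-way hashing or state-merging protocol designed for $\tau^{\otimes\ell}$ has fidelity $p_\ell\to1$. Relative-entropy data processing through the protocol and the target test gives $\ell D(\tau\|\omega)\ge-h_2(p_\ell)-p_\ell\log q_\ell$, where $q_\ell$ is its fidelity on $\omega^{\otimes\ell}$.
\item The protocol's forward classical communication is removed by fixing an outcome of Alice's instrument whose conditional fidelity is at least $q_\ell$, and absorbing it into the preparation. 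This is legitimate precisely because the retained marginal is unrestricted in entanglement generation. Embedding then gives $E_{\mathrm{sc,eg}}\le D(\tau\|\omega)+\max\{r+H(R|B)_\tau,0\}$ for every $\omega$ obtainable from $\mathcal N$ followed by a local channel on $B$.
\item Convexifying over such pairs and exchanging via Sion's theorem gives the bound $\sup_{0<s<1}\{sr-\sup_{\omega,\tau}[-sH(R|B)_\tau-D(\tau\|\omega)]\}$.
\item For a fixed conditioning state and $\alpha=1/(1-s)$, the Gibbs variational principle evaluates $\sup_\tau[sD(\tau\|I_R\otimes\sigma_B)-D(\tau\|\omega)]$ as $\frac1\alpha\log\Tr 2^{\alpha\log\omega+(1-\alpha)\log(I_R\otimes\sigma_B)}$. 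This is a log-Euclidean quantity. By the Araki--Lieb--Thirring inequality it is at most $s\widetilde D_\alpha(\omega\|I_R\otimes\sigma_B)$, and in general smaller when the two operators do not commute, so a direct evaluation does not match. The paper therefore first pinches $\omega_{RB}^{\otimes m}$ with a universal symmetric state and restricts to permutation-invariant conditioning states. This loses only $(1+s)\log v_m$, with $v_m$ polynomial in $m$.
\end{enumerate}
The block identity $E_{\mathrm{sc,eg}}(mr,\mathcal N^{\otimes m})=mE_{\mathrm{sc,eg}}(r,\mathcal N)$ and the same grid argument then give the regularized formula. Without the branch-selection step (2) and the pinching step (4), the equality in Eq.~\eqref{eq:exponent-bound} is not established.
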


\noindent A formal proof of Theorem~\ref{thm:operational} is assembled at the end of
Appendix~\ref{app:classical-exponent} from the quantum and classical arguments
developed in Appendices~\ref{app:quantum-exponent}
and~\ref{app:classical-exponent}, respectively.

Our novelty claim does not rest on the form of either exponent formula in
Theorem~\ref{thm:operational}. We include the entanglement-generation
achievability proof because we could not locate one in the literature. The
central new result is that both exponents are strictly positive at every rate
exceeding the corresponding capacity. Evaluating the regularized formulas for
concrete nonadditive channels remains an important problem.

\section{Integral representations}
\label{sec:integral-representations}

This section proves the one-shot identities in
Theorem~\ref{thm:integral-representations}.

\subsection{R\'enyi divergence preliminaries}

The logarithm is taken to the binary basis throughout. For a state $\rho$ and a positive semidefinite comparison operator $\sigma$,
the sandwiched R\'enyi divergence at order $\alpha>1$ is
\cite{MDS+13,Wilde3}
\begin{equation}
 \widetilde D_\alpha(\rho\|\sigma)
 =\frac{1}{\alpha-1}\log\Tr\!\left[
 \left(\sigma^{\frac{1-\alpha}{2\alpha}}\rho
             \sigma^{\frac{1-\alpha}{2\alpha}}\right)^\alpha\right].
 \label{eq:divergence}
\end{equation}
It is infinite unless $\supp\rho\subseteq\supp\sigma$; negative powers are
evaluated on the support. The order-one quantity is the Umegaki relative
entropy
\begin{equation}
 D(\rho\|\sigma)=\Tr[\rho(\log\rho-\log\sigma)],
\label{eq:umegaki-relative-entropy}
\end{equation}
with the same support convention and value $+\infty$ when
$\supp\rho\nsubseteq\supp\sigma$.
We collect the properties of the divergence used throughout the paper.

\begin{samepage}
\begin{fact}[Basic properties of the sandwiched R\'enyi divergence]
\label{fact:sandwiched-data-processing}
For every state $\rho$, positive semidefinite operator $\sigma$, quantum
channel $\mathcal E$, and $1<\alpha\le\beta$, one has
\begin{equation}
 \widetilde D_\alpha\!\left(\mathcal E(\rho)\middle\|\mathcal E(\sigma)\right)
 \le \widetilde D_\alpha(\rho\|\sigma),
 \qquad
 D(\rho\|\sigma)\le\widetilde D_\alpha(\rho\|\sigma)
 \le\widetilde D_\beta(\rho\|\sigma),
 \label{eq:fact-sandwiched-basic-properties}
\end{equation}
and
\begin{equation}
 \lim_{\alpha\searrow1}\widetilde D_\alpha(\rho\|\sigma)
 =D(\rho\|\sigma).
 \label{eq:fact-sandwiched-order-one-limit}
\end{equation}
For data processing in the range $1<\alpha\le2$ used below, see
Refs.~\cite[Theorem~6]{MDS+13} and~\cite[Corollary~6]{Wilde3}; the extension
to all $\alpha>1$ is Ref.~\cite[Theorem~6]{beigi2013data}. Monotonicity in the
order and the order-one limit are
Refs.~\cite[Theorem~7]{MDS+13} and~\cite[Theorem~5]{MDS+13}, respectively.
\end{fact}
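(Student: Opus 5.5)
We would prove the four assertions in the order data processing, order-one limit, monotonicity, lower bound, since the last follows from the two before it. Throughout we write $\widetilde D_\alpha(\rho\|\sigma)=\frac{\alpha}{\alpha-1}\log\norm{\sigma^{-\frac{\alpha'}{2}}\rho\,\sigma^{-\frac{\alpha'}{2}}}_{\alpha,\sigma}$ with $\alpha'=(\alpha-1)/\alpha$. Here $\norm{X}_{p,\sigma}=\bigl(\Tr|\sigma^{1/2p}X\sigma^{1/2p}|^p\bigr)^{1/p}$ is the $\sigma$-weighted noncommutative $L_p$ norm. We may assume $\supp\rho\subseteq\supp\sigma$ and that $\sigma$ is invertible on its support, since otherwise both sides are $+\infty$ or we restrict to the support.

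For data processing we would follow Beigi's interpolation route. The map $\Gamma^{*}_{\mathcal E}:X\mapsto \mathcal E(\sigma)^{-1/2}\mathcal E(\sigma^{1/2}X\sigma^{1/2})\mathcal E(\sigma)^{-1/2}$ is a contraction from $L_1(\sigma)$ to $L_1(\mathcal E(\sigma))$, because $\mathcal E$ is trace preserving and positive. It is also a contraction from $L_\infty$ to $L_\infty$, because $\mathcal E$ is positive and unital on the relevant weighted operators. The Riesz--Thorin theorem for these weighted $L_p$ spaces, which follows from the Stein--Hadamard three-lines lemma applied to $z\mapsto\Tr[\ldots]$ with complex powers of $\sigma$, then gives contractivity for every $p=\alpha\in[1,\infty]$. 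Applying this contractivity to $X=\sigma^{-1/2}\rho\sigma^{-1/2}$ is exactly $\widetilde D_\alpha(\mathcal E(\rho)\|\mathcal E(\sigma))\le\widetilde D_\alpha(\rho\|\sigma)$. We expect this to be the main obstacle: the three-lines argument requires care with the holomorphic family $\sigma^{(1-z)/2}\ldots$ and with the boundary norms. As a backup, for $1<\alpha\le2$ one can instead use Frank--Lieb joint concavity/convexity of $Q_\alpha=\Tr[(\sigma^{-\alpha'/2}\rho\sigma^{-\alpha'/2})^\alpha]$ together with Stinespring dilation and invariance under isometries and partial-trace averaging over unitaries.

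For the limit, set $Q_\alpha=\Tr[(\sigma^{\frac{1-\alpha}{2\alpha}}\rho\sigma^{\frac{1-\alpha}{2\alpha}})^\alpha]$, so that $Q_1=\Tr\rho=1$ and $\widetilde D_\alpha=\log Q_\alpha/(\alpha-1)$. By L'H\^opital the limit equals $\frac{d}{d\alpha}\log Q_\alpha|_{\alpha=1}$. Differentiating $\Tr f(\alpha,A_\alpha)^\alpha$ by first-order perturbation theory under the trace, $\frac{d}{d\alpha}\Tr g(A_\alpha)=\Tr[g'(A_\alpha)\dot A_\alpha]$, gives $\Tr[\rho\log\rho]-\Tr[\rho\log\sigma]=D(\rho\|\sigma)$ after converting to binary logarithms.

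For monotonicity in the order we would show that $\alpha\mapsto\log Q_\alpha$ is convex, and note that it vanishes at $\alpha=1$. Then $\widetilde D_\alpha$, being the slope of the chord from $(1,0)$ to $(\alpha,\log Q_\alpha)$, is nondecreasing in $\alpha$. The convexity follows from a H\"older/log-convexity (again three-lines) bound on the $p$-norms in the exponent. Alternatively, one can follow M\"uller-Lennert et al., who compute $\frac{d}{d\alpha}\widetilde D_\alpha$ and show it is nonnegative via a relative entropy of the normalized state $(\sigma^{-\alpha'/2}\rho\sigma^{-\alpha'/2})^\alpha/Q_\alpha$. Combining monotonicity with the limit then gives $D(\rho\|\sigma)=\lim_{\beta\searrow1}\widetilde D_\beta\le\widetilde D_\alpha$.
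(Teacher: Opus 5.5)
Your proposal is correct and takes essentially the same route as the paper: the paper gives no proof of its own and cites M\"uller-Lennert et al., Wilde et al.\ and Beigi, and you reconstruct those arguments — Beigi's weighted-$L_p$ Riesz--Thorin interpolation for data processing at all $\alpha>1$ (with Frank--Lieb plus unitary averaging as the backup for $1<\alpha\le2$), the derivative of $\log Q_\alpha$ at $\alpha=1$ for the order-one limit, and convexity of $\log Q_\alpha$ or the M\"uller-Lennert et al.\ derivative computation for monotonicity, from which $D\le\widetilde D_\alpha$ follows. One small fix: with the $\sigma$-weighted norm, the argument in your opening formula should be $\sigma^{-1/2}\rho\,\sigma^{-1/2}$, as you correctly use later, not $\sigma^{-\alpha'/2}\rho\,\sigma^{-\alpha'/2}$, which belongs with the unweighted Schatten norm.
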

\end{samepage}

The von Neumann entropy of a state $\rho$ is given by $H(\rho) := -\log \Tr[\rho \log \rho]$ and if we have a bipartite state $\rho_{AB}$ we use the shorthand $H(A)_{\rho}$ to denote the entropy of the marginal $\rho_A$. The mutual information is defind as $I(A\!:\!B) := D(\rho_{AB} \| \rho_A \otimes \rho_B)$. The optimized sandwiched R\'enyi
conditional entropy is defined by
\begin{equation}
 \widetilde H_\alpha^\uparrow(A|B)_\rho
 :=-\inf_{\sigma_B}\widetilde D_\alpha\!\left(
       \rho_{AB}\,\middle\|\,I_A\otimes\sigma_B\right).
 \label{eq:conditional-renyi-shorthand}
\end{equation}

We use the following variational identity from M\"uller-Lennert et
al.~\cite{MDS+13}, stated directly in the notation needed here.

\begin{fact}[Minimax representation]
\label{fact:mds-variational}
Let $\omega_{RBE}$ be pure, let $\alpha>1$, and set
$s_\alpha=(\alpha-1)/\alpha$. Then
\begin{equation}
 2^{-s_\alpha\widetilde H_\alpha^\uparrow(R|B)_\omega}
 =\inf_{\sigma_B>0}\sup_{\tau_E}
     \Tr[\omega_{BE}(\sigma_B^{-s_\alpha}\otimes\tau_E^{s_\alpha})],
 \label{eq:fact-mds-variational}
\end{equation}
where $\sigma_B$ ranges over strictly positive density operators and
$\tau_E$ over density operators.
This is Ref.~\cite[Eq.~(19)]{MDS+13}.
\end{fact}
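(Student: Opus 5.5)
The plan is to unfold the definition of $\widetilde H_\alpha^\uparrow$, move the sandwiching onto the purifying system $E$ using the Schmidt decomposition, and then write the resulting Schatten norm as a supremum by Hölder duality.

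\emph{Step 1 (rewrite as a Schatten norm).} With $s=s_\alpha$, we have $(1-\alpha)/(2\alpha)=-s/2$. Hence
\begin{equation*}
\Tr\bigl[\bigl(\sigma_B^{-s/2}\omega_{RB}\sigma_B^{-s/2}\bigr)^\alpha\bigr]
=\norm{\sigma_B^{-s/2}\omega_{RB}\sigma_B^{-s/2}}_\alpha^\alpha ,
\end{equation*}
where $I_R$ is suppressed. Since $\alpha s/(\alpha-1)=1$, this gives
\begin{equation*}
2^{s\widetilde D_\alpha(\omega_{RB}\|I_R\otimes\sigma_B)}=\norm{\sigma_B^{-s/2}\omega_{RB}\sigma_B^{-s/2}}_\alpha .
\end{equation*}
Because $t\mapsto 2^{-st}$ is decreasing, $2^{-s\widetilde H_\alpha^\uparrow(R|B)_\omega}$ equals the infimum of this norm over $\sigma_B$.

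\emph{Step 2 (transfer to $E$).} Write $\omega_{RBE}=\ket{\psi}\!\bra{\psi}$ and set $\ket{\phi}=(I_R\otimes\sigma_B^{-s/2}\otimes I_E)\ket{\psi}$, an unnormalized vector. Then $\sigma_B^{-s/2}\omega_{RB}\sigma_B^{-s/2}=\Tr_E\ket\phi\!\bra\phi$. By the Schmidt decomposition, this operator has the same nonzero spectrum as
\begin{equation*}
\Tr_{RB}\ket\phi\!\bra\phi=\Tr_{RB}\bigl[\omega_{RBE}(I_R\otimes\sigma_B^{-s}\otimes I_E)\bigr]=:Y_E ,
\end{equation*}
where the equality uses cyclicity of the partial trace over $B$. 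Consequently the $\alpha$-norm computed on $RB$ equals $\norm{Y_E}_\alpha$.

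\emph{Step 3 (Hölder duality).} The operator $Y_E$ is positive semidefinite and the Hölder conjugate of $\alpha$ is $\alpha'=\alpha/(\alpha-1)=1/s$. Hence
\begin{equation*}
\norm{Y_E}_\alpha=\sup\bigl\{\Tr[Y_EZ]: Z\ge0,\ \norm{Z}_{1/s}\le1\bigr\}.
\end{equation*}
Positive $Z$ suffice because $Y_E\ge0$. Every such $Z$ with $\norm{Z}_{1/s}=1$ can be written as $Z=\tau_E^{s}$ with $\tau_E=Z^{1/s}$ a density operator, and conversely every density operator $\tau_E$ gives $\norm{\tau_E^s}_{1/s}=(\Tr\tau_E)^s=1$. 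It then follows that
\begin{equation*}
\norm{Y_E}_\alpha=\sup_{\tau_E}\Tr\bigl[\omega_{BE}(\sigma_B^{-s}\otimes\tau_E^{s})\bigr].
\end{equation*}
Taking the infimum over $\sigma_B$ yields Eq.~\eqref{eq:fact-mds-variational}.

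\emph{Main obstacle (restricting to $\sigma_B>0$).} I expect the only delicate point to be that Eq.~\eqref{eq:fact-mds-variational} takes the infimum over strictly positive $\sigma_B$ only.
\begin{itemize}
\item If $\supp\omega_B\not\subseteq\supp\sigma_B$, then $\widetilde D_\alpha=+\infty$, so such $\sigma_B$ never contribute to the infimum.
\item If $\supp\omega_B\subseteq\supp\sigma_B$, consider the mixture $\sigma_\varepsilon=(1-\varepsilon)\sigma_B+\varepsilon I/d_B$, which is strictly positive. Negative powers are block diagonal relative to $\supp\sigma_B$, and $\omega_B$ is supported inside that block. Therefore $\Tr[\omega_{BE}(\sigma_\varepsilon^{-s}\otimes\tau_E^s)]$ converges to the support-restricted value as $\varepsilon\to0$.
\item This convergence is uniform in $\tau_E$, because $\tau_E^s\le I$.
\end{itemize}
Hence the two infima coincide.
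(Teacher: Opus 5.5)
Your argument is correct and follows essentially the route the paper relies on: the paper cites the identity from \cite[Eq.~(19)]{MDS+13}, and in the proof of Lemma~\ref{lem:tilted-state} it uses exactly your two steps, namely passing to the $E$-reduction $Y_E$ of $(I_R\otimes\sigma_B^{-s/2}\otimes I_E)\ket{\omega}$ via the Schmidt decomposition and then applying Schatten duality with $\tau_E^{s}$. Your approximation by $\sigma_\varepsilon=(1-\varepsilon)\sigma_B+\varepsilon I/d_B$, which justifies restricting the infimum to strictly positive $\sigma_B$, correctly supplies a detail the paper leaves implicit.
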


We also use the optimizer characterization of Hayashi and
Tomamichel~\cite{hayashitomamichel15c}.

\begin{fact}[Optimizer and fixed point]
\label{fact:ht-fixed-point}
Let $\rho_{AB}$ be a state, let $\tau_A$ be a density operator with
$\supp\rho_A\subseteq\supp\tau_A$, and let $\alpha>1$. The optimization
\begin{equation}
 \inf_{\sigma_B}\widetilde D_\alpha
       (\rho_{AB}\|\tau_A\otimes\sigma_B)
 \label{eq:fact-ht-optimization}
\end{equation}
has a unique minimizer $\sigma_B^\star$ with
$\supp\sigma_B^\star=\supp\rho_B$. This minimizer satisfies the fixed-point
relations
\begin{equation}
 Z_{AB}=\left[(\tau_A\otimes\sigma_B^\star)^{\frac{1-\alpha}{2\alpha}}
                    \rho_{AB}
              (\tau_A\otimes\sigma_B^\star)^{\frac{1-\alpha}{2\alpha}}
             \right]^\alpha,
 \qquad
 \sigma_B^\star=\frac{\Tr_A[Z_{AB}]}{\Tr[Z_{AB}]}.
 \label{eq:fact-ht-fixed-point}
\end{equation}
This is Ref.~\cite[Lemma~5]{hayashitomamichel15c}.
\end{fact}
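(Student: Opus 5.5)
The plan is to treat the optimization in Eq.~\eqref{eq:fact-ht-optimization} as a convex minimization over the compact set of density operators on $B$. First I would establish existence, then the support statement, then uniqueness, and finally derive Eq.~\eqref{eq:fact-ht-fixed-point} as the first-order (KKT) condition. Throughout set $\gamma:=(\alpha-1)/\alpha$. Since $\alpha>1$, minimizing $\widetilde D_\alpha$ is equivalent to minimizing
\[
Q(\sigma_B):=\Tr\!\left[\left((\tau_A\otimes\sigma_B)^{-\gamma/2}\rho_{AB}(\tau_A\otimes\sigma_B)^{-\gamma/2}\right)^\alpha\right].
\]
This quantity equals $\bigl\|\rho_{AB}^{1/2}(\tau_A\otimes\sigma_B)^{-\gamma}\rho_{AB}^{1/2}\bigr\|_\alpha^\alpha$. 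It is finite exactly when $\supp\rho_{AB}\subseteq\supp(\tau_A\otimes\sigma_B)$, which, given $\supp\rho_A\subseteq\supp\tau_A$, amounts to $\supp\rho_B\subseteq\supp\sigma_B$. Lower semicontinuity of $Q$ (with value $+\infty$ off this set) and compactness of the state space give a minimizer.

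For the support claim, I would show that any minimizer satisfies $\supp\sigma_B^\star\subseteq\supp\rho_B$. Let $\Pi$ be the projector onto $\supp\rho_B$. Only the block $\Pi\sigma\Pi$ enters, in the sense that replacing $\sigma$ by $\Pi\sigma\Pi/\Tr[\Pi\sigma]$ does not increase $Q$. The argument is operator monotonicity of $x\mapsto x^{-\gamma}$, $\gamma\in(0,1)$, applied to the compression (the pinching inequality for negative powers), together with the factor $(\Tr\Pi\sigma)^{\gamma\alpha}\le1$ from renormalization. Whenever $\Tr[\Pi\sigma]<1$ this factor is strict, so a minimizer has all its weight on $\supp\rho_B$. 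The reverse inclusion holds by finiteness, hence $\supp\sigma^\star=\supp\rho_B$. For uniqueness I would invoke joint convexity of $Q_\alpha$ for $\alpha>1$ (Frank--Lieb). Strictness along the direction $\sigma$ would be obtained from the strict convexity of $\|\cdot\|_\alpha^\alpha$ combined with the strict operator convexity of $x\mapsto x^{-\gamma}$ on the relevant support.

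Having shown the minimizer is interior on $\supp\rho_B$, I would compute the directional derivative of $Q$ at $\sigma^\star$ along traceless Hermitian $\Delta$ supported there. Writing $X=\rho^{1/2}(\tau\otimes\sigma)^{-\gamma}\rho^{1/2}$, cyclicity gives
\[
\frac{\mathrm d}{\mathrm dt}Q=\alpha\Tr\!\left[\rho^{1/2}X^{\alpha-1}\rho^{1/2}\,\mathrm{D}\bigl[(\tau\otimes\sigma)^{-\gamma}\bigr](I\otimes\Delta)\right].
\]
The key algebraic step is to convert this into the form $\Tr[Z(\tau\otimes\sigma)^{-1}(I\otimes\Delta)]$ up to a factor $-\gamma\alpha$. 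To do so, I would use the polar-decomposition identity $\rho^{1/2}X^{\alpha-1}\rho^{1/2}=(\tau\otimes\sigma)^{\gamma/2}\,Y^{\alpha}\,(\tau\otimes\sigma)^{\gamma/2}$ with $Y$ the sandwiched operator whose $\alpha$-th power is $Z$, so that $Y^\alpha=Z$. Since $Z$ need not commute with $\sigma$, the Fréchet derivative does not simplify naively. I expect this to be the main obstacle.

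If the direct derivative computation becomes unwieldy, the first fallback is a variational (Hölder-duality) formulation, $Q^{1/\alpha}=\max_{H\ge0,\|H\|_{\alpha'}\le1}\Tr[H\,\cdots]$, which lets one differentiate a linear functional in $\sigma^{-\gamma}$. A second fallback is to verify the fixed point by a sufficiency argument. Given $\sigma^\star$ with $\sigma^\star=\Tr_A Z/\Tr Z$, one would show via Hölder's inequality and Jensen (operator concavity of $x^{1-\gamma}$) that $Q(\sigma)\ge Q(\sigma^\star)$ for all $\sigma$. Uniqueness of the minimizer then identifies the two. Either route, setting the derivative to a multiple of $\Tr\Delta$ yields $\Tr_A Z\propto\sigma^\star$, and normalization gives Eq.~\eqref{eq:fact-ht-fixed-point}.
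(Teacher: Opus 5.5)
Your existence, support and uniqueness steps are essentially sound, with two small repairs. First, the compression inequality $(\Pi\sigma\Pi)^{-\gamma}\le\Pi\sigma^{-\gamma}\Pi$ is Jensen's (Hansen--Pedersen) operator inequality for the operator \emph{convex} function $x\mapsto x^{-\gamma}$, not a direct use of monotonicity. Second, strictness in the uniqueness step comes from strict \emph{monotonicity} of $Y\mapsto\Tr[Y^\alpha]$ on positive operators. It is applied to the nonzero gap $G=\tfrac12(\sigma_0^{-\gamma}+\sigma_1^{-\gamma})-\sigma_{1/2}^{-\gamma}\ge0$, which survives the sandwich because $\Tr[\rho_{AB}(\tau_A^{-\gamma}\otimes G)]\ge\Tr[\rho_BG]>0$.

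The genuine gap is the fixed-point relation, which is the substantive part of the statement. You name non-commutativity as the main obstacle but do not overcome it, and the ``key algebraic step'' you aim for is false. Write $f(x)=x^{-\gamma}$ and $f^{[1]}$ for its divided difference, with $f^{[1]}(x,x)=f'(x)$. Note also that the perturbation of $\tau\otimes\sigma$ is $\tau\otimes\Delta$, not $I\otimes\Delta$. In an eigenbasis $\{s_b\}$ of $\sigma$, the Daleckii--Krein formula gives
\[
 \frac{\mathrm d}{\mathrm dt}Q(\sigma+t\Delta)\Big|_{t=0}
 =\alpha\sum_{b,b'}(s_bs_{b'})^{\gamma/2}f^{[1]}(s_b,s_{b'})\,(\Tr_A Z)_{bb'}\,\Delta_{b'b}.
\]
A form $-\gamma\alpha\Tr[\Tr_A[Z]\sigma^{-1}\Delta]$ would require the coefficient $-\gamma s_{b'}^{-1}$, and the two agree only when $s_b=s_{b'}$. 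Your second fallback is circular: it shows that a solution of the fixed-point equation is the minimizer, but you never show that such a solution exists. Your first fallback, as described, only moves the same difficulty into differentiating $\Tr[\Gamma\sigma^{-\gamma}]$ against a non-commuting $\Gamma$.

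The missing idea is that commutation is a \emph{consequence} of stationarity, not a prerequisite. Your support step makes $\sigma^\star$ interior on $\supp\rho_B$, so the derivative above must vanish for every traceless Hermitian $\Delta$ supported there. Put $\Gamma=(\sigma^\star)^{\gamma/2}\Tr_A[Z](\sigma^\star)^{\gamma/2}$. This is where your identity $\rho^{1/2}X^{\alpha-1}\rho^{1/2}=(\tau\otimes\sigma)^{\gamma/2}Z(\tau\otimes\sigma)^{\gamma/2}$ and the factorization $(\tau\otimes\sigma)^{-\gamma}=\tau^{-\gamma}\otimes\sigma^{-\gamma}$ enter. The derivative is then $\alpha\Tr[\Gamma\,\mathrm{D}f(\sigma^\star)[\Delta]]$, and its vanishing says that the entrywise product $[f^{[1]}(s_b,s_{b'})]_{b,b'}\circ\Gamma$ equals $\lambda I$. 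Because $f$ is strictly decreasing, every $f^{[1]}(s_b,s_{b'})$ is strictly negative. Hence all off-diagonal entries of $\Gamma$ vanish, and the diagonal entries satisfy $-\gamma s_b^{-1-\gamma}\Gamma_{bb}=\lambda$. Thus $\Gamma=c\,(\sigma^\star)^{1+\gamma}$, i.e.\ $\Tr_A[Z]=c\,\sigma^\star$, and taking traces gives $c=\Tr[Z]$.

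Alternatively, your first fallback becomes derivative-free if you add two ingredients. One is Sion's theorem. The other is the explicit minimization $\min_\sigma\Tr[\Gamma\sigma^{-\gamma}]=\norm{\Gamma}_{1/(1+\gamma)}$, attained uniquely at $\sigma\propto\Gamma^{1/(1+\gamma)}$. At a saddle point $(\sigma^\star,H^\star)$ of $\Tr[H\rho^{1/2}(\tau\otimes\sigma)^{-\gamma}\rho^{1/2}]$, the state $\sigma^\star$ minimizes this linearized functional, with $\Gamma$ proportional to the one above. This gives the same conclusion.
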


\subsection{Quantum coherent information}\label{sec:integral-quantum}

The construction below is first stated for faithful $B$ and $E$ marginals.
This is no restriction: the general case follows by restricting $B$ and $E$
to their respective supports, as explained at the end of this section.  The
tilted state is useful because its entropy difference represents the derivative
of the R\'enyi coherent information, turning the R\'enyi quantity into the von
Neumann entropies to which the continuity bounds of
Section~\ref{sec:projection} apply.

\begin{lemma}[Tilted state]\label{lem:tilted-state}
Let $\omega_{RBE}$ be a pure state with $\omega_B>0$ and $\omega_E>0$.
For each $0<s<1$,
\begin{equation}
 2^{-s\widetilde H_{1/(1-s)}^\uparrow(R|B)_\omega}
 =\min_{\sigma_B>0}\max_{\tau_E}
     \Tr[\omega_{BE}(\sigma_B^{-s}\otimes\tau_E^s)],
 \label{eq:variational}
\end{equation}
where both optimizations are over density operators.
There is a unique strictly positive minimizing $\sigma_B^\star$; for this
optimizer, the maximizing $\tau_E^\star$ is also unique and strictly positive.
We define the normalized tilted state associated with them by
\begin{equation}
 \zeta_{BE}=
 \frac{((\sigma_B^\star)^{-s/2}\otimes(\tau_E^\star)^{s/2})\omega_{BE}
       ((\sigma_B^\star)^{-s/2}\otimes(\tau_E^\star)^{s/2})}
      {\Tr[\omega_{BE}((\sigma_B^\star)^{-s}\otimes(\tau_E^\star)^s)]}
 \label{eq:tilted-state}
\end{equation}
It satisfies
\begin{equation}
 \zeta_B=\sigma_B^\star,\qquad \zeta_E=\tau_E^\star.
 \label{eq:tilted-marginals}
\end{equation}
\end{lemma}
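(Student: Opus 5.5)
Throughout, set $\alpha=1/(1-s)$, so that $s=s_\alpha=(\alpha-1)/\alpha$ and $\frac{1-\alpha}{2\alpha}=-s/2$. Fact~\ref{fact:mds-variational} already supplies the identity~\eqref{eq:variational} with $\inf\sup$. The plan is to solve the inner maximization in closed form and to identify the resulting outer objective pointwise in $\sigma_B$ with the sandwiched divergence. Attainment and uniqueness of the minimizer then follow from Fact~\ref{fact:ht-fixed-point}, and the marginal identities~\eqref{eq:tilted-marginals} follow from the fixed-point relation.

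\emph{Inner maximization.} Fix $\sigma_B>0$ and put $X_E:=\Tr_B[\omega_{BE}(\sigma_B^{-s}\otimes I_E)]$. Then $\Tr[\omega_{BE}(\sigma_B^{-s}\otimes\tau_E^s)]=\Tr[X_E\tau_E^s]$. Since $\sigma_B^{-s}\ge\lambda I$ for some $\lambda>0$, we have $X_E\ge\lambda\omega_E>0$. Hölder's inequality with exponents $\alpha$ and $1/s$ gives $\Tr[X_E\tau_E^s]\le\norm{X_E}_\alpha\norm{\tau_E^s}_{1/s}=\norm{X_E}_\alpha$. Equality holds if and only if $\tau_E^s\propto X_E^{\alpha-1}$, that is $\tau_E=X_E^\alpha/\Tr[X_E^\alpha]$. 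This maximizer is unique and strictly positive.

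\emph{Identification of the outer objective.} Let $\ket v:=(I_R\otimes\sigma_B^{-s/2}\otimes I_E)\ket\omega$. Its $E$-marginal is exactly $X_E$. Its $RB$-marginal is $\sigma_B^{-s/2}\omega_{RB}\sigma_B^{-s/2}$, which by the Schmidt decomposition has the same nonzero spectrum as $X_E$. Hence
\begin{equation*}
\norm{X_E}_\alpha=\bigl(\Tr[(\sigma_B^{-s/2}\omega_{RB}\sigma_B^{-s/2})^\alpha]\bigr)^{1/\alpha}
=2^{s\widetilde D_\alpha(\omega_{RB}\|I_R\otimes\sigma_B)}.
\end{equation*}
The outer minimization is therefore, pointwise in $\sigma_B$, the monotone image of $\widetilde D_\alpha(\omega_{RB}\|I_R\otimes\sigma_B)$. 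Replacing $I_R$ by the density operator $I_R/\dim R$ only shifts this by a constant, so Fact~\ref{fact:ht-fixed-point} applies. It yields a unique minimizer $\sigma_B^\star$ with $\supp\sigma_B^\star=\supp\omega_B$, which is everything since $\omega_B>0$. Thus the infimum is a minimum, $\sigma_B^\star>0$ is unique, and $\tau_E^\star=X_E^\alpha/\Tr[X_E^\alpha]$ with $X_E$ evaluated at $\sigma_B^\star$ is unique and strictly positive by the first step.

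\emph{Marginals of the tilted state.} Since $\tau_E^\star$ commutes with $X_E$, we get $\zeta_E\propto(\tau_E^\star)^{s/2}X_E(\tau_E^\star)^{s/2}\propto X_E^{1+s\alpha}=X_E^\alpha$, using $s\alpha=\alpha-1$. After normalization, $\zeta_E=\tau_E^\star$.

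For the $B$ marginal I would use the pure-state identity $\rho_{RB}^\alpha=\Tr_E[(I_{RB}\otimes\rho_E^{\alpha-1})\ket v\!\bra v]$, where $\rho_{RB}$ and $\rho_E=X_E$ are the marginals of $\ket v$. This is immediate in the Schmidt basis, and this step is where care is needed. Applied to the operator $Z_{RB}$ of Fact~\ref{fact:ht-fixed-point}, it gives
\begin{equation*}
Z_{RB}=\Tr_E[(I\otimes X_E^{\alpha-1})\ket v\!\bra v]\propto\Tr_E[(I\otimes(\tau_E^\star)^{s})\ket v\!\bra v].
\end{equation*}
Tracing out $R$ yields $\Tr_R Z_{RB}\propto(\sigma_B^\star)^{-s/2}\Tr_E[\omega_{BE}(I\otimes(\tau_E^\star)^s)](\sigma_B^\star)^{-s/2}\propto\zeta_B$. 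The fixed-point relation $\sigma_B^\star=\Tr_R Z/\Tr Z$ together with normalization then gives $\zeta_B=\sigma_B^\star$.

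The main obstacle I anticipate is bookkeeping: matching the inf-sup of Fact~\ref{fact:mds-variational} with the divergence minimization of Fact~\ref{fact:ht-fixed-point}, so that the two optimizers are the same object. This matching relies on the spectral identity between $X_E$ and $\sigma_B^{-s/2}\omega_{RB}\sigma_B^{-s/2}$.
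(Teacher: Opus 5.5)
Your proposal is correct and follows essentially the same route as the paper. Both arguments use Fact~\ref{fact:ht-fixed-point} with $I_R/\dim R$ to obtain a unique minimizer, which is strictly positive because $\omega_B>0$. Both use Schatten/H\"older duality to obtain the unique, strictly positive $\tau_E^\star\propto X_E^\alpha$. Both then use the Schmidt-decomposition identity linking $[(\sigma_B^\star)^{-s/2}\omega_{RB}(\sigma_B^\star)^{-s/2}]^\alpha$ to the $B$-marginal of the tilted state, and close with the fixed-point relation. Your pointwise identification $\max_{\tau_E}\Tr[\omega_{BE}(\sigma_B^{-s}\otimes\tau_E^s)]=2^{s\widetilde D_\alpha(\omega_{RB}\|I_R\otimes\sigma_B)}$ spells out the matching between the minimax and the divergence minimization, which the paper uses only implicitly.
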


\begin{proof}
Fact~\ref{fact:mds-variational} gives the variational identity in
Eq.~\eqref{eq:variational}, initially with an infimum and a supremum.
Fact~\ref{fact:ht-fixed-point}, applied with
$\kappa_R=I_R/\dim R$, gives a unique minimizing $\sigma_B^\star$ with support
equal to that of $\omega_B$; the normalization of $I_R$ adds only a constant
to the divergence. Hence $\sigma_B^\star>0$.

For this minimizing state, consider the unnormalized pure vector
$(I_R\otimes(\sigma_B^\star)^{-s/2}\otimes I_E)\ket\omega$.
Its $RB$ and $E$ reductions have the same nonzero eigenvalues.
Its $E$ reduction is
\begin{equation}
 Y_E=\Tr_B[((\sigma_B^\star)^{-s/2}\otimes I_E)\omega_{BE}
                       ((\sigma_B^\star)^{-s/2}\otimes I_E)].
 \label{eq:environment-reduction}
\end{equation}
Schatten duality gives the inner maximization in \eqref{eq:variational}:
\begin{equation}
 \norm{Y_E}_{1/(1-s)}=\max_{\tau_E}\Tr[Y_E\tau_E^s],
 \qquad
 \tau_E^\star=\frac{Y_E^{1/(1-s)}}{\Tr[Y_E^{1/(1-s)}]}.
 \label{eq:environment-optimizer}
\end{equation}
Since $\omega_E>0$, the operator
$Y_E$ is strictly positive, so this maximizer is unique and strictly positive.
Thus both extrema in Eq.~\eqref{eq:variational} are attained.
To identify the marginals, purify $\zeta_{BE}$ by the normalized vector
proportional to
$(I_R\otimes(\sigma_B^\star)^{-s/2}\otimes(\tau_E^\star)^{s/2})\ket\omega$.
Using \eqref{eq:environment-optimizer}, a Schmidt decomposition of
$(I_R\otimes(\sigma_B^\star)^{-s/2}\otimes I_E)\ket\omega$ gives
\begin{equation}
 \zeta_{RB}=
 \frac{\bigl[(I_R\otimes(\sigma_B^\star)^{-s/2})\omega_{RB}
                  (I_R\otimes(\sigma_B^\star)^{-s/2})\bigr]^{1/(1-s)}}
      {\Tr\!\left[\bigl[(I_R\otimes(\sigma_B^\star)^{-s/2})\omega_{RB}
                  (I_R\otimes(\sigma_B^\star)^{-s/2})\bigr]^{1/(1-s)}\right]}.
\end{equation}
Equation~\eqref{eq:fact-ht-fixed-point} states that the minimizing comparison
state $\sigma_B^\star$ equals the $B$-marginal of the normalized sandwiched
power on the right-hand side above. Hence
$\Tr_R[\zeta_{RB}]=\sigma_B^\star$.
On the other hand,
\begin{equation}
 \zeta_E=
 \frac{(\tau_E^\star)^{s/2}Y_E(\tau_E^\star)^{s/2}}
      {\Tr[Y_E(\tau_E^\star)^s]}
 =\tau_E^\star,
\end{equation}
again by \eqref{eq:environment-optimizer}.
\end{proof}

For later use, the trial choices $\tau_E=I_E/\dim E$ and
$\sigma_B=I_B/\dim B$ in \eqref{eq:variational}, together with
$\sigma_B^{-s}\ge I_B$ and $\tau_E^s\le I_E$, give
\begin{equation}
 -s\log\dim E
 \le -s\widetilde H_{1/(1-s)}^\uparrow(R|B)_\omega
 \le s\log\dim B.
 \label{eq:dimension-bounds}
\end{equation}

The exact marginals in \eqref{eq:tilted-marginals} are the reason for
introducing this state: they convert the logarithms of the optimizing
states into ordinary entropies. The required regularity and envelope
identity are another external input from Hayashi and
Tomamichel~\cite{hayashitomamichel15c}.

\begin{fact}[Differentiability of the optimized divergence]
\label{fact:ht-envelope}
Let $\rho_{AB}$ and $\tau_A$ satisfy the support assumption in
Fact~\ref{fact:ht-fixed-point}, and define
\begin{equation}
 G(\alpha)=\inf_{\sigma_B}\widetilde D_\alpha
       (\rho_{AB}\|\tau_A\otimes\sigma_B).
\end{equation}
The function $G$ is continuously differentiable for $\alpha\ge1$, and
its derivative is obtained by differentiating only the R\'enyi order at the
unique minimizing state:
\begin{equation}
 G'(\alpha)=
 \left.\frac{\partial}{\partial\alpha}
 \widetilde D_\alpha(\rho_{AB}\|\tau_A\otimes\sigma_B)
 \right|_{\sigma_B=\sigma_B^\star}.
 \label{eq:fact-ht-envelope}
\end{equation}
This is Ref.~\cite[Proposition~11]{hayashitomamichel15c}.
\end{fact}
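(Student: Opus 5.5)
Although this statement is imported from Ref.~\cite[Proposition~11]{hayashitomamichel15c}, I would prove it directly with a Danskin-type envelope argument. The main work is to control the minimizer $\sigma_B^\star(\alpha)$ of Fact~\ref{fact:ht-fixed-point} as $\alpha$ varies.

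First, I would restrict $B$ to $\supp\rho_B$. By Fact~\ref{fact:ht-fixed-point} the minimizer has exactly this support, so we may assume $\rho_B>0$ and $\sigma_B^\star(\alpha)>0$. On the open set $\{(\alpha,\sigma_B):\alpha>1,\ \sigma_B>0\}$ the map $F(\alpha,\sigma_B)=\widetilde D_\alpha(\rho_{AB}\|\tau_A\otimes\sigma_B)$ is jointly real-analytic. This holds because the sandwiched operator $(\tau_A\otimes\sigma_B)^{\frac{1-\alpha}{2\alpha}}\rho_{AB}(\tau_A\otimes\sigma_B)^{\frac{1-\alpha}{2\alpha}}$ depends analytically on $(\alpha,\sigma_B)$, and its $\alpha$-th power trace is analytic away from the zero eigenvalue. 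The zero eigenvalue contributes nothing since $\alpha>1$.

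At $\alpha=1$ I would use the standard extension $F(1,\sigma_B)=D(\rho_{AB}\|\tau_A\otimes\sigma_B)$. By Fact~\ref{fact:sandwiched-data-processing}, this extension is continuous in $\alpha$. I would also argue that $\partial_\alpha F$ extends continuously to $\alpha=1$, via the usual Taylor expansion of $\log\Tr[\cdot^\alpha]$ around $\alpha=1$, uniformly on compact sets of $\sigma_B>0$.

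The second step, which I expect to be the main obstacle, is a uniform positivity bound on $\sigma_B^\star(\alpha)$. I want to show that for $\alpha$ in any compact interval $[1,\alpha_0]$, the minimizers lie in a fixed compact set $K=\{\sigma_B:\sigma_B\ge c\,I_B\}$ with $c>0$. I would try to extract this from the fixed-point relation $\sigma_B^\star=\Tr_A Z_{AB}/\Tr Z_{AB}$. Since $\rho_B>0$ and the negative power $\sigma_B^{\frac{1-\alpha}{2\alpha}}$ only enlarges small eigendirections, a small eigenvalue of $\sigma_B^\star$ should force a comparatively large weight of $\Tr_A Z_{AB}$ in that direction, which contradicts smallness. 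A cruder alternative is available if this fails: $F(\alpha,\sigma_B)\to\infty$ as $\lambda_{\min}(\sigma_B)\to0$, uniformly for $\alpha\in[1,\alpha_0]$. Comparing with the bounded trial value $F(\alpha,I_B/\dim B)$ then excludes a neighbourhood of the boundary.

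Given the compact set $K$, continuity of $\alpha\mapsto\sigma_B^\star(\alpha)$ follows from joint continuity of $F$ on $[1,\alpha_0]\times K$, uniqueness of the minimizer, and Berge's maximum theorem. At $\alpha=1$ the minimizer is also unique, namely $\rho_B$ up to the $\tau_A$ weighting, by strict convexity of relative entropy in the second argument.

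Finally, I would run the envelope argument. For $h$ of either sign with $\alpha,\alpha+h\ge1$, optimality gives
\begin{equation*}
 F(\alpha+h,\sigma^\star(\alpha+h))-F(\alpha,\sigma^\star(\alpha+h))
 \le G(\alpha+h)-G(\alpha)
 \le F(\alpha+h,\sigma^\star(\alpha))-F(\alpha,\sigma^\star(\alpha)).
\end{equation*}
By the mean value theorem in $\alpha$, both outer differences equal $h\,\partial_\alpha F$ evaluated at an intermediate order and at the respective minimizer. Joint continuity of $\partial_\alpha F$ on $[1,\alpha_0]\times K$, together with continuity of $\sigma^\star$, makes both quotients converge to $\partial_\alpha F(\alpha,\sigma^\star(\alpha))$. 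This yields Eq.~\eqref{eq:fact-ht-envelope}, with a one-sided derivative at $\alpha=1$.

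The same formula shows that $G'$ is a composition of continuous maps, so $G$ is continuously differentiable on $[1,\infty)$.
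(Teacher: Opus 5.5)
Your argument is correct, and it differs from what the paper does. The paper gives no proof of this statement; it imports it from Hayashi--Tomamichel~\cite[Proposition~11]{hayashitomamichel15c}, so your Danskin-type argument is a self-contained replacement rather than a reconstruction. Its main advantage is that it never needs differentiability of the optimizer $\sigma_B^\star(\alpha)$. The two-sided comparison of $G(\alpha+h)-G(\alpha)$ with $F(\alpha+h,\cdot)-F(\alpha,\cdot)$ at the two minimizers uses only three things: uniqueness (Fact~\ref{fact:ht-fixed-point} for $\alpha>1$), continuity of the argmin, and joint continuity of $\partial_\alpha F$. This is the same mechanism the paper uses in its proof of Lemma~\ref{holevo:lem:derivative}.

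Three steps should be tightened.

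First, drop the fixed-point heuristic for uniform positivity and use your coercivity alternative, which is rigorous once you invoke order monotonicity from Fact~\ref{fact:sandwiched-data-processing}. For $\alpha\in[1,\alpha_0]$ one has $F(\alpha,\sigma_B)\ge D(\rho_{AB}\|\tau_A\otimes\sigma_B)\ge-H(AB)_\rho-\Tr[\rho_A\log\tau_A]-\lambda_{\min}(\rho_B)\log\lambda_{\min}(\sigma_B)$, with eigenvalues taken on $\supp\rho_B$. On the other hand, $F(\alpha,\pi_B)\le F(\alpha_0,\pi_B)<\infty$ for the maximally mixed state $\pi_B$ on $\supp\rho_B$. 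Together these exclude a neighbourhood of the boundary uniformly in $\alpha$.

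Second, justify the regularity through $\alpha=1$ as follows. After also restricting $A$ to $\supp\tau_A$, write the trace of the $\alpha$th power of the sandwiched operator as $\Tr[(\rho_{AB}^{1/2}(\tau_A\otimes\sigma_B)^{(1-\alpha)/\alpha}\rho_{AB}^{1/2})^\alpha]$. The inner operator is strictly positive on $\supp\rho_{AB}$ for every $\alpha>0$. Hence its log-trace is jointly analytic in $(\alpha,\sigma_B)$ near $\alpha=1$ and vanishes there. So $F$ has a removable singularity at $\alpha=1$, and $\partial_\alpha F$ is jointly continuous on $[1,\alpha_0]\times K$. This rewriting also supplies the constant-rank property that your analyticity claim for $\alpha>1$ tacitly uses.

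Third, at $\alpha=1$ the minimizer is exactly $\rho_B$, with no $\tau_A$ weighting. This follows from $D(\rho_{AB}\|\tau_A\otimes\sigma_B)=D(\rho_{AB}\|\tau_A\otimes\rho_B)+D(\rho_B\|\sigma_B)$, which also gives uniqueness there directly.
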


\begin{lemma}[Derivative as an entropy difference]\label{lem:derivative}
The function
\begin{equation}
 s\longmapsto-s\widetilde H_{1/(1-s)}^\uparrow(R|B)_\omega
 \label{eq:entropy-function}
\end{equation}
extends to a continuously differentiable function on $[0,1)$, with value
zero at $s=0$ and right derivative $H(B)_\omega-H(E)_\omega$ there.
For every $0<s<1$,
\begin{equation}
 \frac{\mathrm{d}}{\mathrm{d}s}\left[-s\widetilde H_{1/(1-s)}^\uparrow(R|B)_\omega\right]
 = H(B)_\zeta-H(E)_\zeta,
 \label{eq:derivative-identity}
\end{equation}
where $\zeta_{BE}$ is the tilted state in \eqref{eq:tilted-state}
at parameter $s$.
\end{lemma}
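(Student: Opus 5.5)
The plan is to reduce the statement to the minimax formula of Lemma~\ref{lem:tilted-state} and read off the derivative by an envelope argument. Write $\alpha(s)=1/(1-s)$, so that $s=s_{\alpha(s)}$. Define
\begin{equation*}
 f(s):=-s\widetilde H_{1/(1-s)}^\uparrow(R|B)_\omega=s\,G(\alpha(s)),
 \qquad
 G(\alpha)=\inf_{\sigma_B}\widetilde D_\alpha(\omega_{RB}\|I_R\otimes\sigma_B),
\end{equation*}
where $G$ is the function of Fact~\ref{fact:ht-envelope} with $\tau_A=I_R$; the normalization of $I_R$ only shifts $G$ by a constant.

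\emph{Regularity and the endpoint.} Fact~\ref{fact:ht-envelope} says $G$ is $C^1$ on $\alpha\ge1$. The map $s\mapsto\alpha(s)$ is smooth on $[0,1)$, so $f$ extends to a $C^1$ function on $[0,1)$ with $f(0)=0$. At $s=0$ the product rule gives $f'(0)=G(1)$. Here $G(1)=\inf_{\sigma_B}D(\omega_{RB}\|I_R\otimes\sigma_B)=-H(R|B)_\omega$, with the infimum attained at $\sigma_B=\omega_B$. Since $\omega_{RBE}$ is pure, $H(RB)_\omega=H(E)_\omega$, and therefore $f'(0)=H(B)_\omega-H(E)_\omega$.

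\emph{Interior derivative by a two-sided envelope argument.} Fix $0<s<1$ and let $\sigma_B^\star,\tau_E^\star$ be the optimizers from Lemma~\ref{lem:tilted-state}. By Eq.~\eqref{eq:variational}, $f(s')=\log\min_{\sigma}\max_\tau\Tr[\omega_{BE}(\sigma^{-s'}\otimes\tau^{s'})]$. Introduce two auxiliary functions:
\begin{equation*}
 g(s')=\log\max_{\tau_E}\Tr[\omega_{BE}((\sigma_B^\star)^{-s'}\otimes\tau_E^{s'})],
 \qquad
 h(s')=\log\Tr[\omega_{BE}((\sigma_B^\star)^{-s'}\otimes(\tau_E^\star)^{s'})].
\end{equation*}
They satisfy the following orderings:
\begin{itemize}
\item $f\le g$ near $s$, because $\sigma_B^\star$ is only a trial point for the outer minimum, and $f(s)=g(s)$.
\item $h\le g$ near $s$, because $\tau_E^\star$ is only a trial point for the inner maximum, and $h(s)=g(s)$.
\end{itemize}
By Schatten duality \eqref{eq:environment-optimizer}, $g(s')=\log\|Y_E(s')\|_{1/(1-s')}$. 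Here $Y_E(s')$ is a smooth family of strictly positive operators and the exponent varies smoothly, so $g$ is differentiable. The function $h$ is visibly smooth, and $f$ is differentiable by the previous step. A differentiable function touching another differentiable function from one side at an interior point has the same derivative there. Hence
\begin{equation*}
 f'(s)=g'(s)=h'(s).
\end{equation*}
Computing $h'(s)$ directly, with base-2 logarithms absorbing the $\ln 2$ factors, gives
\begin{equation*}
 h'(s)=\frac{\Tr[\omega_{BE}((\sigma^\star)^{-s}\otimes(\tau^\star)^{s})(-\log\sigma^\star\otimes I+I\otimes\log\tau^\star)]}{\Tr[\omega_{BE}((\sigma^\star)^{-s}\otimes(\tau^\star)^s)]}.
\end{equation*}
In deriving this, commute the functions of the same operator, which is justified by cyclicity of the trace. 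The right-hand side equals $-\Tr[\zeta_B\log\sigma_B^\star]+\Tr[\zeta_E\log\tau_E^\star]$. Using the marginal identities \eqref{eq:tilted-marginals}, this is $H(B)_\zeta-H(E)_\zeta$.

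\emph{Main obstacle.} The delicate point is differentiability of $g$. I would verify it by writing $g(s')=(1-s')\log\Tr[Y_E(s')^{1/(1-s')}]$ and using strict positivity of $Y_E$, which holds because $\omega_E>0$; then the power function is analytic along the family. An alternative would be an envelope argument for $f$ directly, but that requires continuity of $\sigma^\star$ in $s$. The touching argument avoids this by relying only on Fact~\ref{fact:ht-envelope} for differentiability of $f$.
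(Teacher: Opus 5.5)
Your proof is correct and follows essentially the paper's route: it gets the regularity and the endpoint derivative from Fact~\ref{fact:ht-envelope}, then freezes $\sigma_B^\star$ and $\tau_E^\star$ and evaluates the derivative through the tilted marginals. The only difference is how the two envelope steps are justified: you use a one-sided touching argument that needs only the $C^1$ regularity of $G$ and smoothness of your auxiliary $g$, whereas the paper invokes the envelope identity in Fact~\ref{fact:ht-envelope} and stationarity of the smooth inner maximizer.
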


\begin{proof}
Apply Fact~\ref{fact:ht-envelope} with state $\omega_{RB}$
and fixed comparison state $I_R/\dim R$. Its support condition is
automatic, and
\begin{equation}
 -\widetilde H_\alpha^\uparrow(R|B)_\omega
 =\inf_{\sigma_B}\widetilde D_\alpha\!\left(
 \omega_{RB}\,\middle\|\,\frac{I_R}{\dim R}\otimes\sigma_B\right)
 -\log\dim R.
\end{equation}
Thus $\alpha\mapsto\widetilde H_\alpha^\uparrow(R|B)_\omega$ is
continuously differentiable through $\alpha=1$. Multiplication by $-s$
and the substitution $\alpha=1/(1-s)$ prove the asserted regularity,
including the extension at $s=0$. The derivative there is
$-H(R|B)_\omega=H(B)_\omega-H(E)_\omega$ by purity.

Fix $0<s<1$ and the optimizing states $\sigma_B^\star,\tau_E^\star$ from
Lemma~\ref{lem:tilted-state}. Equation~\eqref{eq:fact-ht-envelope}
permits us to differentiate with respect to the R\'enyi order while
holding $\sigma_B^\star$ fixed.
For fixed $\sigma_B^\star$, the positive maximizer in
\eqref{eq:environment-optimizer} varies smoothly with the parameter;
its variation contributes zero by stationarity under the trace-one
constraint. Thus no derivative of an optimizer is needed. Keeping both
operators fixed in the derivative below gives
\begin{align}
 \frac{\mathrm{d}}{\mathrm{d}s}\left[-s\widetilde H_{1/(1-s)}^\uparrow(R|B)_\omega\right]
 &=\left.\frac{\mathrm{d}}{\mathrm{d}u}\log\Tr\!\left[
       \omega_{BE}((\sigma_B^\star)^{-u}\otimes(\tau_E^\star)^u)
       \right]\right|_{u=s}
 \label{eq:derivative-envelope-evaluation}\\
 &=\Tr\!\left[\zeta_{BE}
       (I_B\otimes\log\tau_E^\star
        -\log\sigma_B^\star\otimes I_E)\right]
 \label{eq:derivative-log-evaluation}\\
 &=H(B)_\zeta-H(E)_\zeta.
 \label{eq:derivative-evaluation}
\end{align}
The logarithms commute with the corresponding matrix powers in
\eqref{eq:tilted-state}, and \eqref{eq:tilted-marginals} gives the last equality.
\end{proof}

\begin{remark}
\label{rem:fixed-state-derivative}
For fixed states $\rho,\sigma$ with $\supp\rho\subseteq\supp\sigma$ and
$\alpha>1$, Bao, Moosa, and Shehzad
\cite[Eqs.~(3.6), (3.11), and (3.12)]{bao2019holographic} derived
\begin{equation}
 \begin{aligned}
 \alpha^2\frac{\mathrm d}{\mathrm d\alpha}
 \left[\frac{\alpha-1}{\alpha}\widetilde D_\alpha(\rho\|\sigma)\right]
 &=D(\rho_\alpha\|\sigma),\\
 \rho_\alpha
 &:=\frac{(\sigma^{\frac{1-\alpha}{2\alpha}}\rho
                  \sigma^{\frac{1-\alpha}{2\alpha}})^\alpha}
 {\Tr[(\sigma^{\frac{1-\alpha}{2\alpha}}\rho
                  \sigma^{\frac{1-\alpha}{2\alpha}})^\alpha]}.
 \end{aligned}
 \label{eq:fixed-state-renyi-derivative}
\end{equation}
Here $D$ is the ordinary relative entropy. Kibe and Roy
\cite[Theorem~4.1 and Corollary~4.5]{kiberoy2026escort} extended the
corresponding differential and integral representations to sandwiched and
$\alpha$--$z$ R\'enyi divergences on von Neumann algebras.
Lemmas~\ref{lem:derivative} and~\ref{holevo:lem:derivative} can be viewed
as generalizations of the fixed-state identities to the optimized sandwiched
R\'enyi conditional entropy and Holevo information, respectively.
\end{remark}

\subsection{Classical Holevo information}
\label{sec:integral-classical}

We next consider a fixed finite family of states $\{\omega_B^x\}_x$,
with pure extensions $\omega_{BE}^x$ on a common environment $E$.
Write $\RenyiHolevoQuantity{\alpha}(\{\omega_B^x\}_x)$ for its sandwiched
R\'enyi Holevo quantity, optimizing only the probabilities on $x$ in
\eqref{holevo:eq:definition} while keeping the states fixed. Comparison states are restricted to
$\supp\sum_x\omega_B^x$; in the Holevo tilted-state and derivative lemmas,
$\sigma_B>0$ means positivity on that support. This restriction does
not change the infimum: projecting onto the support and replacing its
complement by a state in it fixes every output, so data processing
in Fact~\ref{fact:sandwiched-data-processing} applies. Interior comparison states approximate every boundary state
with finite objective. All operators are embedded in their original
systems when the local estimates are applied.

\begin{lemma}[Tilted states]\label{holevo:lem:tilted-states}
For $0<s<1$ and $\alpha=1/(1-s)$,
\begin{align}
 2^{s\RenyiHolevoQuantity{\alpha}(\{\omega_B^x\}_x)}
 &=\min_{\sigma_B>0}\max_x
       \norm{\sigma_B^{-s/2}\omega_B^x\sigma_B^{-s/2}}_\alpha
 \label{holevo:eq:minimax-radius}\\
 &=\max_p\inf_{\sigma_B>0}\sum_xp_x
       \norm{\sigma_B^{-s/2}\omega_B^x\sigma_B^{-s/2}}_\alpha.
 \label{holevo:eq:minimax}
\end{align}
A saddle pair $(p^\star,\sigma_B^\star)$ exists. For every $x$ with
$p_x^\star>0$,
\begin{equation}
 \norm{(\sigma_B^\star)^{-s/2}\omega_B^x
       (\sigma_B^\star)^{-s/2}}_\alpha
 =2^{s\RenyiHolevoQuantity{\alpha}(\{\omega_B^x\}_x)}.
 \label{holevo:eq:active-letters}
\end{equation}
For every $\sigma_B>0$ and $x$, Schatten duality gives
\begin{equation}
 \norm{\sigma_B^{-s/2}\omega_B^x\sigma_B^{-s/2}}_\alpha
 =\max_{\tau_E}\Tr[\omega_{BE}^x(\sigma_B^{-s}\otimes\tau_E^s)].
 \label{holevo:eq:variational}
\end{equation}
Let $\tau_E^{x,\star}$ be its maximizing state at
$\sigma_B=\sigma_B^\star$.
Define the normalized tilted states, depending on $s$, by
\begin{align}
 \xi_{BE}^x
 &=\frac{((\sigma_B^\star)^{-s/2}\otimes(\tau_E^{x,\star})^{s/2})
                  \omega_{BE}^x
          ((\sigma_B^\star)^{-s/2}\otimes(\tau_E^{x,\star})^{s/2})}
        {\Tr[\omega_{BE}^x((\sigma_B^\star)^{-s}
                  \otimes(\tau_E^{x,\star})^s)]},
 \label{holevo:eq:tilted-state}\\
 \xi_{XBE}&=\sum_xp_x^\star\ket x\!\bra x\otimes\xi_{BE}^x,
 \qquad \tau_{XE}^\star=\sum_xp_x^\star\ket x\!\bra x
                         \otimes\tau_E^{x,\star}.
 \label{holevo:eq:classical-extensions}
\end{align}
Then each $\xi_{BE}^x$ is pure, and
\begin{equation}
 \xi_B=\sigma_B^\star,\qquad \xi_{XE}=\tau_{XE}^\star.
 \label{holevo:eq:tilted-marginals}
\end{equation}
Zero-probability letters may be omitted throughout.
\end{lemma}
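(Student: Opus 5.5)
The plan is to follow the template of Lemma~\ref{lem:tilted-state}, replacing the single conditional entropy by the min-max form of the R\'enyi Holevo quantity.

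\emph{Step 1: the min-max expressions.} For a c-q state $\omega_{XB}=\sum_x p_x\ket x\!\bra x\otimes\omega_B^x$, the divergence in \eqref{holevo:eq:renyi-mutual-information} factorizes over the classical register. Using $\alpha s=\alpha-1$, this gives
\[
 2^{s\widetilde D_\alpha(\omega_{XB}\|\omega_X\otimes\sigma_B)}
 =\Bigl(\sum_xp_x\norm{\sigma_B^{-s/2}\omega_B^x\sigma_B^{-s/2}}_\alpha^\alpha\Bigr)^{1/\alpha}.
\]
Denote the $x$-th norm by $f_x(\sigma_B)$. I will use the following facts.
\begin{itemize}
\item $\sigma\mapsto f_x(\sigma)^\alpha$ is convex. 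This is the standard convexity behind the sandwiched divergence (joint convexity of $\Tr(\sigma^{-s/2}\rho\sigma^{-s/2})^\alpha$ for $\alpha>1$).
\item Since $\alpha\ge1$, $(\cdot)^{1/\alpha}$ is monotone. Hence $\inf_\sigma$ of the displayed quantity is a convex minimization over the compact convex set of states on the support, and the objective is linear in $p$.
\item Sion's minimax theorem gives $\sup_p\inf_\sigma=\inf_\sigma\sup_p$.
\item $\sup_p(\sum_xp_xf_x^\alpha)^{1/\alpha}=\max_xf_x$.
\end{itemize}
This yields \eqref{holevo:eq:minimax-radius}. The interior restriction $\sigma_B>0$ costs nothing, by the remark preceding the lemma.

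To get the linear form \eqref{holevo:eq:minimax}, I apply minimax instead to $\sum_xp_xf_x(\sigma)$. This requires convexity of $f_x$ itself, not just of $f_x^\alpha$. I expect this to follow from Schatten duality \eqref{holevo:eq:variational}: $f_x$ is a supremum of the maps $\sigma\mapsto\Tr[\omega^x_{BE}(\sigma^{-s}\otimes\tau^s)]$, each of which is convex in $\sigma$ because $t\mapsto t^{-s}$ is operator convex for $0<s<1$. Applying minimax to $\sum_xp_xf_x(\sigma)$ gives the same inner value $\max_xf_x$ when $p$ is maximized, which yields \eqref{holevo:eq:minimax}.

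\emph{Step 2: existence of a saddle pair and active letters.} I need the minimum over $\sigma>0$ to be attained. As $\sigma$ approaches the boundary of the support, $\max_xf_x$ blows up because $\sum_x\omega^x_B$ has full support there. So the infimum is attained at some interior $\sigma^\star$. Compactness of the simplex then gives a maximizing $p^\star$, and $(p^\star,\sigma^\star)$ is a saddle point.

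For the active letters, the saddle property gives
\[
\sum_xp^\star_xf_x(\sigma^\star)=\max_xf_x(\sigma^\star)=2^{sC^{(1)}_\alpha}.
\]
Since the left side is a $p^\star$-average of values no larger than the maximum, every letter in the support of $p^\star$ attains the maximum. This is \eqref{holevo:eq:active-letters}. Identity \eqref{holevo:eq:variational} is the purification argument already used in Lemma~\ref{lem:tilted-state}, applied to each $\omega^x_{BE}$ with $\tau^{x,\star}_E\propto (Y^x_E)^{1/(1-s)}$.

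\emph{Step 3: properties of the tilted states.} Purity of $\xi^x_{BE}$ is immediate, since it is a rank-one operator sandwiched by local operators. The marginal $\xi^x_E=\tau^{x,\star}_E$ follows exactly as in Lemma~\ref{lem:tilted-state}, which gives $\xi_{XE}=\tau^\star_{XE}$.

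The main obstacle is $\xi_B=\sigma_B^\star$, which is a first-order stationarity condition for $\sigma$. I would write down the gradient of $\sigma\mapsto\sum_xp_x^\star f_x(\sigma)$ at $\sigma^\star$. By the envelope argument with $\tau^{x,\star}$ held fixed, this gradient is
\[
\sum_xp^\star_x\,\mathrm{D}(\sigma^{-s})[\Tr_E(\omega^x_{BE}(I\otimes\tau^{x,\star s}))].
\]
The difficulty is that the Fr\'echet derivative of $\sigma^{-s}$ does not commute with $\sigma$, so the stationarity condition is not obviously a marginal identity.

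The route I would try first avoids this derivative. Rewrite $\norm{\sigma^{-s/2}\omega^x\sigma^{-s/2}}_\alpha$ via the fixed-point characterization, Fact~\ref{fact:ht-fixed-point}, applied to the c-q state $\omega^\star_{XB}$ with $\tau_X=p^\star$. This gives
\[
\sigma^\star=\frac{\sum_xp^\star_x Z_x}{\Tr\sum_xp_x^\star Z_x},\qquad Z_x=(\sigma^{\star -s/2}\omega^x_B\sigma^{\star -s/2})^\alpha.
\]
Here I use that the minimizer in \eqref{holevo:eq:minimax} coincides with the mutual-information minimizer, since active letters have equal norms. Next, by the Schmidt decomposition of the unnormalized vector $(\sigma^{\star -s/2}\otimes\tau^{x,\star s/2})\ket{\omega^x}$, its $B$-marginal is proportional to $Z_x$ with normalization $\norm{\cdot}_\alpha$, exactly as in the $\zeta_{RB}$ computation of Lemma~\ref{lem:tilted-state}. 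Because \eqref{holevo:eq:active-letters} makes all the normalizations equal, averaging over $p^\star$ gives $\xi_B=\sigma^\star_B$.
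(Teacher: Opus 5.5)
Your proposal is correct and follows the paper's proof essentially step for step: Sion's theorem on the $\alpha$th powers for the radius formula, convexity of each norm via Schatten duality and operator convexity of $t\mapsto t^{-s}$ for the linear form, coercivity for attainment in the interior, averaging at the saddle point for the active letters, and the Hayashi--Tomamichel fixed point combined with the Schmidt-decomposition computation and equal normalizations for $\xi_B=\sigma_B^\star$. The one step you compress is why $\sigma_B^\star$ minimizes the fixed-ensemble divergence. Equal active norms only give the value $2^{(\alpha-1)\RenyiHolevoQuantity{\alpha}}$ at $\sigma_B^\star$; the matching lower bound at every other $\nu_B$ needs Jensen's inequality $\sum_xp_x^\star f_x(\nu_B)^\alpha\ge(\sum_xp_x^\star f_x(\nu_B))^\alpha$ together with the saddle inequality $\sum_xp_x^\star f_x(\nu_B)\ge2^{s\RenyiHolevoQuantity{\alpha}}$, which is exactly how the paper argues it.
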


\begin{proof}
The information-radius characterization in
Eq.~\eqref{holevo:eq:minimax-radius} is the sandwiched case of
Ref.~\cite[Proposition~IV.2]{mosonyi2017strong}.
The Schmidt-decomposition and Schatten-duality argument in
Lemma~\ref{lem:tilted-state} gives \eqref{holevo:eq:variational}.
Its unique maximizer is
\begin{equation}
 \tau_E^{x,\star}=
 \frac{\left(\Tr_B[((\sigma_B^\star)^{-s/2}\otimes I_E)\omega_{BE}^x
                       ((\sigma_B^\star)^{-s/2}\otimes I_E)]\right)^\alpha}
      {\Tr[((\sigma_B^\star)^{-s/2}\omega_B^x
                       (\sigma_B^\star)^{-s/2})^\alpha]}.
 \label{holevo:eq:environment-optimizer}
\end{equation}
Its support is $\supp\omega_E^x$, independent of $s$ and of the
positive comparison state; strict positivity on this support suffices.
Operator convexity of $t\mapsto t^{-s}$ \cite[Chapter~V]{bhatia1997matrix}
and \eqref{holevo:eq:variational} show that each displayed norm is convex
in $\sigma_B$. Its $\alpha$th power is also convex. Evaluating the
divergence on the classical blocks and applying Sion's theorem
\cite[Corollary~3.5]{sion1958general} gives
\begin{align}
 2^{(\alpha-1)\RenyiHolevoQuantity{\alpha}(\{\omega_B^x\}_x)}
 &=\max_p\inf_{\sigma_B>0}\sum_xp_x
       \norm{\sigma_B^{-s/2}\omega_B^x\sigma_B^{-s/2}}_\alpha^\alpha
 \label{holevo:eq:classical-blocks-ensemble}\\
 &=\inf_{\sigma_B>0}\max_x
       \norm{\sigma_B^{-s/2}\omega_B^x\sigma_B^{-s/2}}_\alpha^\alpha.
 \label{holevo:eq:classical-blocks}
\end{align}
Here the probability simplex is compact and the objective is continuous,
linear in $p$, and convex in $\sigma_B$. Taking an $\alpha$th root and
applying the same minimax theorem to the norms proves
\eqref{holevo:eq:minimax}.

The comparison-state minimum is attained in the interior, because
the trial $\tau_E=I_E/\dim E$ in \eqref{holevo:eq:variational} gives
\begin{equation}
 \max_x\norm{\sigma_B^{-s/2}\omega_B^x\sigma_B^{-s/2}}_\alpha
 \ge\frac{(\dim E)^{-s}}{|X|}
      \Tr\!\left[\left(\sum_x\omega_B^x\right)\sigma_B^{-s}\right],
 \label{holevo:eq:coercivity}
\end{equation}
which diverges at the boundary of the chosen support. The value function
obtained by infimizing over $\sigma_B$ is upper semicontinuous in $p$, so its
maximum is attained.
The resulting saddle property implies \eqref{holevo:eq:active-letters}.

To identify the marginals, observe that this saddle pair also optimizes
\eqref{holevo:eq:classical-blocks}. Indeed, for every comparison state
$\nu_B>0$, Jensen's inequality and the saddle property give
\begin{equation}
 \sum_xp_x^\star\norm{\nu_B^{-s/2}\omega_B^x\nu_B^{-s/2}}_\alpha^\alpha
 \ge\left(\sum_xp_x^\star
       \norm{\nu_B^{-s/2}\omega_B^x\nu_B^{-s/2}}_\alpha\right)^\alpha
 \ge2^{(\alpha-1)\RenyiHolevoQuantity{\alpha}(\{\omega_B^x\}_x)}.
 \label{holevo:eq:same-optimizer}
\end{equation}
Equality holds at $\nu_B=\sigma_B^\star$ by
\eqref{holevo:eq:active-letters}.
Fact~\ref{fact:ht-fixed-point}, applied to
$\omega_{XB}=\sum_xp_x^\star\ket x\!\bra x\otimes\omega_B^x$
with fixed first comparison state $\omega_X$, therefore gives
\begin{equation}
 \sigma_B^\star=
 \frac{\sum_xp_x^\star((\sigma_B^\star)^{-s/2}\omega_B^x
                         (\sigma_B^\star)^{-s/2})^\alpha}
      {\sum_xp_x^\star\Tr[((\sigma_B^\star)^{-s/2}\omega_B^x
                         (\sigma_B^\star)^{-s/2})^\alpha]}.
 \label{holevo:eq:fixed-point}
\end{equation}
The same Schmidt-decomposition calculation as before gives
\begin{equation}
 \xi_B^x=
 \frac{((\sigma_B^\star)^{-s/2}\omega_B^x
              (\sigma_B^\star)^{-s/2})^\alpha}
      {\Tr[((\sigma_B^\star)^{-s/2}\omega_B^x
              (\sigma_B^\star)^{-s/2})^\alpha]},
 \qquad \xi_E^x=\tau_E^{x,\star}.
 \label{holevo:eq:conditional-marginals}
\end{equation}
The denominators are equal on all active letters by
\eqref{holevo:eq:active-letters}. Averaging thus proves
\eqref{holevo:eq:tilted-marginals}.
\end{proof}

Equation~\eqref{holevo:eq:active-letters} ensures equal normalization factors
for all labels with $p_x^\star>0$. Together with the fixed-point relation,
this gives $\sum_xp_x^\star\xi_B^x=\sigma_B^\star$. The same equality enables
the cancellations needed to express the derivative as mutual information in
the next lemma.

For the Holevo quantity, the maximizing probabilities may change with
the order. We consequently use an almost-everywhere derivative identity.
This is sufficient for integration; in fact, it also gives a uniform
Lipschitz bound for each fixed finite output family.

\begin{lemma}[Derivative as Holevo information]\label{holevo:lem:derivative}
The function $s\mapsto s\RenyiHolevoQuantity{1/(1-s)}(\{\omega_B^x\}_x)$,
extended by zero at $s=0$, is Lipschitz on $[0,1)$ with constant
$\log\dim B$. At every differentiability point in $(0,1)$,
and for any saddle pair in Lemma~\ref{holevo:lem:tilted-states},
\begin{equation}
 \frac{\mathrm{d}}{\mathrm{d}s}[s\RenyiHolevoQuantity{1/(1-s)}(\{\omega_B^x\}_x)]
 =I(X:B)_\xi.
 \label{holevo:eq:derivative-identity}
\end{equation}
\end{lemma}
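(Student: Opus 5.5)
Write $g(s):=s\RenyiHolevoQuantity{1/(1-s)}(\{\omega_B^x\}_x)$, so that $2^{g(s)}=\max_p\Phi(s,p)$ with
\[
\Phi(s,p):=\inf_{\sigma_B>0}\sum_xp_x\norm{\sigma_B^{-s/2}\omega_B^x\sigma_B^{-s/2}}_\alpha ,
\]
by Eq.~\eqref{holevo:eq:minimax}. The plan is to treat $g$ as a maximum over $p$ of functions $h_p(s)$, each of which is controlled by Lemma~\ref{lem:derivative}, and to extract the derivative at differentiability points with a Danskin-type one-sided argument. For fixed $p$, put $h_p(s):=s\,\inf_{\sigma_B}\widetilde D_\alpha(\omega_{XB}\|\omega_X\otimes\sigma_B)$, the $p$-dependent Rényi mutual information scaled by $s$. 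Then $g(s)=\max_p h_p(s)$, since the supremum over $p$ in Eq.~\eqref{holevo:eq:definition} is attained at the saddle $p^\star$.

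\textbf{Step 1 (per-$p$ derivative).} For the c-q state $\omega_{XB}$ with $\omega_X$ as fixed comparison, I will purify it as $\sum_x\sqrt{p_x}\ket{x}_{X'}\ket{x}_X\ket{\omega^x}_{BE}$. Then Lemma~\ref{lem:derivative}, with $R=XX'$ and the factor $\omega_X$ handled as in its proof via Fact~\ref{fact:ht-envelope}, shows that $h_p$ is $C^1$ on $[0,1)$. For its derivative, I will redo the envelope computation in Eqs.~\eqref{eq:derivative-envelope-evaluation}--\eqref{eq:derivative-evaluation} with $\tau_{XE}$ block diagonal, giving
\[
\frac{\mathrm d}{\mathrm ds}\log\sum_x p_x\Tr\!\left[\omega^x_{BE}\bigl(\sigma_B^{-u}\otimes(\tau_E^{x})^u\bigr)\right]\Big|_{u=s}
=\Tr[\xi_{XBE}(\log\tau_{XE}-\log\sigma_B)] .
\]
By the marginals in Eq.~\eqref{holevo:eq:tilted-marginals} this equals $H(B)_\xi-H(XE)_\xi+H(X)_\xi$. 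Because each $\xi^x_{BE}$ is pure, $H(XE)=H(X)+\sum_xp_xH(B)_{\xi^x}$, so the expression reduces to $I(X\!:\!B)_\xi$.

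One subtlety is that the tilted state built from the fixed-$p$ optimizer $\sigma$ must coincide with the $\xi$ of Lemma~\ref{holevo:lem:tilted-states} when $p=p^\star$. This follows from Eq.~\eqref{holevo:eq:same-optimizer}, which shows that $\sigma^\star_B$ also optimizes the fixed-$p^\star$ problem. The equal normalizations of Eq.~\eqref{holevo:eq:active-letters} ensure the block weights are exactly $p^\star_x$ rather than reweighted.

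\textbf{Step 2 (Danskin).} At a differentiability point $s_0$, take any saddle $p^\star$ at $s_0$. Since $g\ge h_{p^\star}$ with equality at $s_0$, the difference $g-h_{p^\star}$ has a minimum at $s_0$. Both functions are differentiable there, so $g'(s_0)=h_{p^\star}'(s_0)=I(X\!:\!B)_\xi$. This gives Eq.~\eqref{holevo:eq:derivative-identity} for any saddle pair.

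\textbf{Step 3 (Lipschitz).} Each $h_p$ is $C^1$ with $0\le h_p'=I(X\!:\!B)_{\xi}\le H(B)_\xi\le\log\dim B$. A pointwise maximum of functions that are all $L$-Lipschitz is again $L$-Lipschitz, so $g$ is $\log\dim B$-Lipschitz. Continuity at $s=0$ with $g(0)=0$ follows from the bound $0\le g(s)\le s\log\dim B$, the analogue of Eq.~\eqref{eq:dimension-bounds}. Rademacher's theorem then supplies almost-everywhere differentiability.

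The step I expect to be the main obstacle is Step 1. The envelope theorem (Fact~\ref{fact:ht-envelope}) has to be applied with the nonuniform comparison state $\omega_X$, whose support condition $\supp\omega_X\subseteq\supp\omega_X$ holds trivially. I also need to confirm that the inner $\tau$-maximization contributes nothing to the derivative, which follows from stationarity under the trace constraint as in Lemma~\ref{lem:derivative}. Finally, the restriction of the comparison states to $\supp\sum_x\omega_B^x$ needs care there.
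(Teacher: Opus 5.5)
\textbf{Steps 1 and 2 are right at the saddle.} Specialized to $p=p^\star$, they reproduce the paper's argument for the derivative identity. The paper likewise uses Eq.~\eqref{holevo:eq:same-optimizer} to see that the fixed-$p^\star$ function touches $g$ from below at $s$. It differentiates that function with $\sigma_B^\star$ held fixed via Fact~\ref{fact:ht-envelope}, and uses the equal active norms of Eq.~\eqref{holevo:eq:active-letters} to obtain $I(X:B)_\xi$.

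\textbf{The gap is Step~3.} There you use $h_p'=I(X:B)_\xi\le\log\dim B$ for \emph{every} $p$, but that identity holds only at a saddle distribution. For a general fixed $p$, let $\sigma_p$ be the fixed-$p$ optimizer and $N_x=\norm{\sigma_p^{-s/2}\omega_B^x\sigma_p^{-s/2}}_\alpha$. The envelope computation (equivalently, the fixed-state formula in Remark~\ref{rem:fixed-state-derivative}) then gives
\[
 h_p'(s)=D(q\|p)+I(X:B)_{\xi_q},\qquad
 q_x=\frac{p_xN_x^\alpha}{\sum_yp_yN_y^\alpha},
\]
where $\xi_q=\sum_xq_x\ket x\!\bra x\otimes\xi_B^x$ is the tilted c-q state with reweighted labels. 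Off the saddle the norms are unequal, so $q\neq p$, and the extra term $D(q\|p)$ is not bounded by $\log\dim B$.

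\textbf{A concrete counterexample.} Take two orthogonal pure outputs with $\dim B=2$. Then $h_p(s)=\log\sum_xp_x^{1-s}$. Its derivative is the escort average $\sum_xq_x\log(1/p_x)$ with $q_x\propto p_x^{1-s}$. At $p=(2^{-20},1-2^{-20})$ and $s=0.9$ this is about $0.2\cdot20=4>1=\log\dim B$. So the individual $h_p$ are not $\log\dim B$-Lipschitz, and the ``maximum of $L$-Lipschitz functions'' step fails. Your appeal to Rademacher fails with it, since Rademacher needs Lipschitz continuity as an input. The lemma itself survives: here $g(s)=s$.

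\textbf{How the paper avoids this.} It never needs derivative bounds at non-optimal $p$. Instead it first proves only \emph{local} Lipschitz continuity of $g$, working with the $\min_\sigma\max_x$ form \eqref{holevo:eq:minimax-radius}.
\begin{itemize}
\item[(i)] On a compact $[a,b]\subset(0,1)$, the coercivity bound \eqref{holevo:eq:coercivity} together with the trial value $(\dim B)^b$ confines every minimizing $\sigma_B$ to one compact subset of the interior of the support.
\item[(ii)] On that set the finitely many norms are smooth in $s$ with uniformly bounded derivatives. A finite maximum over $x$ followed by a minimum over the compact set is therefore Lipschitz on $[a,b]$.
\item[(iii)] Rademacher together with your Step~2 then gives $g'=I(X:B)_\xi\in[0,\log\dim B]$ almost everywhere.
\item[(iv)] Integrating over $[a,b]$, and using continuity at $0$ from $0\le g(s)\le s\log\dim B$, yields the global constant $\log\dim B$.
\end{itemize}
Replacing your Step~3 with this argument closes the gap; the rest of your proposal then goes through.
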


\begin{proof}
On each compact interval $[a,b]\subset(0,1)$,
\eqref{holevo:eq:coercivity} diverges uniformly at the boundary, whereas
the maximally mixed comparison state bounds the minimum by $(\dim B)^b$.
All minimizing comparison states therefore lie in a common compact
subset of the interior of their fixed support. The norms in
\eqref{holevo:eq:minimax} are smooth in $s$ there, with uniformly bounded
derivatives: each matrix has constant rank. Their finite maximum and
minimum, and hence the logarithm of the optimized value, are locally
Lipschitz. The same trial comparison state gives
\begin{equation}
 0\le s\RenyiHolevoQuantity{1/(1-s)}(\{\omega_B^x\}_x)\le s\log\dim B,
 \label{holevo:eq:dimension-bounds}
\end{equation}
so the extension at zero is continuous.

Fix a saddle pair $(p^\star,\sigma_B^\star)$ at $s$, and keep
$\omega_{XB}=\sum_xp_x^\star\ket x\!\bra x\otimes\omega_B^x$ fixed as the order
varies. By \eqref{holevo:eq:same-optimizer},
\begin{equation}
 u\RenyiHolevoQuantity{1/(1-u)}(\{\omega_B^x\}_x)
 \ge u\inf_{\nu_B}\widetilde D_{1/(1-u)}
       (\omega_{XB}\|\omega_X\otimes\nu_B),
 \label{holevo:eq:fixed-ensemble}
\end{equation}
with equality at $u=s$. The right side is continuously differentiable,
and its derivative is obtained with $\nu_B=\sigma_B^\star$ fixed, by
Fact~\ref{fact:ht-envelope}. At a differentiability
point of the left side the derivatives agree, since their difference
has a minimum zero at $s$. No differentiability of the optimizing
probabilities is required.

The right side of \eqref{holevo:eq:fixed-ensemble} is the infimum of
\begin{equation}
 (1-u)\log\sum_xp_x^\star
   \norm{\nu_B^{-u/2}\omega_B^x\nu_B^{-u/2}}_{1/(1-u)}^{1/(1-u)}.
 \label{holevo:eq:fixed-ensemble-objective}
\end{equation}
At $u=s$ and $\nu_B=\sigma_B^\star$, all active norms are equal by
\eqref{holevo:eq:active-letters}. The derivatives of the outer factor
and of the exponent $1/(1-u)$ cancel. Thus
\begin{align}
 \frac{\mathrm{d}}{\mathrm{d}s}[s\RenyiHolevoQuantity{1/(1-s)}(\{\omega_B^x\}_x)]
 &=\sum_xp_x^\star\left.\frac{\mathrm{d}}{\mathrm{d}u}\log
   \norm{(\sigma_B^\star)^{-u/2}\omega_B^x
           (\sigma_B^\star)^{-u/2}}_{1/(1-u)}
   \right|_{u=s}
 \label{holevo:eq:derivative-envelope-evaluation}\\
 &=\sum_xp_x^\star\Tr\!\left[\xi_{BE}^x
     (I_B\otimes\log\tau_E^{x,\star}
      -\log\sigma_B^\star\otimes I_E)\right]
 \label{holevo:eq:derivative-log-evaluation}\\
 &=H(B)_\xi-\sum_xp_x^\star H(E)_{\xi^x}
 =I(X:B)_\xi.
 \label{holevo:eq:derivative-evaluation}
\end{align}
The second equality differentiates \eqref{holevo:eq:variational},
holding its unique maximizer fixed, just as in Lemma~\ref{lem:derivative}.
The explicit formula \eqref{holevo:eq:environment-optimizer} is smooth
on its fixed support, so its variation contributes zero by stationarity.
The last line uses \eqref{holevo:eq:tilted-marginals} and purity of
each $\xi_{BE}^x$, which gives $H(E)_{\xi^x}=H(B)_{\xi^x}$.

Finally, $0\le I(X:B)_\xi\le\log\dim B$. Integrating the resulting
almost-everywhere derivative bound on $[a,b]\subset(0,1)$ gives
the asserted Lipschitz constant, and continuity at zero extends it to
$[0,1)$. In particular, the function is absolutely continuous on
every $[0,s]$ with $s<1$.
\end{proof}

\subsection{Proof of Theorem~\ref{thm:integral-representations}}
\label{sec:proof-integral-representations}

\begin{proof}
For a fixed quantum input, integrate
Eq.~\eqref{eq:derivative-identity} from zero to
$s_\alpha=(\alpha-1)/\alpha$. This gives the asserted entropy integral. If either
marginal is singular, restrict $B$ and $E$ to the supports of $\omega_B$ and
$\omega_E$, respectively; the state, entropies and optimized divergence are
unchanged, and Lemmas~\ref{lem:tilted-state} and~\ref{lem:derivative}
apply on these supports. The conditional R\'enyi entropy is invariant under
isometries on the purifying system, and the tilted state depends only on the
Stinespring output $U\rho_AU^\dagger$. Thus neither side depends on the chosen
purification. Taking the supremum over $\rho_A$ proves
Eq.~\eqref{eq:integral-coherent}.

For a fixed finite family of pure inputs, the function in
Lemma~\ref{holevo:lem:derivative} is absolutely continuous. Integrating its
almost-everywhere derivative identity from zero to $s$ gives the asserted
mutual-information integral for that family. At differentiability points the
identity holds for every saddle pair; choices on the null set of remaining
parameters do not affect the integral. Every finite mixed-input ensemble has
a finite pure-state refinement, and forgetting the refinement is a channel on
the classical label. Fact~\ref{fact:sandwiched-data-processing} therefore shows that restricting to pure
input families does not lower the supremum. Taking this supremum proves
Eq.~\eqref{holevo:eq:integral}.
\end{proof}

\section{Projection onto the Stinespring image}
\label{sec:projection}

We now prove the local projection and restoration estimates used to establish
continuity in Section~\ref{sec:continuity}. The matrix-power and restoration
lemmas are shared by the quantum and classical cases; the final entropy
accounting is then applied to coherent information and Holevo information
separately. We
write $h_2(t)=-t\log t-(1-t)\log(1-t)$ for the binary entropy.

\subsection{Local support estimates}\label{sec:projection-local}

Both tilted-state constructions may leave the Stinespring image. The next
matrix estimates control this departure using only a local dimension.
They allow singular states, which is essential because different
ensemble members may have different environment supports.

\begin{lemma}[Conditional matrix powers]\label{lem:conditional-powers}
For any state $\rho_{AB}$ and $0<s\le1$,
\begin{equation}
 \norm{(I_A\otimes\rho_B)^{-s/2}\rho_{AB}^{s/2}}_\infty
 \le(\dim A)^{s/2}.
 \label{eq:power-operator}
\end{equation}
Moreover, for every $-1\le s\le1$,
\begin{equation}
 \left\|\left[(I_A\otimes\rho_B)^{-s/2}\rho_{AB}^{s/2}-I\right]
                \sqrt{\rho_{AB}}\right\|_2
 \le\sqrt{2(\dim A-1)}\,|s|.
 \label{eq:power-difference}
\end{equation}
Negative powers are evaluated on the support and extended by zero.
The bounds are independent of the conditioning dimension $\dim B$
and of the smallest nonzero eigenvalues.
\end{lemma}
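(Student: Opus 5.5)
The plan is to reduce both estimates to two operator inequalities, $\rho_{AB}\le \dim A\,(I_A\otimes\rho_B)$ and $\Tr[\Pi_\rho(I_A\otimes\rho_B)]\le\dim A$ (with $\Pi_\rho$ the support projection of $\rho_{AB}$), and then to a scalar inequality. Write $\sigma:=I_A\otimes\rho_B$ and $d:=\dim A$. The first inequality is standard. One way to see it: $\rho_{AB}\le d\,\sigma$ follows from the pinching/twirling identity $\frac1{d^2}\sum_k (W_k\otimes I)\rho_{AB}(W_k\otimes I)^\dagger=\frac{I_A}{d}\otimes\rho_B$ over the $d^2$ discrete Weyl unitaries, in which the term $k=0$ is $\rho_{AB}/d^2$. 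Hence $\supp\rho_{AB}\subseteq\supp\sigma$, so negative powers of $\sigma$ on its support are harmless.

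\emph{Operator-norm bound \eqref{eq:power-operator}.} With $X=\sigma^{-s/2}\rho_{AB}^{s/2}$, one has $\norm{X}_\infty^2=\norm{\sigma^{-s/2}\rho_{AB}^{s}\sigma^{-s/2}}_\infty$. Since $t\mapsto t^s$ is operator monotone for $0<s\le1$, $\rho_{AB}\le d\sigma$ gives $\rho_{AB}^s\le d^s\sigma^s$. Conjugating by $\sigma^{-s/2}$ yields $\sigma^{-s/2}\rho_{AB}^s\sigma^{-s/2}\le d^s\Pi_\sigma$. Taking square roots gives the claim.

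\emph{Hilbert--Schmidt bound \eqref{eq:power-difference}.} First I would expand the square, using $\rho^{s/2}\sqrt\rho=\rho^{(1+s)/2}$ with $\rho=\rho_{AB}$. This gives
\begin{equation*}
 \left\|\left[\sigma^{-s/2}\rho^{s/2}-I\right]\sqrt{\rho}\right\|_2^2
 =f(s)-2f(s/2)+1,\qquad f(t):=\Tr[\rho^{1+t}\sigma^{-t}].
\end{equation*}
The cross term is real because its trace is cyclic. Next, diagonalize $\rho=\sum_ip_i\ket{a_i}\!\bra{a_i}$ and $\sigma=\sum_jq_j\ket{b_j}\!\bra{b_j}$, restricting to $p_i>0$ and to $q_j>0$. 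Then $f(t)=\mathbb E_P[r^t]$ with $P(i,j)=p_i|\langle a_i|b_j\rangle|^2$ and $r=p_i/q_j$. The support inclusion makes $P$ a probability distribution, so the expression equals $\mathbb E_P[(r^{s/2}-1)^2]$.

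Two moments control this. First, $\mathbb E_P[r]=\Tr[\rho^2\sigma^{-1}]\le d$, because $\rho^{1/2}\sigma^{-1}\rho^{1/2}\le d\,\Pi_\rho$ by the same operator inequality. Second, $\mathbb E_P[r^{-1}]=\sum_{i,j}|\langle a_i|b_j\rangle|^2q_j=\Tr[\Pi_\rho\sigma]\le\Tr[\Pi_\rho\,I_A\otimes\Pi_{\rho_B}\,\sigma]\le d$. Hence $\mathbb E_P[r+r^{-1}-2]\le2(d-1)$.

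It then remains to prove the scalar inequality $(x^{a}-1)^2\le 4a^2\,(x+x^{-1}-2)$ for $|a|\le\frac12$ and $x>0$, where $a=s/2$. Setting $x=e^y$, this reads $|e^{ay}-1|\le 4|a|\,|\sinh(y/2)|$. Both sides vanish at $y=0$. The inequality is symmetric under $(a,y)\mapsto(-a,-y)$, so it suffices to take $a\ge0$. For $y\ge0$, compare derivatives: $ae^{ay}\le 2a\cosh(y/2)$ holds since $ay\le y/2$. For $y\le0$, $|e^{ay}-1|\le a|y|\le 4a|\sinh(y/2)|$. Taking expectations then gives $\le 4a^2\cdot2(d-1)=2(d-1)s^2$, covering negative $s$ as well.

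I expect the main obstacle to be handling the supports carefully, so that $P$ is genuinely normalized and the trace-based bound on $\mathbb E[r^{-1}]$ is valid for singular $\rho_{AB}$. The elementary scalar inequality and the constant tracking are where an error could otherwise slip in.
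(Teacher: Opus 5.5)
Your proposal is correct and follows essentially the same route as the paper. For the operator-norm bound you use the reduction inequality $\rho_{AB}\le\dim A\,(I_A\otimes\rho_B)$ with L\"owner--Heinz. For the Hilbert--Schmidt bound you use the Nussbaum--Szko\l a weights, the scalar inequality $|x^{s/2}-1|^2\le s^2(x+x^{-1}-2)$, and the two moment bounds $\Tr[\rho_{AB}^2(I_A\otimes\rho_B)^{-1}]\le\dim A$ and $\Tr[\Pi_\rho(I_A\otimes\rho_B)]\le\dim A$. The differences are only cosmetic: you derive the reduction inequality by Weyl twirling, and you prove the scalar bound by a derivative comparison rather than by convexity of the exponential.
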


\begin{proof}
The reduction inequality
$\rho_{AB}\le(\dim A)(I_A\otimes\rho_B)$
\cite{tomamichel16_book} and L\"owner--Heinz monotonicity give the same
inequality for the $s$th powers, with coefficient $(\dim A)^s$.
Conjugating by
$(I_A\otimes\rho_B)^{-s/2}$ proves \eqref{eq:power-operator}.

For the second bound, let $(r_i,\ket{r_i})$ and
$(m_j,\ket{m_j})$ be the positive eigenpairs
of $\rho_{AB}$ and $I_A\otimes\rho_B$, respectively. Since
$\supp\rho_{AB}\subseteq\supp(I_A\otimes\rho_B)$, the weights
$r_i|\langle r_i|m_j\rangle|^2$ sum to one. Direct expansion, retaining
the indicated order of the matrix factors, gives
\begin{align}
 &\left\|\left[(I_A\otimes\rho_B)^{-s/2}
          \rho_{AB}^{s/2}-I\right]\sqrt{\rho_{AB}}\right\|_2^2
          \nonumber\\
 &\qquad=\sum_{i,j}r_i|\langle r_i|m_j\rangle|^2
       \left[\left(\frac{r_i}{m_j}\right)^{s/2}-1\right]^2.
 \label{eq:spectral-identity}
\end{align}
These are the Nussbaum--Szko\l a spectral weights
\cite{nussbaum2009chernoff}. For $x>0$ and $-1\le s\le1$, convexity of
the exponential gives
\begin{align}
 |x^{s/2}-1|
 &\le e^{|s||\ln x|/2}-1
 \le |s|\bigl(e^{|\ln x|/2}-1\bigr)
 \le |s|\left|\sqrt{x}-\frac1{\sqrt{x}}\right|,
 \label{eq:scalar-bound-first-order}\\
 |x^{s/2}-1|^2&\le s^2(x+x^{-1}-2).
 \label{eq:scalar-bound}
\end{align}
Thus the squared norm is bounded by
\begin{equation}
 s^2\left[\Tr[\rho_{AB}^2(I_A\otimes\rho_B)^{-1}]
                 +\Tr[I_A\otimes\rho_B]-2\right]
 \le2(\dim A-1)s^2.
 \label{eq:power-second-moment}
\end{equation}
For the inverse-ratio term, omitting zero eigenvalues of $\rho_{AB}$
can only reduce the trace $\Tr[I_A\otimes\rho_B]=\dim A$.
For the first term, \eqref{eq:power-operator} at $s=1$ gives
\begin{equation}
 \Tr[\rho_{AB}^2(I_A\otimes\rho_B)^{-1}]
 =\left\|(I_A\otimes\rho_B)^{-1/2}\rho_{AB}\right\|_2^2
 \le\left\|(I_A\otimes\rho_B)^{-1/2}\sqrt{\rho_{AB}}\right\|_\infty^2
       \left\|\sqrt{\rho_{AB}}\right\|_2^2
 \le\dim A.
\label{eq:conditional-power-first-moment-bound}
\end{equation}
This argument also applies when either marginal or the joint state is singular.
\end{proof}

Fix a block length $n$. For coherent information, let
$\omega_{RB^nE^n}$ be the Stinespring output of a pure input extension
$\psi_{RA^n}$. For $0<s<1$, let $\sigma_{B^n}^\star$ and
$\tau_{E^n}^\star$ be the optimizers from
Lemma~\ref{lem:tilted-state} and define
\begin{equation}
 \zeta_{B^nE^n}
 =\frac{((\sigma_{B^n}^\star)^{-s/2}\otimes(\tau_{E^n}^\star)^{s/2})
          \omega_{B^nE^n}
       ((\sigma_{B^n}^\star)^{-s/2}\otimes(\tau_{E^n}^\star)^{s/2})}
      {\Tr[\omega_{B^nE^n}((\sigma_{B^n}^\star)^{-s}
                           \otimes(\tau_{E^n}^\star)^{s})]}.
 \label{eq:block-quantum-tilted}
\end{equation}
Thus $\zeta_{B^n}=\sigma_{B^n}^\star$ and
$\zeta_{E^n}=\tau_{E^n}^\star$.

For Holevo information, choose a finite family of pure inputs and write
\begin{equation}
 \omega_{B^nE^n}^x=U^{\otimes n}\psi_{A^n}^x(U^{\otimes n})^\dagger,
 \qquad \omega_{B^n}^x=\mathcal N^{\otimes n}(\psi_{A^n}^x).
 \label{holevo:eq:stinespring-outputs}
\end{equation}
Let $p^\star$, $\sigma_{B^n}^\star$ and
$\{\tau_{E^n}^{x,\star}\}_x$ be the saddle data from
Lemma~\ref{holevo:lem:tilted-states}, and define
\begin{align}
 \xi_{B^nE^n}^x
 &=\frac{((\sigma_{B^n}^\star)^{-s/2}\otimes
          (\tau_{E^n}^{x,\star})^{s/2})
        \omega_{B^nE^n}^x
        ((\sigma_{B^n}^\star)^{-s/2}\otimes
          (\tau_{E^n}^{x,\star})^{s/2})}
       {\Tr[\omega_{B^nE^n}^x
         ((\sigma_{B^n}^\star)^{-s}\otimes
          (\tau_{E^n}^{x,\star})^s)]},
 \label{holevo:eq:block-tilted-conditional}\\
 \xi_{XB^nE^n}&=\sum_xp_x^\star\ket x\!\bra x\otimes\xi_{B^nE^n}^x,
 \qquad
 \tau_{XE^n}^\star=\sum_xp_x^\star\ket x\!\bra x
                      \otimes\tau_{E^n}^{x,\star}.
 \label{holevo:eq:block-tilted}
\end{align}
Thus $\xi_{B^n}=\sigma_{B^n}^\star$ and
$\xi_{XE^n}=\tau_{XE^n}^\star$.

For each $i$, let $P_i$ be $UU^\dagger$ on $B_iE_i$, tensored with the
identity on $B_{-i}E_{-i}$. The projections commute.
The symbols $B_{-i}$ and $E_{-i}$ denote all sites except $i$.

The main new step in the following lemma is to pull the exact
Stinespring-support identity through the tilted-state powers and
localize the resulting deviation to a single channel use. This makes
Lemma~\ref{lem:conditional-powers} applicable with all remaining channel
uses as conditioning systems, so that the support error is bounded
independently of $n$.

\begin{lemma}[Local support estimate]\label{lem:local-support}
Set
\begin{equation}
 \kappa:=\sqrt{\dim B-1}+(\dim B)^{1/4}\sqrt{\dim E-1}.
 \label{eq:local-support-kappa}
\end{equation}
For every $0<s\le1/2$ and every $i$, the coherent-information tilted
state in Eq.~\eqref{eq:block-quantum-tilted} satisfies
\begin{equation}
 \Tr[(I-P_i)\zeta_{B^nE^n}]\le2\kappa^2s^2.
 \label{eq:local-support-quantum}
\end{equation}
For the same range of $s$, the Holevo tilted state in
Eq.~\eqref{holevo:eq:block-tilted} satisfies
\begin{equation}
 \Tr[(I_X\otimes(I-P_i))\xi_{XB^nE^n}]\le2\kappa^2s^2.
 \label{holevo:eq:local-support}
\end{equation}
\end{lemma}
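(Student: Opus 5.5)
The plan is to pull the exact support identity $(I-P_i)\ket{\omega}=0$ through the tilt, and to localize the resulting error at site $i$ using Lemma~\ref{lem:conditional-powers}. In the quantum case the conditioning systems are $B_{-i}$ and $E_{-i}$. I work with purifications. Let $\ket{\omega}_{RB^nE^n}$ be the Stinespring output, and let $\ket{\zeta}\propto(I_R\otimes\zeta_{B^n}^{-s/2}\otimes\zeta_{E^n}^{s/2})\ket{\omega}$ be the purification of $\zeta$ used in Lemma~\ref{lem:tilted-state}. This uses $\sigma^\star=\zeta_{B^n}$ and $\tau^\star=\zeta_{E^n}$ from Eq.~\eqref{eq:tilted-marginals}. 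Inverting on supports gives $\ket{\omega}\propto(\zeta_{B^n}^{s/2}\otimes\zeta_{E^n}^{-s/2})\ket{\zeta}$, hence
\[
(I-P_i)\bigl(\zeta_{B^n}^{s/2}\otimes\zeta_{E^n}^{-s/2}\bigr)\ket{\zeta}=0 .
\]
Since $\Tr[(I-P_i)\zeta]=\norm{(I-P_i)\ket{\zeta}}^2$, it suffices to bound this vector norm by $\sqrt2\,\kappa s$.

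\textbf{Localization.} I factor
\[
\zeta_{B^n}^{s/2}=(I_{B_i}\otimes\zeta_{B_{-i}})^{s/2}X,\qquad X:=(I_{B_i}\otimes\zeta_{B_{-i}})^{-s/2}\zeta_{B^n}^{s/2},
\]
\[
\zeta_{E^n}^{-s/2}=(I_{E_i}\otimes\zeta_{E_{-i}})^{-s/2}Y,\qquad Y:=(I_{E_i}\otimes\zeta_{E_{-i}})^{s/2}\zeta_{E^n}^{-s/2}.
\]
The first factors act only on $B_{-i}E_{-i}$, so they commute with $P_i$. They are also invertible on the relevant supports, because $\supp\zeta_{B^n}\subseteq\supp(I\otimes\zeta_{B_{-i}})$ and similarly for $E$. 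Stripping them off yields $(I-P_i)(X\otimes Y)\ket{\zeta}=0$. I expect the support bookkeeping here to need some care: one checks that $(X\otimes Y)\ket\zeta$ lies in the support of the stripped factor. Consequently
\[
\norm{(I-P_i)\ket\zeta}=\norm{(I-P_i)(I-X\otimes Y)\ket\zeta}\le\norm{(X\otimes Y-I)\ket\zeta}.
\]
Since $X$ and $Y$ act on different systems, I split
\[
X\otimes Y-I=(X\otimes I)(I\otimes(Y-I))+(X-I)\otimes I .
\]

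\textbf{Applying Lemma~\ref{lem:conditional-powers}.} For the second term, $\norm{(X-I)\ket\zeta}=\norm{(X-I)\sqrt{\zeta_{B^n}}}_2\le\sqrt{2(\dim B-1)}\,s$ by Eq.~\eqref{eq:power-difference}, with $A=B_i$ conditioned on $B_{-i}$. For the first term, $\norm{(I\otimes(Y-I))\ket\zeta}=\norm{(Y-I)\sqrt{\zeta_{E^n}}}_2\le\sqrt{2(\dim E-1)}\,s$, by Eq.~\eqref{eq:power-difference} at order $-s$ with $A=E_i$. Moreover $\norm{X}_\infty\le(\dim B)^{s/2}\le(\dim B)^{1/4}$ by Eq.~\eqref{eq:power-operator} and $s\le1/2$. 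Summing gives $\norm{(I-P_i)\ket\zeta}\le\sqrt2\,\kappa s$, which is Eq.~\eqref{eq:local-support-quantum}.

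\textbf{Classical case.} I repeat the argument letterwise with the pure states $\ket{\xi^x}$. These share the comparison state $\sigma^\star=\xi_{B^n}$, but each has its own $\tau^{x,\star}=\xi^x_{E^n}$. For each $x$ I use the common operator $X$ built from $\xi_{B^n}$ and a letter-dependent $Y_x$ built from $\xi^x_{E^n}$. The $E$-term is bounded letterwise exactly as before. The $B$-term is only controlled on average: since $\sum_xp_x^\star\xi^x_{B^n}=\xi_{B^n}$,
\[
\sum_xp_x^\star\norm{(X-I)\ket{\xi^x}}^2=\norm{(X-I)\sqrt{\xi_{B^n}}}_2^2\le2(\dim B-1)s^2 .
\]
Minkowski's inequality in $\ell^2(p^\star)$ then gives
\[
\Bigl(\sum_xp_x^\star\norm{(I-P_i)\ket{\xi^x}}^2\Bigr)^{1/2}\le\sqrt2\,\kappa s,
\]
which is Eq.~\eqref{holevo:eq:local-support}. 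This bound does not depend on the ensemble size.

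I expect the main obstacle to be the localization step: correctly inverting the tilt on possibly singular supports and justifying that the non-local factors can be stripped off. The Schatten estimates themselves are direct consequences of Lemma~\ref{lem:conditional-powers}.
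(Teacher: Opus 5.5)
Your proposal is correct and follows the paper's proof essentially step for step: invert the tilt to get the exact support identity, localize through the commuting factor on $B_{-i}E_{-i}$, split $X\otimes Y-I$, and apply Lemma~\ref{lem:conditional-powers} at orders $s$ and $-s$. The paper left-multiplies by that factor rather than stripping it off, which avoids the support bookkeeping you flag, and using a purification vector in place of $\sqrt{\zeta}$ is purely cosmetic. In the Holevo case the paper combines your letterwise $Y_x$ into one block operator built from $\tau^\star_{XE^n}$ and conditions on $XE_{-i}$; this is exactly your Minkowski average over $p^\star$, written as a Hilbert--Schmidt triangle inequality.
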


\begin{proof}
We first prove the coherent-information statement. The original Stinespring
output $\omega_{B^nE^n}$ is supported on $P_i$. Conjugating
Eq.~\eqref{eq:block-quantum-tilted} by
$(\sigma_{B^n}^\star)^{s/2}\otimes(\tau_{E^n}^\star)^{-s/2}$ therefore gives
\begin{equation}
 (I-P_i)((\sigma_{B^n}^\star)^{s/2}
          \otimes(\tau_{E^n}^\star)^{-s/2})
                    \sqrt{\zeta_{B^nE^n}}=0.
 \label{eq:inverse-support-quantum}
\end{equation}
Multiply Eq.~\eqref{eq:inverse-support-quantum} on the left by
$(\sigma_{B_{-i}}^\star)^{-s/2}
 \otimes(\tau_{E_{-i}}^\star)^{s/2}$, with identities
on $B_iE_i$. This operator commutes with $P_i$. For this calculation put
\begin{align}
 C_{B^n}&=(I_{B_i}\otimes\sigma_{B_{-i}}^\star)^{-s/2}
           (\sigma_{B^n}^\star)^{s/2},
 \label{eq:two-block-operator-B-quantum}\\
 C_{E^n}&=(I_{E_i}\otimes\tau_{E_{-i}}^\star)^{s/2}
          (\tau_{E^n}^\star)^{-s/2}.
 \label{eq:two-block-operators-quantum}
\end{align}
Then $(I-P_i)(C_{B^n}\otimes C_{E^n})\sqrt\zeta=0$. Consequently,
\begin{align}
 \sqrt{\Tr[(I-P_i)\zeta]}
 &\le\norm{(I-C_{B^n}\otimes C_{E^n})\sqrt\zeta}_2
 \label{eq:local-estimate-quantum-projection}\\
 &\le\norm{C_{B^n}}_\infty\norm{(C_{E^n}-I)\sqrt{\tau_{E^n}^\star}}_2
       +\norm{(C_{B^n}-I)\sqrt{\sigma_{B^n}^\star}}_2
 \label{eq:local-estimate-quantum-triangle}\\
 &\le\sqrt2\,s\left[(\dim B)^{s/2}\sqrt{\dim E-1}
                         +\sqrt{\dim B-1}\right].
 \label{eq:local-estimate-quantum}
\end{align}
The second line expands
$C_{B^n}\otimes C_{E^n}-I=C_{B^n}\otimes(C_{E^n}-I)+(C_{B^n}-I)\otimes I$
and uses the exact marginals of $\zeta$. The last line applies
Lemma~\ref{lem:conditional-powers} at parameters $s$ and $-s$, with systems
$B_i,B_{-i}$ and $E_i,E_{-i}$. Since $s\le1/2$, the expression in
parentheses is at most $\kappa$. Squaring proves
Eq.~\eqref{eq:local-support-quantum}.

For the Holevo state, each $\omega_{B^nE^n}^x$ is supported on $P_i$.
On the $x$ block, conjugating Eq.~\eqref{holevo:eq:block-tilted} by
$(\sigma_{B^n}^\star)^{s/2}\otimes(\tau_{XE^n}^\star)^{-s/2}$ gives
$2^{-s\RenyiHolevoQuantity{1/(1-s)}(\{\omega_{B^n}^x\}_x)}
(p_x^\star)^{1-s}\omega_{B^nE^n}^x$. Hence
\begin{equation}
 (I_X\otimes(I-P_i))((\sigma_{B^n}^\star)^{s/2}
                     \otimes(\tau_{XE^n}^\star)^{-s/2})
                    \sqrt{\xi_{XB^nE^n}}=0.
 \label{holevo:eq:inverse-support}
\end{equation}
Here and below tensor factors are placed on their labelled systems.
Define
\begin{align}
 D_{B^n}&=(I_{B_i}\otimes\sigma_{B_{-i}}^\star)^{-s/2}
          (\sigma_{B^n}^\star)^{s/2},
 \label{holevo:eq:two-block-operator-B}\\
 D_{XE^n}&=(I_{E_i}\otimes\tau_{XE_{-i}}^\star)^{s/2}
         (\tau_{XE^n}^\star)^{-s/2}.
 \label{holevo:eq:two-block-operators}
\end{align}
Repeating the preceding expansion, now using
$\xi_{B^n}=\sigma_{B^n}^\star$ and
$\xi_{XE^n}=\tau_{XE^n}^\star$, gives
\begin{align}
 \sqrt{\Tr[(I_X\otimes(I-P_i))\xi]}
 &\le\norm{D_{B^n}}_\infty\norm{(D_{XE^n}-I)\sqrt{\tau_{XE^n}^\star}}_2
       +\norm{(D_{B^n}-I)\sqrt{\sigma_{B^n}^\star}}_2
 \label{holevo:eq:local-estimate-triangle}\\
 &\le\sqrt2\,s\left[(\dim B)^{s/2}\sqrt{\dim E-1}
                         +\sqrt{\dim B-1}\right]
 \le\sqrt2\,\kappa s.
 \label{holevo:eq:local-estimate}
\end{align}
The two uses of Lemma~\ref{lem:conditional-powers} have local systems
$B_i$ and $E_i$ and conditioning systems $B_{-i}$ and $XE_{-i}$,
respectively. Squaring proves Eq.~\eqref{holevo:eq:local-support}.
\end{proof}

An $O(s)$ change in the state-weighted norm thus gives an $O(s^2)$
support error. The conditioning systems may contain all other channel
uses and the classical label $X$, yet only $\dim B_i$ and $\dim E_i$
enter the bound. No product-state assumption or bound on the alphabet
size is required, and no global trace-distance approximation is made.

\subsection{Restoring the Stinespring support}

The following argument adapts the Leung--Smith telescoping technique
for continuity of regularized channel capacities
\cite{leung2009continuity}: we restore one local Stinespring constraint
at a time and use conditional-entropy continuity to charge only the
dimension of the repaired system. The same restoration channel works
for both tasks and preserves the joint marginal of all unmodified
systems. Summing the local entropy costs gives a bound linear in the
block length.

Two external continuity estimates enter the argument. We state them in the
normalizations used below.

\begin{fact}[Fuchs--van de Graaf inequality]
\label{fact:fuchs-van-de-graaf}
For states $\rho$ and $\sigma$, let
$F(\rho,\sigma)=\norm{\sqrt\rho\sqrt\sigma}_1^2$ be their fidelity. Then
\begin{equation}
 \frac12\norm{\rho-\sigma}_1
 \le\sqrt{1-F(\rho,\sigma)}.
 \label{eq:fact-fuchs-van-de-graaf}
\end{equation}
This is the right-hand inequality in
Ref.~\cite[Theorem~1]{fuchs1999cryptographic}.
\end{fact}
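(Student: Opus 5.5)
The plan is the standard reduction to pure states through Uhlmann's theorem, followed by monotonicity of the trace distance under partial trace. First I will treat the pure-state case exactly. For unit vectors $\ket\psi,\ket\varphi$, the operator $\psi-\varphi$ (with $\psi=\ket\psi\!\bra\psi$ and $\varphi=\ket\varphi\!\bra\varphi$) is Hermitian, traceless and of rank at most two. It is supported on $\operatorname{span}\{\ket\psi,\ket\varphi\}$, so its eigenvalues are $\pm\lambda$ for some $\lambda\ge0$.

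To find $\lambda$, I will compute
\begin{equation}
 \Tr[(\psi-\varphi)^2]=2-2|\langle\psi|\varphi\rangle|^2 .
\end{equation}
This gives $2\lambda^2=2(1-|\langle\psi|\varphi\rangle|^2)$. Since $\norm{\sqrt\psi\sqrt\varphi}_1^2=|\langle\psi|\varphi\rangle|^2$, we get the exact identity
\begin{equation}
 \tfrac12\norm{\psi-\varphi}_1=\lambda=\sqrt{1-F(\psi,\varphi)} .
\end{equation}

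Next, for general states $\rho$ and $\sigma$ on a system $S$, I will invoke Uhlmann's theorem. It provides purifications $\ket\psi_{SR}$ of $\rho$ and $\ket\varphi_{SR}$ of $\sigma$ on a common reference $R$ with $|\langle\psi|\varphi\rangle|^2=F(\rho,\sigma)$. The construction is concrete: write $\ket\psi=(\sqrt\rho\otimes I)\ket\Gamma$ and $\ket\varphi=(\sqrt\sigma\, V\otimes I)\ket\Gamma$, where $\ket\Gamma$ is the unnormalized maximally entangled vector. Then $\langle\psi|\varphi\rangle=\Tr[\sqrt\rho\sqrt\sigma V]$, and choosing the unitary $V$ from the polar decomposition of $\sqrt\rho\sqrt\sigma$ makes this equal $\norm{\sqrt\rho\sqrt\sigma}_1$.

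Finally, the partial trace $\Tr_R$ is a quantum channel, and the trace norm of a Hermitian operator is nonincreasing under positive trace-preserving maps. Therefore
\begin{equation}
 \tfrac12\norm{\rho-\sigma}_1\le\tfrac12\norm{\psi-\varphi}_1=\sqrt{1-F(\rho,\sigma)} .
\end{equation}

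No step is a real obstacle. The only points needing care are the normalization convention, since the paper's $F$ is the squared fidelity so that it matches $|\langle\psi|\varphi\rangle|^2$, and the choice of $V$ in Uhlmann's construction, which is just a polar decomposition.
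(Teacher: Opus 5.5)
Your proof is correct and self-contained. The rank-two spectral computation gives the pure-state equality. The explicit purifications $(\sqrt\rho\otimes I)\ket\Gamma$ and $(\sqrt\sigma V\otimes I)\ket\Gamma$, with $V=W^\dagger$ taken from $\sqrt\rho\sqrt\sigma=W|\sqrt\rho\sqrt\sigma|$, attain overlap $\norm{\sqrt\rho\sqrt\sigma}_1$. Contractivity of the trace norm under $\Tr_R$ then finishes the argument.

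The paper gives no proof of its own; it imports the inequality from Fuchs and van de Graaf. As far as I recall, their argument works with measurement statistics rather than purifications. A Helstrom measurement turns $\frac12\norm{\rho-\sigma}_1$ into a total-variation distance $\frac12\sum_b|p_b-q_b|$. Applying Cauchy--Schwarz to $|p_b-q_b|=|\sqrt{p_b}-\sqrt{q_b}|(\sqrt{p_b}+\sqrt{q_b})$ bounds this by $\sqrt{1-B^2}$, where $B=\sum_b\sqrt{p_bq_b}$. Finally, $B\ge\sqrt{F(\rho,\sigma)}$ holds for every measurement. That route avoids purifications altogether and gives the companion bound $1-\sqrt{F(\rho,\sigma)}\le\frac12\norm{\rho-\sigma}_1$ in the same framework. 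Its price is that it needs the measurement characterizations of both quantities.

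Your route is the textbook one, and it fits the way the paper uses this fact particularly well. In the support-restoration lemma, the authors already exhibit purifications of $\vartheta^{(i)}$ and $\vartheta^{(i-1)}$ on a common space with overlap $\Tr[P_i\vartheta^{(i-1)}]$. Your pure-state identity plus monotonicity therefore yields the one-step trace-distance bound directly. It does so without passing through the fidelity and without needing the optimal purifications from Uhlmann's theorem.
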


\begin{fact}[Conditional-entropy continuity]
\label{fact:conditional-entropy-continuity}
Let $\rho_{AC}$ and $\sigma_{AC}$ be states, let $d_A=\dim A\ge2$, and suppose
\begin{equation}
 \frac12\norm{\rho_{AC}-\sigma_{AC}}_1\le\delta
 \quad\text{with}\quad 0\le\delta\le1-d_A^{-2}.
\end{equation}
Then, in base-two units,
\begin{equation}
 \left|H(A|C)_\rho-H(A|C)_\sigma\right|
 \le\delta\log(d_A^2-1)+h_2(\delta).
 \label{eq:fact-conditional-entropy-continuity}
\end{equation}
This is the sharp conditional-entropy continuity bound of
Ref.~\cite[Theorem~1.1]{berta2026sharpcontinuity}; it also follows by taking
the order-one limit of the R\'enyi continuity bounds in
Ref.~\cite[Eq.~(20)]{chengliu2026sharp}.
\end{fact}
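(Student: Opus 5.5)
Since this is an external result \cite[Theorem~1.1]{berta2026sharpcontinuity}, I only sketch how I would prove it; I have not verified that the symmetry reduction below attains the sharp constant. The plan follows the Alicki--Fannes--Winter coupling strategy \cite{alicki2004continuity,winter2016tight}, refined to reach the constant $\log(d_A^2-1)$.

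\emph{Coupling decomposition.} Let $\delta_0=\frac12\norm{\rho_{AC}-\sigma_{AC}}_1\le\delta$. Take the Jordan decomposition $\rho-\sigma=\delta_0(\Delta_+-\Delta_-)$, where $\Delta_\pm$ are states with orthogonal supports. Then
\[
 \omega:=\frac{\rho-\delta_0\Delta_+}{1-\delta_0}=\frac{\sigma-\delta_0\Delta_-}{1-\delta_0}
\]
is a state. This gives $\rho=(1-\delta_0)\omega+\delta_0\Delta_+$ and $\sigma=(1-\delta_0)\omega+\delta_0\Delta_-$.

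\emph{Concavity and almost-convexity.} Conditional entropy is concave. Adjoining a classical flag to $C$ and then discarding it shows that it is convex up to $h_2(\delta_0)$. Together these give
\[
 H(A|C)_\rho-H(A|C)_\sigma\le\delta_0\bigl[H(A|C)_{\Delta_+}-H(A|C)_{\Delta_-}\bigr]+h_2(\delta_0).
\]
With the trivial range $[-\log d_A,\log d_A]$ this already yields $2\delta\log d_A+h_2(\delta)$. Exchanging $\rho$ and $\sigma$ gives the other sign.

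\emph{Sharp constant.} The improvement must use the orthogonality $\Delta_+\perp\Delta_-$. The extreme pair, a maximally entangled $\Delta_-$ and a product $\Delta_+=I_A/d_A\otimes\Delta_{+,C}$, is not orthogonal, since the two states overlap with weight $1/d_A^2$. Exploiting this requires rebalancing weight between $\omega$ and $\Delta_\pm$, so that $\omega$ absorbs the overlapping part.

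First I would apply the same decomposition with a purified conditioning system. The aim is to recast the worst case as an optimization over a fixed reference system, with $C$ of dimension at most $d_A^2$ after purification. Next I would symmetrize with the $U\otimes\bar U$ twirl and try to reduce to isotropic states. Caution is needed here, because conditional entropy is not invariant under the twirl. What should hold is that the twirl acts locally on $A$ and as an invertible conjugation on $C$ after purification, and that the extremal states for a concave/convex difference can be taken on a symmetric face. For isotropic states the entropies are explicit: $H(A|C)=-\log d_A+h$-type terms of the form $\delta\log(d_A^2-1)$ plus a binary entropy. This is the source of $\log(d_A^2-1)$ and of the validity range $\delta\le1-d_A^{-2}$. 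Finally, monotonicity of $\delta\log(d_A^2-1)+h_2(\delta)$ on that range lets me replace $\delta_0$ by $\delta$.

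\emph{Main obstacle.} The hardest step is justifying the symmetry reduction. If the twirl argument fails, I would instead bound the difference directly, writing it as $D(\sigma_{AC}\|I_A\otimes\sigma_C)-D(\rho_{AC}\|I_A\otimes\rho_C)$. I would then optimize over a two-outcome measurement distinguishing $\Delta_+$ from $\Delta_-$, using the fact that any state orthogonal to a given state has fidelity with $I_A/d_A\otimes(\cdot)$ bounded away from the extreme. Alternatively, as the statement notes, one can take the order-one limit of the Rényi continuity bounds in \cite[Eq.~(20)]{chengliu2026sharp}.
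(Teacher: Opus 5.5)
The paper gives no proof of this Fact. It imports it from \cite[Theorem~1.1]{berta2026sharpcontinuity}, or from the order-one limit of \cite[Eq.~(20)]{chengliu2026sharp}, so your closing fallback is exactly the paper's treatment. The self-contained sketch you put in front of it does not prove the statement, however, and it fails at the first step.

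\textbf{The coupling step.} For non-commuting $\rho_{AC},\sigma_{AC}$, the operator $\rho-\delta_0\Delta_+=\rho-(\rho-\sigma)_+$ is in general not positive, so your $\omega$ need not be a state. Suppose $\rho=\ket{\psi}\!\bra{\psi}$ is pure. Then $0\le\rho-\delta_0\Delta_+$ forces $\Delta_+=\rho$, which requires $\ket{\psi}$ to be an eigenvector of $\sigma$. Concretely, take $\rho=\ket{0}\!\bra{0}$ and $\sigma=\ket{+}\!\bra{+}$ with trivial $C$, so that $\delta_0=1/\sqrt2\le3/4$. Then $\bra{1}(\rho-\delta_0\Delta_+)\ket{1}<0$.

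This maximal coupling is a classical device. The quantum-safe version of Alicki--Fannes--Winter \cite{alicki2004continuity,winter2016tight} mixes instead of subtracting:
$\gamma=(\rho+\delta_0\Delta_-)/(1+\delta_0)=(\sigma+\delta_0\Delta_+)/(1+\delta_0)$.
Concavity plus almost-convexity then give only $2\delta\log d_A+(1+\delta)h_2\bigl(\delta/(1+\delta)\bigr)$. This is weaker than the target in both terms. So even your intermediate bound $2\delta\log d_A+h_2(\delta)$ is not justified by your argument. It is true, but only as a corollary of the sharp theorem, or when $\rho$ and $\sigma$ commute.

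\textbf{The sharp-constant step.} This step has no working mechanism either, for three reasons.
\begin{itemize}
\item The conditioning system $C$ is arbitrary and not isomorphic to $A$. Purifying does not shrink it. Passing to purifications also loses the parameter: Uhlmann controls $\sqrt{1-F}$, not $\delta$.
\item The $U\otimes\bar U$ twirl is a mixture of local unitaries, so by concavity it raises $H(A|C)$ of both states by uncontrolled amounts. It contracts the trace distance but neither preserves nor upper-bounds the entropy difference. Hence the worst case over all pairs cannot be pushed onto isotropic pairs. A difference of concave functions under a trace-distance constraint obeys no ``symmetric-face'' extremality principle.
\item The isotropic computation does not bound anything. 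It takes $\rho=\Phi$ maximally entangled on $A\otimes C$ with $C\cong A$, and $\sigma=(1-\delta)\Phi+\delta(I-\Phi)/(d_A^2-1)$. The difference is then exactly $\delta\log(d_A^2-1)+h_2(\delta)$, which only shows the bound is attained, i.e.\ optimal, not that it holds.
\end{itemize}

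The missing ingredient is a genuinely quantum argument that achieves both the $h_2(\delta)$ term and the constant $\log(d_A^2-1)$ with arbitrary quantum side information. That is precisely the content of the cited theorem, so without it your proposal reduces to the citation.

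For the paper's purposes, note that the correctly derived Winter bound would already suffice for Theorem~\ref{thm:main} in Lemma~\ref{lem:restoration}, where $\delta\le1/2$. It would change only the moduli $\varepsilon_{\rm q},\varepsilon_{\rm c}$. It does not, however, prove the sharp Fact as stated.
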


\begin{lemma}[Support restoration]\label{lem:restoration}
Fix $n\ge1$. Let $\dim B,\dim E\ge2$, and let $\vartheta_{XB^nE^n}$ be a state classical on
$X$, which may be one dimensional. Suppose
$\Tr[(I-P_i)\vartheta_{XB^nE^n}]\le\delta^2/2$ for every $i$, with $0\le\delta\le1/2$.
Then
\begin{align}
 H(B^n)_\vartheta-H(E^n)_\vartheta
 &\le\CoherentInformation(\mathcal N^{\otimes n})\nonumber\\
 &\quad+n\delta\log\!\left[((\dim B)^2-1)((\dim E)^2-1)\right]
       +2n h_2(\delta),
 \label{eq:entropy-restoration}\\
 I(X:B^n)_\vartheta
 &\le\HolevoQuantity(\mathcal N^{\otimes n})
       +2n\delta\log\!\left((\dim B)^2-1\right)+2n h_2(\delta).
 \label{holevo:eq:entropy-restoration}
\end{align}
\end{lemma}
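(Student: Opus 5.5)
We will restore the Stinespring constraints one channel use at a time. Fix a pure state $\phi_{BE}\in\operatorname{im}U$, for instance $U\ket{0}$. For each $i$, define the restoration channel $\mathcal R_i$ acting only on $B_iE_i$ by
\begin{equation}
 \mathcal R_i(Y)=P_iYP_i+\Tr_{B_iE_i}[(I-P_i)Y]\otimes\phi_{B_iE_i},
\end{equation}
tensored with the identity on $X$ and on all other sites. Set $\vartheta^{(0)}=\vartheta$ and $\vartheta^{(i)}=\mathcal R_i(\vartheta^{(i-1)})$.

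The projections $P_j$ commute with $P_i$. Moreover, $\mathcal R_i$ acts trivially on $B_jE_j$ for $j\neq i$. Hence $\Tr[(I-P_j)\vartheta^{(i)}]=\Tr[(I-P_j)\vartheta^{(i-1)}]$ for every $j\neq i$. By induction, each intermediate state still satisfies $\Tr[(I-P_i)\vartheta^{(i-1)}]\le\delta^2/2$. Meanwhile $\vartheta^{(i)}$ is supported on $P_i$, and all later steps preserve this. The final state $\vartheta^{(n)}$ is therefore supported on $\bigotimes_iUU^\dagger=\operatorname{im}U^{\otimes n}$ and remains classical on $X$.

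Write $\vartheta^{(n)}=\sum_xq_x\ket x\!\bra x\otimes\vartheta^{(n),x}$. For each $x$, the state $\vartheta^{(n),x}_{B^nE^n}=U^{\otimes n}\rho^x(U^{\otimes n})^\dagger$ for some input $\rho^x$ on $A^n$. Two consequences follow.
\begin{itemize}
 \item Coherent information: the marginal on $B^nE^n$ is a Stinespring output, so $H(B^n)-H(E^n)\le\CoherentInformation(\mathcal N^{\otimes n})$.
 \item Holevo information: the $XB^n$ marginal is a c-q output of the ensemble $\{q_x,\rho^x\}$, so $I(X:B^n)\le\HolevoQuantity(\mathcal N^{\otimes n})$.
\end{itemize}

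Next we bound the distance moved at each step. Write $Y=\vartheta^{(i-1)}$ and $\epsilon=\Tr[(I-P_i)Y]\le\delta^2/2$. We compare $Y$ with $\mathcal R_i(Y)$ through the fidelity. Since the $P_i$-block of $\mathcal R_i(Y)$ dominates $P_iYP_i$, we have
\begin{equation}
 F(Y,\mathcal R_i(Y))\ge F\bigl(Y,P_iYP_i\bigr)\ge(\Tr[P_iY])^2\ge 1-2\epsilon .
\end{equation}
The middle inequality follows from the standard bound $\norm{\sqrt Y\sqrt{P_iYP_i}}_1\ge\Tr[P_iY]$, because $P_i\sqrt{Y}$ already matches. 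If this fidelity estimate is too lossy, we will fall back on the gentle measurement lemma, which gives trace distance at most $\sqrt{\epsilon}$ plus an $\epsilon$ term for the replaced component; this should still be $\le\delta$ after adjusting constants. Fact~\ref{fact:fuchs-van-de-graaf} then gives
\begin{equation}
 \tfrac12\norm{\vartheta^{(i)}-\vartheta^{(i-1)}}_1\le\sqrt{2\epsilon}\le\delta .
\end{equation}
Since $\delta\le1/2\le1-d^{-2}$ for $d\ge2$, Fact~\ref{fact:conditional-entropy-continuity} applies. Getting the trace-distance bound exactly $\delta$ from $\epsilon\le\delta^2/2$ is the step I expect to need the most care.

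For the entropy accounting, $\mathcal R_i$ is a channel on $B_iE_i$ alone, so the marginal on $XB_{-i}E_{-i}$ is unchanged. The chain rule gives the telescoping identity \eqref{eq:outline-conditional-entropy-telescoping} for $B^n$, and likewise for $E^n$ with $H(E_i|E_{-i})$. Fact~\ref{fact:conditional-entropy-continuity}, applied with $A=B_i$, $C=B_{-i}$ and then with $A=E_i$, $C=E_{-i}$, bounds the change in $H(B^n)-H(E^n)$ at step $i$ by
\begin{equation}
 \delta\log\bigl[((\dim B)^2-1)((\dim E)^2-1)\bigr]+2h_2(\delta).
\end{equation}

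For the Holevo quantity, write
\begin{equation}
 I(X:B^n)=H(B_i|B_{-i})-H(B_i|XB_{-i})+\bigl[H(B_{-i})-H(B_{-i}|X)\bigr].
\end{equation}
The bracketed term is unchanged by $\mathcal R_i$. The two conditional entropies each change by at most $\delta\log((\dim B)^2-1)+h_2(\delta)$ by Fact~\ref{fact:conditional-entropy-continuity}.

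Summing these per-step bounds over $i=1,\dots,n$ and combining them with the final-state bounds yields \eqref{eq:entropy-restoration} and \eqref{holevo:eq:entropy-restoration}.
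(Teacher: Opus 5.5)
Your proposal is correct and follows the paper's proof essentially step by step. Both use the same local channel that keeps the $P_i$ component and replaces the rejected one by a fixed state in $\operatorname{im}U$, applied site by site. Both use invariance of the untouched marginals and bound the fidelity by $F\ge(\Tr[P_i\vartheta^{(i-1)}])^2$, which Fuchs--van de Graaf turns into a trace distance of at most $\sqrt{2\epsilon}\le\delta$. Both then telescope via the chain rule with conditional-entropy continuity on $B_i$ and $E_i$, respectively $B_i$ given $B_{-i}$ and given $XB_{-i}$.

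The only difference is cosmetic. You get the fidelity bound from operator monotonicity of $F$ in its second argument, together with the exact identity $\norm{\sqrt{Y}\sqrt{P_iYP_i}}_1=\Tr[P_iY]$, which holds because $\sqrt{Y}P_iYP_i\sqrt{Y}=(\sqrt{Y}P_i\sqrt{Y})^2$. The paper instead uses a purification-overlap argument. Since this step is exact, your fallback to the gentle measurement lemma is unnecessary.
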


\begin{proof}
Choose a fixed density operator $\eta_{BE}$ supported on
$\operatorname{im}U$ and define the channel
\begin{equation}
 \mathcal T_{BE}(Z_{BE})=UU^\dagger Z_{BE}UU^\dagger
       +\Tr[(I-UU^\dagger)Z_{BE}]\eta_{BE}.
 \label{eq:restoring-channel}
\end{equation}
It retains the allowed component and replaces the remaining component
by $\eta_{BE}$. Both terms are completely positive, their traces add
to $\Tr[Z_{BE}]$, and every output lies in $\operatorname{im}U$.
Apply its copy on each pair $B_iE_i$ successively:
\begin{align}
 \vartheta^{(0)}_{XB^nE^n}&=\vartheta_{XB^nE^n},
 \label{eq:restoration-sequence-initial}\\
 \vartheta^{(i)}_{XB^nE^n}
 &=\bigl(\mathcal T_{B_iE_i}\otimes\id_{XB_{-i}E_{-i}}\bigr)
       (\vartheta^{(i-1)}_{XB^nE^n}),\qquad i=1,\ldots,n.
 \label{eq:restoration-sequence}
\end{align}
The superscript $(i)$ counts completed repairs; all these states remain
on $XB^nE^n$. After step $i$, the state is supported on $P_i$.
Every later channel acts on a different pair and hence preserves this
constraint. Since the $P_i$ commute and their product is
$I_X\otimes(UU^\dagger)^{\otimes n}$, the final state is supported on
$X\otimes\operatorname{im}(U^{\otimes n})$. All steps act trivially on
$X$, so the states remain classical on $X$ and have marginal $\vartheta_X$.

Earlier operations leave the marginal on $B_iE_i$ unchanged, so
\begin{equation}
 \Tr[(I-P_i)\vartheta^{(i-1)}_{XB^nE^n}]
 =\Tr[(I-P_i)\vartheta_{XB^nE^n}].
 \label{eq:unchanged-local-probability}
\end{equation}
The extended repair channel has a Kraus representation containing $P_i$.
Apply its Stinespring isometry to a purification of
$\vartheta^{(i-1)}_{XB^nE^n}$ and compare with that same purification tensored
with the environment flag for $P_i$. Their overlap is
$\Tr[P_i\vartheta^{(i-1)}_{XB^nE^n}]$; hence
\begin{equation}
 F\!\left(\vartheta^{(i)}_{XB^nE^n},
          \vartheta^{(i-1)}_{XB^nE^n}\right)
 \ge\bigl(\Tr[P_i\vartheta^{(i-1)}_{XB^nE^n}]\bigr)^2.
\label{eq:restoration-fidelity}
\end{equation}
Fact~\ref{fact:fuchs-van-de-graaf} yields
\begin{align}
 \frac12\norm{\vartheta^{(i)}_{XB^nE^n}-\vartheta^{(i-1)}_{XB^nE^n}}_1
 &\le\sqrt{1-\bigl(\Tr[P_i\vartheta^{(i-1)}_{XB^nE^n}]\bigr)^2}
 \label{eq:one-step-distance-fidelity}\\
 &\le\sqrt{2\Tr[(I-P_i)\vartheta_{XB^nE^n}]}\le\delta.
 \label{eq:one-step-distance}
\end{align}
Because each local channel is trace preserving, it leaves the joint
marginal of the untouched systems unchanged:
\begin{equation}
 \vartheta^{(i)}_{XB_{-i}E_{-i}}=\vartheta^{(i-1)}_{XB_{-i}E_{-i}}.
\label{eq:restoration-preserves-complement}
\end{equation}

For the first claim, the entropy chain rule consequently gives
\begin{equation}
 H(B^n)_{\vartheta^{(i)}}-H(B^n)_{\vartheta^{(i-1)}}
 =H(B_i|B_{-i})_{\vartheta^{(i)}}-H(B_i|B_{-i})_{\vartheta^{(i-1)}}.
 \label{eq:conditional-difference}
\end{equation}
Partial trace in \eqref{eq:one-step-distance} shows that
$\vartheta^{(i)}_{B_iB_{-i}}$ and $\vartheta^{(i-1)}_{B_iB_{-i}}$ are within
trace distance $\delta$. Apply the sharp conditional-entropy continuity
bound in Fact~\ref{fact:conditional-entropy-continuity} with first system $B_i$
and conditioning system $B_{-i}$. In base-two units,
\eqref{eq:conditional-difference} gives
\begin{equation}
 \left|H(B^n)_{\vartheta^{(i)}}-H(B^n)_{\vartheta^{(i-1)}}\right|
 \le\delta\log((\dim B)^2-1)+h_2(\delta).
 \label{eq:output-entropy-change}
\end{equation}
The dimension factor is $(\dim B_i)^2-1=(\dim B)^2-1$; the dimension
of $B_{-i}$ does not enter. Applying the same argument to
$\vartheta^{(i)}_{E_iE_{-i}}$ and $\vartheta^{(i-1)}_{E_iE_{-i}}$ gives
\begin{equation}
 \left|H(E^n)_{\vartheta^{(i)}}-H(E^n)_{\vartheta^{(i-1)}}\right|
 \le\delta\log((\dim E)^2-1)+h_2(\delta).
\label{eq:environment-entropy-change}
\end{equation}
These applications are valid because $\delta\le1/2$ lies below both
thresholds $1-(\dim B)^{-2}$ and $1-(\dim E)^{-2}$.
Adding the two estimates and telescoping from $\vartheta^{(0)}=\vartheta$ to
$\vartheta^{(n)}$ bounds the difference between
$H(B^n)_\vartheta-H(E^n)_\vartheta$ and
$H(B^n)_{\vartheta^{(n)}}-H(E^n)_{\vartheta^{(n)}}$ by the correction in
\eqref{eq:entropy-restoration}.
The final $B^nE^n$ marginal is the Stinespring output of the input
$(U^{\otimes n})^\dagger\vartheta^{(n)}_{B^nE^n}U^{\otimes n}$, a positive
operator of trace one. Consequently,
the definition of coherent information in
Eq.~\eqref{eq:intro-quantum-capacity} gives
\begin{equation}
 H(B^n)_{\vartheta^{(n)}}-H(E^n)_{\vartheta^{(n)}}
 \le\CoherentInformation(\mathcal N^{\otimes n}),
\label{eq:repaired-coherent-information-bound}
\end{equation}
which proves the first claim.

For the second claim, the equality
$\vartheta^{(i)}_{XB_{-i}}=\vartheta^{(i-1)}_{XB_{-i}}$ and the mutual-information
chain rule give the exact identity
\begin{align}
 I(X:B^n)_{\vartheta^{(i)}}-I(X:B^n)_{\vartheta^{(i-1)}}
 &=\bigl[H(B_i|B_{-i})_{\vartheta^{(i)}}
          -H(B_i|B_{-i})_{\vartheta^{(i-1)}}\bigr]\nonumber\\
 &\quad-\bigl[H(B_i|XB_{-i})_{\vartheta^{(i)}}
          -H(B_i|XB_{-i})_{\vartheta^{(i-1)}}\bigr].
 \label{holevo:eq:local-entropy-difference}
\end{align}
Apply the continuity bound to the pairs of reduced states
$(\vartheta^{(i)}_{B_iB_{-i}},\vartheta^{(i-1)}_{B_iB_{-i}})$ and
$(\vartheta^{(i)}_{B_iXB_{-i}},\vartheta^{(i-1)}_{B_iXB_{-i}})$.
Their trace distances are at most $\delta$ by
\eqref{eq:one-step-distance}, and the first system is $B_i$ in both
applications. The triangle inequality therefore gives
\begin{equation}
 \left|I(X:B^n)_{\vartheta^{(i)}}-I(X:B^n)_{\vartheta^{(i-1)}}\right|
 \le2\delta\log((\dim B)^2-1)+2h_2(\delta).
\label{eq:mutual-information-step-bound}
\end{equation}
The operator
$(I_X\otimes(U^{\otimes n})^\dagger)\vartheta^{(n)}_{XB^nE^n}
 (I_X\otimes U^{\otimes n})$
is a density operator on $XA^n$, classical on $X$, whose channel output
is $\vartheta^{(n)}_{XB^n}$. It thus specifies an admissible input ensemble, so
\begin{equation}
 I(X:B^n)_{\vartheta^{(n)}}\le\HolevoQuantity(\mathcal N^{\otimes n}).
\label{eq:repaired-holevo-bound}
\end{equation}
Telescoping the preceding estimate from $\vartheta^{(0)}=\vartheta$ to
$\vartheta^{(n)}$ proves \eqref{holevo:eq:entropy-restoration}.
\end{proof}

The conditional states of $\vartheta^{(i)}_{XB^nE^n}$ can become mixed during
restoration. Purity is used only in the preceding Holevo derivative
identity; restoration compares $I(X:B^n)_{\vartheta^{(i)}}$ for
$i=0,\ldots,n$. These channels construct
admissible inputs for the two optimizations and need not be implemented
by a receiver. The entropy estimates use the distance at each step,
so the initial and final states need not be globally close.

\section{Continuity of the R\'enyi capacities}
\label{sec:continuity}

We now combine the integral representations of
Section~\ref{sec:integral-representations} with the projection estimates of
Section~\ref{sec:projection}. With $\kappa$ as defined in
Eq.~\eqref{eq:local-support-kappa}, put $s_\alpha=(\alpha-1)/\alpha\in(0,1)$.
For $\dim B,\dim E\ge2$ and $2\kappa s_\alpha\le1/2$, define
\begin{align}
 \varepsilon_{\rm q}(\alpha)
 &=2\kappa s_\alpha\log\!\left[((\dim B)^2-1)((\dim E)^2-1)\right]
      +2h_2(2\kappa s_\alpha),
 \label{eq:continuity-modulus-quantum}\\
 \varepsilon_{\rm c}(\alpha)
 &=4\kappa s_\alpha\log\!\left((\dim B)^2-1\right)
      +2h_2(2\kappa s_\alpha).
 \label{eq:continuity-moduli}
\end{align}

\begin{proposition}[Finite-block continuity bounds]
\label{prop:finite-block-continuity}
Let $\dim B,\dim E\ge2$ and $\alpha>1$ satisfy
$\alpha\le4\kappa/(4\kappa-1)$. Then, for every $n\ge1$,
\begin{align}
 0&\le\RenyiCoherentInformation{\alpha}(\mathcal N^{\otimes n})
       -\CoherentInformation(\mathcal N^{\otimes n})
       \le n\varepsilon_{\rm q}(\alpha),
 \label{eq:finite-block-quantum}\\
 0&\le\RenyiHolevoQuantity{\alpha}(\mathcal N^{\otimes n})
       -\HolevoQuantity(\mathcal N^{\otimes n})
       \le n\varepsilon_{\rm c}(\alpha).
 \label{eq:finite-block-classical}
\end{align}
Consequently, $0\le\RenyiQuantumCapacity{\alpha}(\mathcal N)
-\QuantumCapacity(\mathcal N)\le\varepsilon_{\rm q}(\alpha)$ and
$0\le\RenyiClassicalCapacity{\alpha}(\mathcal N)
-\ClassicalCapacity(\mathcal N)\le\varepsilon_{\rm c}(\alpha)$.
\end{proposition}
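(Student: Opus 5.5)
The lower bounds are the easy half. Monotonicity in the order (Fact~\ref{fact:sandwiched-data-processing}) gives $\widetilde D_\alpha\ge D$ for each fixed input and comparison state. Taking the infimum over $\sigma_{B^n}$ at order one recovers the ordinary quantities: $-H(R|B^n)=H(B^n)-H(E^n)$, and the mutual information $I(X:B^n)$ through the usual minimizer $\sigma=\omega_{B^n}$. Taking suprema then gives $\RenyiCoherentInformation{\alpha}(\mathcal N^{\otimes n})\ge\CoherentInformation(\mathcal N^{\otimes n})$ and $\RenyiHolevoQuantity{\alpha}(\mathcal N^{\otimes n})\ge\HolevoQuantity(\mathcal N^{\otimes n})$.

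For the upper bounds, I apply Theorem~\ref{thm:integral-representations} to $\mathcal N^{\otimes n}$. This requires showing that, for every input $\rho_{A^n}$ and every $u\in(0,s_\alpha]$, the integrand satisfies
\begin{equation*}
H(B^n)_{\zeta(u)}-H(E^n)_{\zeta(u)}\le\CoherentInformation(\mathcal N^{\otimes n})+n\varepsilon_{\rm q}(\alpha).
\end{equation*}
Averaging over $u$ and taking the supremum then gives the claim. The classical case is handled the same way, using $I(X:B^n)_{\xi(u)}$ and $\varepsilon_{\rm c}$. For each $u$, I would set $\delta(u):=2\kappa u$. The hypothesis $\alpha\le4\kappa/(4\kappa-1)$ is equivalent to $s_\alpha\le1/(4\kappa)$, which ensures $\delta(u)\le2\kappa s_\alpha\le1/2$. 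Since $\kappa\ge1$, it also ensures $u\le1/2$, so Lemma~\ref{lem:local-support} applies at $s=u$. That lemma gives
\begin{equation*}
\Tr[(I-P_i)\zeta(u)]\le2\kappa^2u^2=\delta(u)^2/2,
\end{equation*}
and the same bound for $\xi(u)$ with $I_X$ attached. These are exactly the hypotheses of Lemma~\ref{lem:restoration} with $\vartheta=\zeta(u)$, taking $X$ trivial, or $\vartheta=\xi(u)$. Lemma~\ref{lem:restoration} then bounds the integrand by the order-one quantity plus
\begin{equation*}
n\delta(u)\log[((\dim B)^2-1)((\dim E)^2-1)]+2nh_2(\delta(u)),
\end{equation*}
or its classical analogue. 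Both corrections are nondecreasing in $\delta$ on $[0,1/2]$, since $h_2$ increases there. So I may replace $\delta(u)$ by $2\kappa s_\alpha$, which gives exactly $n\varepsilon_{\rm q}(\alpha)$ and $n\varepsilon_{\rm c}(\alpha)$. In the classical case, the integral representation uses the $u$-dependent optimal distribution $p^\star(u)$. The restoration bound is uniform over all ensembles, so this is harmless.

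The main point to watch is bookkeeping rather than conceptual. The first item is the support reductions in the proof of Theorem~\ref{thm:integral-representations}, where $B$ and $E$ are restricted to supports. The local support lemma must still be applied on the original systems with $P_i=UU^\dagger$. This should be fine because the tilted state, embedded back, is unchanged and Lemma~\ref{lem:conditional-powers} allows singular marginals. The second item is measurability of the integrand, which follows from the integral representation itself. Finally, to pass to capacities I divide by $n$ and let $n\to\infty$. The limits exist by Fekete's lemma, as in Eq.~\eqref{eq:regularized-supremum}, and by the analogous superadditivity for the Holevo quantities. This gives $0\le\RenyiQuantumCapacity{\alpha}(\mathcal N)-\QuantumCapacity(\mathcal N)\le\varepsilon_{\rm q}(\alpha)$, and likewise in the classical case.
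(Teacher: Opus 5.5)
Your proposal is correct and follows the paper's proof: order monotonicity gives the lower bounds, the derivative/integral representation is combined with Lemma~\ref{lem:local-support} and Lemma~\ref{lem:restoration} and then integrated, and Fekete's lemma handles the regularized statement. Your $\delta(u)=2\kappa u$ followed by monotonicity in $\delta$ reaches the same bound as the paper's direct choice $\delta=2\kappa s_\alpha$. The one real difference is how you treat singular inputs in the quantum case. The paper works with positive-definite inputs, after cutting $B,E$ down to $\supp\mathcal N(I_A)$ and $\supp\mathcal N^{\mathrm c}(I_A)$, and then extends by the fixed-order continuity in Fact~\ref{fact:conditional-renyi-continuity}. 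You instead apply the local support estimate directly to the support-restricted tilted state. That is legitimate, because $\supp\omega_{B^nE^n}\subseteq\supp\omega_{B^n}\otimes\supp\omega_{E^n}$ and Lemma~\ref{lem:conditional-powers} holds for singular states, just as the paper already uses in the Holevo case.
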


\begin{proof}
For the quantum bound, we first prove
Eq.~\eqref{eq:finite-block-quantum}. For the matrix-power
argument, restrict $B$ and $E$ to the supports of $\mathcal N(I_A)$ and
$\mathcal N^{\mathrm c}(I_A)$; the corresponding operators are then strictly
positive. Keeping the original dimensions in the bounds can only enlarge
the right-hand side. Fix $n$ and a positive definite input $\rho_{A^n}$,
and let $\omega_{RB^nE^n}$ be the Stinespring output of a purification.
Its two marginals are strictly positive because, for example,
\begin{equation}
 \mathcal N^{\otimes n}(\rho_{A^n})
 \ge\lambda_{\min}(\rho_{A^n})\mathcal N(I_A)^{\otimes n}>0,
\label{eq:positive-output-from-faithful-input}
\end{equation}
and likewise for the complementary channel.

For every $0<u\le s_\alpha$, Lemma~\ref{lem:local-support} gives, at every site,
\begin{equation}
 \Tr[(I-P_i)\zeta_{B^nE^n}(u)]
 \le2u^2\kappa^2\le\frac{(2\kappa s_\alpha)^2}{2}.
 \label{eq:all-smaller-orders}
\end{equation}
The assumption on $\alpha$ is equivalent to $2\kappa s_\alpha\le1/2$;
it also ensures $s_\alpha\le1/2$, as required in
Lemma~\ref{lem:local-support}. Apply Lemma~\ref{lem:restoration} with
$\delta=2\kappa s_\alpha$. Combining this with
Lemma~\ref{lem:derivative} therefore
gives, for $0<u<s_\alpha$,
\begin{equation}
 \frac{\mathrm{d}}{\mathrm{d}u}\left[-u\widetilde H_{1/(1-u)}^\uparrow(R|B^n)_\omega\right]
 \le\CoherentInformation(\mathcal N^{\otimes n})
      +n\varepsilon_{\rm q}(\alpha).
 \label{eq:uniform-derivative}
\end{equation}
The differentiated function is continuously differentiable on
$[0,s_\alpha]$ and vanishes at zero. Integration and division by
$s_\alpha$ yield
\begin{equation}
 -\widetilde H_\alpha^\uparrow(R|B^n)_\omega
 \le\CoherentInformation(\mathcal N^{\otimes n})
      +n\varepsilon_{\rm q}(\alpha).
 \label{eq:input-bound}
\end{equation}
The repaired state may correspond to a different channel input, which is
allowed in the coherent-information optimization.

Approximate a singular input by
$(1-t)\rho_{A^n}+tI_{A^n}/\dim A^n$ and let $t\searrow0$.
Choose canonical purifications on a fixed reference system; these
purifications, and hence their channel outputs, converge in trace norm.
Fact~\ref{fact:conditional-renyi-continuity} below therefore extends
Eq.~\eqref{eq:input-bound} to every input. Taking the supremum and using the order monotonicity in
Fact~\ref{fact:sandwiched-data-processing} proves
Eq.~\eqref{eq:finite-block-quantum}.

For the classical bound, fix a finite family of pure inputs on $A^n$. The
second statement of Lemma~\ref{lem:local-support} gives
\begin{equation}
 \Tr[(I_X\otimes(I-P_i))\xi_{XB^nE^n}(u)]
 \le2u^2\kappa^2\le\frac{(2\kappa s_\alpha)^2}{2},
 \label{holevo:eq:all-smaller-orders}
\end{equation}
uniformly in the family and its size. Hence
Lemmas~\ref{holevo:lem:derivative} and~\ref{lem:restoration} give, for
almost every $0<u<s_\alpha$,
\begin{equation}
 \frac{\mathrm{d}}{\mathrm{d}u}\left[
 u\RenyiHolevoQuantity{1/(1-u)}(\{\omega_{B^n}^x\}_x)\right]
 \le\HolevoQuantity(\mathcal N^{\otimes n})
      +n\varepsilon_{\rm c}(\alpha).
 \label{holevo:eq:uniform-derivative}
\end{equation}
The function on the left is absolutely continuous and vanishes at zero.
Integrating gives the corresponding bound for the fixed family at order
$\alpha$. Every finite mixed-input ensemble has a finite pure-state
refinement, and forgetting the refinement is a channel, so data processing
shows that restricting to pure input families does not lower the supremum.
Taking that supremum and again using the order monotonicity in
Fact~\ref{fact:sandwiched-data-processing} proves
Eq.~\eqref{eq:finite-block-classical}.

Dividing Eqs.~\eqref{eq:finite-block-quantum} and
\eqref{eq:finite-block-classical} by $n$ and taking the regularized limits
proves the asserted regularized bounds; existence of these limits was already
established in Section~\ref{sec:arimoto}.
\end{proof}

\begin{fact}[Fixed-order continuity of the conditional R\'enyi entropy]
\label{fact:conditional-renyi-continuity}
Let $\alpha>1$, put $\alpha'=\alpha/(\alpha-1)$, and let
$\rho_{AB},\sigma_{AB}$ be states satisfying
$\tfrac12\|\rho_{AB}-\sigma_{AB}\|_1\le\epsilon$. Then
\begin{equation}
 \left|\widetilde H_\alpha^\uparrow(A|B)_\rho
       -\widetilde H_\alpha^\uparrow(A|B)_\sigma\right|
 \le \alpha'\log\!\left(1+2\epsilon(\dim A)^{2/\alpha'}\right).
\label{eq:fixed-order-conditional-renyi-continuity}
\end{equation}
This is Ref.~\cite[Theorem~5.2]{BG} in the notation of this paper. In
particular, at fixed $\alpha$ the optimized conditional R\'enyi entropy is
continuous in the state.
\end{fact}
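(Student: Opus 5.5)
The plan is to work at the level of the exponentiated quantity, where the optimized conditional entropy becomes a minimized Schatten norm with good monotonicity properties. Put $s=1/\alpha'=(\alpha-1)/\alpha$ and
\[
 Q(\rho_{AB}):=\inf_{\nu_B}\norm{(I_A\otimes\nu_B)^{-s/2}\rho_{AB}(I_A\otimes\nu_B)^{-s/2}}_\alpha
 =\inf_{\nu_B}\norm{\rho_{AB}^{1/2}(I_A\otimes\nu_B)^{-s}\rho_{AB}^{1/2}}_\alpha ,
\]
so that $\widetilde H_\alpha^\uparrow(A|B)_\rho=-\alpha'\log Q(\rho)$. The second expression uses $\norm{X^\dagger X}_\alpha=\norm{XX^\dagger}_\alpha$. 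It shows that, for a fixed comparison state, the objective is monotone both in the operator $\rho_{AB}$ (it is positive and linear in $\rho$) and in $(I\otimes\nu_B)^{-s}$. By symmetry it suffices to bound $\log Q(\rho)-\log Q(\sigma)$ from above.

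Next I introduce the common dominating state. Write $\rho-\sigma=\Delta_+-\Delta_-$ with $\Delta_\pm\ge0$ and $\Tr\Delta_\pm=\epsilon'\le\epsilon$. Set $\gamma:=(\sigma+\Delta_+)/(1+\epsilon')=(\rho+\Delta_-)/(1+\epsilon')$, which is a state. Since $\rho\le(1+\epsilon')\gamma$, monotonicity gives $Q(\rho)\le(1+\epsilon')Q(\gamma)$ at once. It therefore remains to compare $Q(\gamma)$ with $Q(\sigma)$.

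For that comparison, take $\nu^\star_B$ optimal for $\sigma$ and $\theta_{AB}=\Delta_+/\epsilon'$. Use the mixed comparison state $\nu_B=(1-\lambda)\nu_B^\star+\lambda\theta_B$. Operator antimonotonicity of $t\mapsto t^{-s}$ gives $\nu_B^{-s}\le(1-\lambda)^{-s}(\nu_B^\star)^{-s}$ and $\nu_B^{-s}\le\lambda^{-s}\theta_B^{-s}$. Splitting $\gamma$ into its two parts and using the triangle inequality for $\norm{\cdot}_\alpha$, I get
\[
 (1+\epsilon')Q(\gamma)\le(1-\lambda)^{-s}Q(\sigma)+\epsilon'\lambda^{-s}\norm{\theta^{1/2}(I\otimes\theta_B)^{-s}\theta^{1/2}}_\alpha .
\]
The last norm is at most $(\dim A)^{s}$, since $\widetilde D_\alpha(\theta\|I\otimes\theta_B)\le\log\dim A$ by the reduction inequality used in Lemma~\ref{lem:conditional-powers}. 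In addition, $Q(\sigma)\ge(\dim A)^{-s}$, because the conditional entropy is at most $\log\dim A$. Dividing through and applying $-\alpha'\log$ then yields a bound of the form $\alpha'\log\bigl(1+c\,\epsilon^{\beta}(\dim A)^{2/\alpha'}\bigr)$ plus a vanishing term in $\lambda$.

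The main obstacle is the exponent on $\epsilon$. A naive choice such as $\lambda=\epsilon'/(1+\epsilon')$ produces $\epsilon^{1-s}=\epsilon^{1/\alpha}$ rather than the linear factor $2\epsilon$ in the stated bound. This already proves continuity at fixed $\alpha$, which is all Proposition~\ref{prop:finite-block-continuity} uses. To obtain the sharp linear form, I would first try not to freeze $\nu^\star_B$. Instead I would use the Sion minimax form of $Q$ analogous to Fact~\ref{fact:mds-variational}, with the dual variable $\tau$ ranging over the $\alpha'$-Schatten unit ball. The plan is to bound $Q(\sigma+\Delta_+)$ through a supremum of a function affine in the state. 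On the $\Delta_+$ term the trivial bound $(\dim A)^{s}\cdot(\dim A)^{s}$ applies uniformly, giving $Q(\gamma)(1+\epsilon')\le Q(\sigma)+\epsilon'(\dim A)^{2s}\cdot(\text{const})$ with a linear dependence on $\epsilon'$. The $2$ in the stated bound should come from using the symmetric decomposition through $\gamma$ in both directions. If this fails I would simply cite Ref.~\cite[Theorem~5.2]{BG} for the exact constant.
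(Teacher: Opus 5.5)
As a proof of the stated inequality, your proposal has a gap. What you actually prove is correct but weaker. With $\lambda=\epsilon'/(1+\epsilon')$ your estimate gives $|\widetilde H_\alpha^\uparrow(A|B)_\rho-\widetilde H_\alpha^\uparrow(A|B)_\sigma|\le\log(1+\epsilon)+\alpha'\log\bigl(1+\epsilon^{1/\alpha}(\dim A)^{2/\alpha'}\bigr)$. The loss is structural, not a matter of tuning $\lambda$. If you bound the $\sigma$- and $\Delta_+$-pieces separately through $\nu_B^{-s}\le(1-\lambda)^{-s}(\nu_B^\star)^{-s}$ and $\nu_B^{-s}\le\lambda^{-s}\theta_B^{-s}$, the best you can get is $\min_\lambda[(1-\lambda)^{-s}+c\lambda^{-s}]=(1+c^{1/(1+s)})^{1+s}$ with $c=\epsilon'(\dim A)^{2s}$, i.e.\ a modulus of order $\epsilon^{\alpha/(2\alpha-1)}$.

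Your proposed repair contains a false step: no bound on the $\Delta_+$ contribution is uniform in the comparison state. The quantity $\norm{(I_A\otimes\nu_B)^{-s/2}\Delta_+(I_A\otimes\nu_B)^{-s/2}}_\alpha$ blows up when $\Delta_+$ has weight where $\nu_B$ is small. The minimizer for $\sigma$ has support $\supp\sigma_B$, which need not contain the $B$-support of $\Delta_+$, and then this term is infinite. A Sion exchange does not remove the problem, because for each dual variable you still need one comparison state that is good for both pieces. Moreover, in the form of Fact~\ref{fact:mds-variational} the state enters through its purification, not affinely.

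The missing ingredient is subadditivity of $Q$ on positive operators, $Q(X+Y)\le Q(X)+Q(Y)$. Write $\widetilde Q_\alpha(X\|Y)=\Tr[(Y^{-s/2}XY^{-s/2})^\alpha]$ and optimize the scale of $\mu_B=t\nu_B$; then $Q(X)=\inf_{\mu_B\ge0}\{\tfrac1\alpha\widetilde Q_\alpha(X\|I_A\otimes\mu_B)+\tfrac1{\alpha'}\Tr\mu_B\}$. This objective is jointly convex for $\alpha>1$ (Frank--Lieb) and jointly homogeneous of degree one, hence subadditive. Choosing $\mu_B=\mu_B^{(1)}+\mu_B^{(2)}$ gives the claim. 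Equivalently, $Q$ is the restriction to positive operators of Pisier's operator-valued Schatten norm on $S_1(B;S_\alpha(A))$, which is the structure exploited in Ref.~\cite{BG}. The combined comparison operator is still a sum of rescaled individual optimizers; the gain comes from joint convexity replacing your term-by-term operator monotonicity.

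With subadditivity, $Q(\rho)\le Q(\sigma+\Delta_+)\le Q(\sigma)+\epsilon'Q(\theta)\le Q(\sigma)+\epsilon(\dim A)^{s}$. Together with $Q(\sigma)\ge(\dim A)^{-s}$ this yields $\alpha'\log\bigl(1+\epsilon(\dim A)^{2/\alpha'}\bigr)$, which implies the stated bound. The factor $2$ is what the triangle inequality gives when applied to $\rho-\sigma=\Delta_+-\Delta_-$ without using monotonicity.

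The paper does not prove this Fact itself; it cites Ref.~\cite{BG}. It uses only continuity at fixed $\alpha$, to pass to singular inputs in the proof of Proposition~\ref{prop:finite-block-continuity}, and for that purpose your weaker bound already suffices.
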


\begin{proof}[Proof of Theorem~\ref{thm:main}]
Since $2\kappa s_\alpha\to0$ and $h_2(2\kappa s_\alpha)\to0$ as
$\alpha\searrow1$, Proposition~\ref{prop:finite-block-continuity} proves both
limits when $\dim B,\dim E\ge2$. If $\dim B=1$, the coherent and Holevo
quantities both vanish. If $\dim E=1$, the channel is isometric and all four
finite-block quantities in Eqs.~\eqref{eq:finite-block-quantum} and
\eqref{eq:finite-block-classical} equal $n\log\dim A$. Thus the differences
also vanish in the cases excluded from
Proposition~\ref{prop:finite-block-continuity}.
\end{proof}

\section{Conclusion}
\label{sec:conclusion}

We have established exponential strong converses for unassisted quantum and
classical communication over every finite-dimensional memoryless quantum
channel. Our main technical results are one-shot integral representations for
the R\'enyi coherent and Holevo information and asymptotic continuity of the
corresponding R\'enyi capacities at order one. Local support estimates and
sequential restoration of the Stinespring image provide the
block-length-independent per-use control needed to combine these results with the Arimoto converse
bounds. This yields exact regularized strong-converse exponent formulas and,
crucially, proves that the exponents are strictly positive at every rate above
the quantum and classical capacities. The exact quantum exponent concerns
entanglement generation; the exponential strong converse extends to the other
standard unassisted quantum communication criteria, although their exact
exponents can differ.

Our uniform bounds also extend to nonstationary product channels $\bigotimes_{i=1}\mathcal N_i$ with uniformly bounded local output and Stinespring environment dimensions. The local support estimates and sequential restoration depend only on these dimensions and the tensor-product dilation, allowing different component channels and arbitrary correlations among their inputs. Consequently, for every $n$ and $\alpha>1$ sufficiently close to one, independently of $n$, \begin{align}
0&\le Q_\alpha^{(1)}\!\left(\bigotimes_{i=1}^n\mathcal N_i\right)
-Q^{(1)}\!\left(\bigotimes_{i=1}^n\mathcal N_i\right)
\le \sum_{i=1}^n\varepsilon_{{\rm q},i}(\alpha),
\label{eq:nonidentical-quantum}\\
0&\le C_\alpha^{(1)}\!\left(\bigotimes_{i=1}^n\mathcal N_i\right)
-C^{(1)}\!\left(\bigotimes_{i=1}^n\mathcal N_i\right)
\le \sum_{i=1}^n\varepsilon_{{\rm c},i}(\alpha),
\label{eq:nonidentical-classical}
\end{align}
where the correction terms in Eqs.~\eqref{eq:nonidentical-quantum}
and~\eqref{eq:nonidentical-classical} are obtained from
$\varepsilon_{\rm q}(\alpha)$ in \eqref{eq:continuity-modulus-quantum} and $\varepsilon_{\rm c}(\alpha)$ in \eqref{eq:continuity-moduli},
respectively, by substituting the local dimensions
$\dim B_i,\dim E_i$ at site $i$.
Uniform boundedness of these dimensions ensures that the corrections
per channel use vanish uniformly in $n$ as $\alpha\searrow1$. 
For channel sequences whose normalized coherent and Holevo information converge to the respective operational capacities, these bounds also establish exponential strong converses.

\bigskip
\paragraph*{Acknowledgements}
H.-C.C. acknowledges support from the National Science and Technology Council
(NSTC 115-2628-E-002-005, NSTC 114-2119-M-001-002, and NSTC
115-2124-M-002-014) and the Ministry of Education (NTU-115V2016-1,
NTU-CC115L893705, and NTU-115L900702). M.T. acknowledges support from the
National Research Foundation Investigatorship Award
(NRF-NRFI10-2024-0006) and the National Research Foundation, Singapore,
through the National Quantum Office, hosted by A*STAR, under its Centre for
Quantum Technologies Funding Initiative (S24Q2d0009).

\appendix

\section{Quantum Arimoto converse}
\label{app:one-shot}

\begin{proof}[Proof of Proposition~\ref{prop:arimoto-quantum}]
For a fixed decoder, the generation fidelity is linear in the preparation
$\rho_{RA^n}$. It is therefore an average of the fidelities obtained from
the pure states in any decomposition of $\rho_{RA^n}$. Since the right
side of \eqref{eq:code-bound} is independent of the preparation, it
suffices to prove the bound for a pure state $\psi_{RA^n}$.
Let $\omega_{RB^n}=(\id_R\otimes\mathcal N^{\otimes n})(\psi_{RA^n})$
be the resulting state immediately before decoding.
For any density operator $\sigma_{B^n}$, apply the decoder followed by
the binary test $\{\Phi_{RB_0},I-\Phi_{RB_0}\}$ to
$\omega_{RB^n}$ and $I_R\otimes\sigma_{B^n}$.
The comparison operator has weight
\begin{equation}
 \Tr\!\left[\Phi_{RB_0}(I_R\otimes\mathcal D(\sigma_{B^n}))\right]=\frac1M
\end{equation}
on the first outcome.
Data processing for sandwiched divergence \cite{beigi2013data}, followed by
retaining this one nonnegative term in the classical divergence, gives
\begin{align}
 \widetilde D_\alpha(\omega_{RB^n}\|I_R\otimes\sigma_{B^n})
 &\ge\frac{1}{\alpha-1}\log
             (F_{\mathrm{eg}}^{\alpha}M^{\alpha-1})
 \label{eglower:eq:binary-test-bound}\\
 &=\log M+\frac{\alpha}{\alpha-1}\log F_{\mathrm{eg}}.
\end{align}
Taking the infimum over $\sigma_{B^n}$ yields
\begin{equation}
 -\widetilde H_\alpha^\uparrow(R|B^n)_\omega
 \ge\log M+\frac{\alpha}{\alpha-1}\log F_{\mathrm{eg}}.
 \label{eq:bell-test}
\end{equation}

The preparation $\psi_{RA^n}$ is pure and has Schmidt rank at most
$\dim A^n$. Reference isometries leave the conditional entropy unchanged,
so the optimization in \eqref{eq:coherent-information} gives
\begin{equation}
 -\widetilde H_\alpha^\uparrow(R|B^n)_\omega
 \le\RenyiCoherentInformation{\alpha}(\mathcal N^{\otimes n}).
 \label{eq:generation-input-bound}
\end{equation}
Combining \eqref{eq:generation-input-bound} with \eqref{eq:bell-test}
proves \eqref{eq:code-bound}, including mixed preparations by the
linearity argument above. The bound is
also immediate when $F_{\mathrm{eg}}=0$.

It remains to pass from the one-shot quantity to its regularization.
The supremum representation in Eq.~\eqref{eq:regularized-supremum} gives
\begin{equation}
 \frac1n\RenyiCoherentInformation{\alpha}(\mathcal N^{\otimes n})
 \le \RenyiQuantumCapacity{\alpha}(\mathcal N)
 \qquad(n\ge1).
\end{equation}
Writing $r=n^{-1}\log M$ and substituting the preceding inequality into
Eq.~\eqref{eq:code-bound} proves
Proposition~\ref{prop:arimoto-quantum}, including
Eq.~\eqref{eq:regularized-arimoto-bound}.
\end{proof}

\section{Classical Arimoto converse}
\label{app:classical-operational}

\begin{proof}[Proof of Proposition~\ref{prop:arimoto-classical}]
For an $(n,M)$ classical code, let
$\omega_{XB^n}=M^{-1}\sum_m\ket m\!\bra m\otimes
\mathcal N^{\otimes n}(\rho_{A^n}^m)$.
For any comparison state $\sigma_{B^n}$, the binary test with accepting
operator $\sum_m\ket m\!\bra m\otimes D_m$ has probabilities
$P_{\mathrm s}$ on $\omega_{XB^n}$ and $1/M$ on
$\omega_X\otimes\sigma_{B^n}$. The same data-processing argument as
in Appendix~\ref{app:one-shot} gives
\begin{equation}
 \widetilde D_\alpha(\omega_{XB^n}\|\omega_X\otimes\sigma_{B^n})
 \ge\log M+\frac{\alpha}{\alpha-1}\log P_{\mathrm s}.
\end{equation}
Taking the infimum over $\sigma_{B^n}$ and using
\eqref{holevo:eq:definition} gives
\begin{equation}
 P_{\mathrm s}\le
 2^{-\frac{\alpha-1}{\alpha}
       [\log M-\RenyiHolevoQuantity{\alpha}(\mathcal N^{\otimes n})]}.
\label{holevo:eq:one-shot-arimoto}
\end{equation}
The case $P_{\mathrm s}=0$ is immediate.

Additivity of the optimized divergence with fixed classical marginal
\cite[Lemma~7]{hayashitomamichel15c} and product input ensembles give
\begin{equation}
 \RenyiHolevoQuantity{\alpha}(\mathcal N^{\otimes(n+m)})
 \ge\RenyiHolevoQuantity{\alpha}(\mathcal N^{\otimes n})
      +\RenyiHolevoQuantity{\alpha}(\mathcal N^{\otimes m}).
 \label{holevo:eq:superadditivity}
\end{equation}
Thus Fekete's lemma shows that the limit in \eqref{holevo:eq:definition} exists and
\begin{equation}
 \RenyiClassicalCapacity{\alpha}(\mathcal N)
 =\sup_{n\ge1}\frac1n
       \RenyiHolevoQuantity{\alpha}(\mathcal N^{\otimes n}),
 \label{holevo:eq:regularized-supremum}
\end{equation}
which yields
\begin{equation}
 \frac1n\RenyiHolevoQuantity{\alpha}(\mathcal N^{\otimes n})
 \le\RenyiClassicalCapacity{\alpha}(\mathcal N)
 \qquad(n\ge1).
\end{equation}
Writing $r=n^{-1}\log M$ and combining the preceding two inequalities proves
Proposition~\ref{prop:arimoto-classical}.
\end{proof}

\section{Entanglement-generation strong-converse exponent}
\label{app:quantum-exponent}

We adapt the change-of-measure and pinching strategy used for
strong-converse exponent achievability in
Refs.~\cite{mosonyi2017strong,li2023strong}. For entanglement generation,
we use one-way hashing~\cite{devetakwinter2005distillation} or state
merging~\cite{berta2026tight}, together with the removal of forward
classical assistance by selecting a suitable preparation branch. We then
convexify the resulting relative-entropy bound and use universal pinching
to recover the sandwiched R\'enyi coherent information.

\subsection{Converse bound and positivity}

For each fixed $\alpha>1$, Eq.~\eqref{eq:code-bound} yields
\begin{equation}
 -\frac1n\log F_{\mathrm{eg}}^{\star}(n,r)
 \ge\frac{\alpha-1}{\alpha}
        \left[r-\frac1n\RenyiCoherentInformation{\alpha}(\mathcal N^{\otimes n})\right].
\end{equation}
Taking $n\to\infty$ and then the supremum over $\alpha>1$ proves
the lower bound on $E_{\mathrm{sc,eg}}(r,\mathcal N)$ in
Eq.~\eqref{eq:exponent-bound}. If $r>\QuantumCapacity(\mathcal N)$, choose
$0<\varepsilon<r-\QuantumCapacity(\mathcal N)$. By
Proposition~\ref{prop:finite-block-continuity}, one may fix $\alpha>1$ close
enough to one that
\begin{equation}
 \RenyiCoherentInformation{\alpha}(\mathcal N^{\otimes n})
 \le n[\QuantumCapacity(\mathcal N)+\varepsilon]\qquad(n\ge1).
\end{equation}
Substitution into \eqref{eq:code-bound} gives exponential decay and
\begin{equation}
 E_{\mathrm{sc,eg}}(r,\mathcal N)
 \ge\frac{\alpha-1}{\alpha}
       [r-\QuantumCapacity(\mathcal N)-\varepsilon]>0.
\end{equation}
The same bound applies to every code sequence whose limiting inferior rate
exceeds capacity, after choosing $r$ strictly between the two rates.

\subsection{Matching upper bound for entanglement generation}
\label{app:eg-upper}

The reverse inequality follows from three ingredients that we state
separately: a change-of-measure bound, its convexification, and a universal
pinching estimate.

\begin{lemma}[Change of measure for entanglement generation]
\label{egupper:lem:change-measure}
Fix $R\simeq A$. Let $\omega_{RB}$ be obtainable from one use of
$\mathcal N$ followed by a local channel on $B$. For every $r>0$ and every
state $\tau_{RB}$ with $\supp\tau\subseteq\supp\omega$,
\begin{equation}
 E_{\mathrm{sc,eg}}(r,\mathcal N)
 \le D(\tau\|\omega)+\max\{r+H(R|B)_\tau,0\},
 \label{egupper:eq:relative}
\end{equation}
where $D(\tau\|\omega)=\Tr[\tau(\log\tau-\log\omega)]$.
\end{lemma}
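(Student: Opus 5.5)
We prove Lemma~\ref{egupper:lem:change-measure} by a change of measure: build a good entanglement-generation code for the i.i.d.\ source $\tau_{RB}^{\otimes n}$, then run it on the actual source $\omega_{RB}^{\otimes n}$. The cost of switching sources is paid by a typical-projector (hypothesis-testing) bound, which contributes $D(\tau\|\omega)$. The cost of a rate above the coherent information $-H(R|B)_\tau$ is paid by discarding part of the target, which contributes $\max\{r+H(R|B)_\tau,0\}$.

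\emph{Producing $\omega^{\otimes n}$ with the channel.} Write $\omega_{RB}=(\id_R\otimes\mathcal E\circ\mathcal N)(\phi_{RA})$ for some input $\phi_{RA}$ and local channel $\mathcal E$ on $B$. Using $\mathcal N^{\otimes n}$ on $\phi^{\otimes n}$ and applying $\mathcal E^{\otimes n}$ at the receiver yields $\omega_{RB}^{\otimes n}$. The sender holds $R^n$ and may process it before the code is fixed; this is allowed because the retained marginal is unrestricted.

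\emph{Coding for $\tau$.} Suppose first $r<-H(R|B)_\tau$. One-way hashing~\cite{devetakwinter2005distillation}, or state merging~\cite{berta2026tight}, from $\tau^{\otimes n}$ gives a protocol: the sender applies an instrument on $R^n$ with outcome $k$, and the receiver applies a decoder $\mathcal D_k$ on $B^n$. On $\tau^{\otimes n}$ this produces $\Phi$ of rank $2^{nr}$ with fidelity tending to one. Its forward communication rate is $H(R|B)_\tau+\log\dim R$-type, but we only need one outcome.

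To remove the classical communication, note that the average over $k$ of the branch fidelities is close to one. The probabilities $p_k$ are, however, taken under $\tau$, while the code must perform under $\omega$. So I would work with the unnormalized quantities $\Tr[\Phi\,\mathcal D_k(K_k\omega^{\otimes n}K_k^\dagger)]$. For each fixed $k$, the preparation $K_k\phi^{\otimes n}K_k^\dagger$ on $R^nA^n$, followed by $\mathcal D_k\circ\mathcal E^{\otimes n}$, is a valid unassisted code; its fidelity is the normalized branch quantity.

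\emph{Change of measure.} Let $\Pi_n$ be a projector with $\Tr[\Pi_n\tau^{\otimes n}]\to1$ and $\Pi_n\omega^{\otimes n}\Pi_n\ge2^{-n(D(\tau\|\omega)+\delta)}\Pi_n\tau^{\otimes n}\Pi_n$. The standard pinching/typicality construction provides such a $\Pi_n$ when $\supp\tau\subseteq\supp\omega$, possibly via an operator inequality after pinching in the eigenbasis of $\omega^{\otimes n}$. This lower-bounds the success weight under $\omega$ by $2^{-n(D+\delta)}$ times the success weight under $\tau$, up to $o(1)$. Averaging over $k$ with the normalizations $\Tr[K_k\omega^{\otimes n}K_k^\dagger]\le1$, some branch has fidelity at least $2^{-n(D+\delta)}(1-o(1))$. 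This gives $E_{\rm sc,eg}\le D(\tau\|\omega)$ whenever $r<-H(R|B)_\tau$.

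\emph{Rates above the coherent information.} If $r\ge-H(R|B)_\tau$, generate a maximally entangled state of rate $r'<-H(R|B)_\tau$ and pad it with a product state on the remaining $n(r-r')$ qubits. The fidelity with $\Phi$ of rank $2^{nr}$ then loses a factor $2^{-n(r-r')}$. Letting $\delta\to0$ and $r'\to-H(R|B)_\tau$ yields \eqref{egupper:eq:relative}.

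\emph{Main obstacle.} The hard step is making the change of measure compatible with the branch selection: the instrument and the pinching must be coordinated so that the operator inequality survives conjugation by $K_k$. First I would try to pinch $\omega^{\otimes n}$ before the protocol. The projection onto the typical type class of $\omega$ commutes with $\omega^{\otimes n}$ and costs only a polynomial factor, which is negligible at exponential scale.
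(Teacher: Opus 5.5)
\textbf{Verdict: there is a gap in the change-of-measure step.} Your architecture is the paper's: hashing on $\tau^{\otimes n}$, running the same protocol on $\omega^{\otimes n}$, absorbing one branch of the sender's instrument into the preparation, and padding the target. The gap is in the change-of-measure step, which is the heart of the lemma; it does not follow from what you state.

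A projector with $\Pi_n\omega^{\otimes n}\Pi_n\ge2^{-n(D+\delta)}\Pi_n\tau^{\otimes n}\Pi_n$ and $\Tr[\Pi_n\tau^{\otimes n}]\to1$ does exist. Take the projector onto the nonpositive part of $\tau^{\otimes n}-2^{n(D+\delta)}\omega^{\otimes n}$. However, a \emph{compressed} inequality gives no lower bound on $\Tr[T\omega^{\otimes n}]$ for the test $T$ realized by the protocol. The step $\Tr[T\omega]\ge\Tr[T\Pi\omega\Pi]$ fails when $[\Pi,\omega]\neq0$, because the off-diagonal blocks $\Pi\omega(I-\Pi)$ can cancel the compressed contribution. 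Here is a qubit counterexample. Take $\omega=(1-\eta)\ket{\phi}\!\bra{\phi}+\eta I/2$ with $\ket{\phi}=\sqrt{\epsilon}\ket{0}+\sqrt{1-\epsilon}\ket{1}$, let $\tau=\Pi=\ket{0}\!\bra{0}$, and let $T$ be the projector orthogonal to $\ket{\phi}$. Then $\Pi\omega\Pi\ge\epsilon(1-\eta)\Pi\tau\Pi$ and $\Tr[\Pi\tau]=1$, yet $\Tr[T\omega]=\eta/2$. This is far below $\epsilon(1-\eta)\Tr[T\tau]=\epsilon(1-\eta)(1-\epsilon)$ when $\eta\ll\epsilon$.

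The obstacle you single out is also misplaced. Conjugation by $K_k$ preserves operator inequalities. In any case the instrument plays no role in the change of measure: the whole one-way protocol followed by the target projector is a single effect $T=\sum_kK_k^\dagger\mathcal D_k^\dagger(\Phi)K_k$ with $0\le T\le I$ on $R^nB^n$. Your closing idea of using projectors that commute with $\omega^{\otimes n}$ points the right way, since then $\omega^{\otimes n}\ge\Pi\omega^{\otimes n}\Pi$. But it is $\tau^{\otimes n}$ that must be pinched in the eigenbasis of $\omega^{\otimes n}$, and you still need a concentration estimate for the pinched pair. Alternatively, use the global bound $2^{na}\omega^{\otimes n}\ge\tau^{\otimes n}-(\tau^{\otimes n}-2^{na}\omega^{\otimes n})_+$ together with $\Tr[(\tau^{\otimes n}-2^{na}\omega^{\otimes n})_+]\to0$ for $a>D(\tau\|\omega)$.

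The paper avoids projectors entirely. Data processing of the relative entropy under the binary test $\{T,I-T\}$ gives $nD(\tau\|\omega)\ge-h_2(p_n)-p_n\log q_n$, where $p_n=\Tr[T\tau^{\otimes n}]\to1$ and $q_n=\Tr[T\omega^{\otimes n}]$. Hence $\limsup_n-\frac1n\log q_n\le D(\tau\|\omega)$. Only after this comparison is a branch selected. Since the target is pure, $q_n$ is the average of the normalized branch fidelities with weights $\Tr[K_k\omega^{\otimes n}K_k^\dagger]$. It is the fact that these weights sum to one, not merely that each is at most one, that yields a single branch with fidelity at least $q_n$.

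Finally, when $H(R|B)_\tau\ge0$ there is no positive hashing rate, so your limit $r'\to-H(R|B)_\tau$ is vacuous. That case must be handled separately by the product preparation with fidelity $1/M$, as the paper does.
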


\begin{proof}
Suppose first that $H(R|B)_\tau<0$. One-way hashing
\cite{devetakwinter2005distillation} distills entanglement from
$\tau^{\otimes\ell}$ at every rate $0<t<-H(R|B)_\tau$. Equivalently,
purify $\tau_{RB}$ to $\tau_{RBE}$ and apply the state-merging bound of
\cite[Theorem~10]{berta2026tight} with Alice's system $R$, Bob's side
information $B$, and reference $E$; purity gives
$H(R|E)_\tau=-H(R|B)_\tau>0$. Discarding the transferred state leaves the
distilled entanglement.

Choose also $t<r$, put $K_\ell=2^{\lfloor\ell t\rfloor}$, and let
$p_\ell\to1$ be the squared fidelity of such a protocol on
$\tau^{\otimes\ell}$. Write $q_\ell$ for its squared fidelity on
$\omega^{\otimes\ell}$. Relative-entropy data processing through the protocol
and the target-projector test gives
\begin{equation}
 \ell D(\tau\|\omega)
 \ge-h_2(p_\ell)-p_\ell\log q_\ell.
 \label{egupper:eq:comparison}
\end{equation}
Only after this comparison do we remove the protocol's forward classical
assistance. On $\omega^{\otimes\ell}$, the squared fidelity $q_\ell$ is the
average of the conditional squared fidelities over Alice's instrument
outcomes, because the target is pure. Hence some outcome of positive
probability has conditional squared fidelity at least $q_\ell$. As in
Ref.~\cite[Lemma~22]{berta2026tight} and
Ref.~\cite[Lemma~14.6]{khatriwilde2024principles}, Alice's instrument acts
only on her retained system and commutes with the channel. Its selected branch can
therefore be absorbed into the input preparation: prepare the normalized
conditional input state deterministically, and fix Bob's decoder to the one
corresponding to that outcome, including the local channel used to obtain
$\omega$. The outcome is fixed in the code design, so no classical message is
needed. This is an admissible entanglement-generation code because the
retained marginal is unrestricted. Its squared fidelity is at least
$q_\ell$; preservation of $p_\ell$ is no longer needed.

Embedding the rank-$K_\ell$ target into rank
$M_\ell=\lceil2^{\ell r}\rceil$ multiplies its squared fidelity by
$K_\ell/M_\ell$. Letting $\ell\to\infty$ in
Eq.~\eqref{egupper:eq:comparison}, and then taking
$t\uparrow\min\{r,-H(R|B)_\tau\}$, proves
Eq.~\eqref{egupper:eq:relative}. If $H(R|B)_\tau\ge0$, product preparation
has fidelity $1/M_\ell$ and gives the stated, weaker upper bound directly.
\end{proof}

\begin{lemma}[Convexified relative-entropy bound]
\label{egupper:lem:convexification}
With the state domains of Lemma~\ref{egupper:lem:change-measure},
\begin{equation}
 E_{\mathrm{sc,eg}}(r,\mathcal N)
 \le\sup_{0<s<1}\left\{sr-
      \sup_{\omega,\tau}
      [-sH(R|B)_\tau-D(\tau\|\omega)]\right\}.
 \label{egupper:eq:time-sharing}
\end{equation}
\end{lemma}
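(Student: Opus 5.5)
The plan is to derive Lemma~\ref{egupper:lem:convexification} from Lemma~\ref{egupper:lem:change-measure} by time sharing followed by a minimax exchange.

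\textbf{Time sharing.} Fix two (or finitely many) pairs $(\omega_j,\tau_j)$ with weights $\lambda_j$ summing to one. A block code that uses $\lambda_j\ell$ channel uses for pair $j$ sees a product reference state. Applying Lemma~\ref{egupper:lem:change-measure} to the channel $\mathcal N^{\otimes k}$ with product states then lets us replace $D(\tau\|\omega)$ and $H(R|B)_\tau$ by the averages $\sum_j\lambda_jD(\tau_j\|\omega_j)$ and $\sum_j\lambda_jH(R|B)_{\tau_j}$, since both quantities are additive on tensor products. The exponent scales correctly, $E_{\mathrm{sc,eg}}(kr,\mathcal N^{\otimes k})=kE_{\mathrm{sc,eg}}(r,\mathcal N)$. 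Product states of the form $\omega_1^{\otimes k_1}\otimes\omega_2^{\otimes k_2}$ are obtainable from $\mathcal N^{\otimes k}$ followed by a local channel, and the support condition is inherited. This gives
\[
E_{\mathrm{sc,eg}}(r,\mathcal N)\le\inf_{\mu}\Bigl\{\mathbb E_\mu D(\tau\|\omega)+\max\bigl\{r+\mathbb E_\mu H(R|B)_\tau,0\bigr\}\Bigr\},
\]
where the infimum runs over finitely supported distributions $\mu$ on pairs. Rational weights suffice, and irrational ones follow by a limiting argument.

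\textbf{Linearization and minimax.} Write
\[
\max\{a,0\}=\sup_{0\le s\le1}sa .
\]
The objective $\mathbb E_\mu D+s(r+\mathbb E_\mu H)$ is linear in $\mu$, which ranges over a convex set, and linear and continuous in $s$, which ranges over the compact interval $[0,1]$. Sion's theorem therefore allows us to exchange $\inf_\mu$ and $\sup_s$. After the exchange, the infimum over mixtures of a linear functional equals the infimum over single pairs. The result is
\[
\sup_{0\le s\le1}\Bigl\{sr-\sup_{\omega,\tau}\bigl[-sH(R|B)_\tau-D(\tau\|\omega)\bigr]\Bigr\}.
\]

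\textbf{Restricting to the open interval.} The endpoint $s=0$ contributes $-\sup[-D]=0$, because $\tau=\omega$ is allowed. The endpoint $s=1$ is recovered by continuity: the inner expression is a supremum of affine functions of $s$, hence convex and lower semicontinuous, so its Legendre-type transform is approached from inside $(0,1)$. Both endpoints can thus be dropped, which yields \eqref{egupper:eq:time-sharing}.

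\textbf{Main obstacle.} The delicate step is the time-sharing reduction. We must check that Lemma~\ref{egupper:lem:change-measure} applies verbatim to $\mathcal N^{\otimes k}$ with reference $R^k$. The lemma fixes $R\simeq A$, and this is fine because $R^k\simeq A^k$. We also need the operational scaling $E_{\mathrm{sc,eg}}(kr,\mathcal N^{\otimes k})=kE_{\mathrm{sc,eg}}(r,\mathcal N)$, which follows from the Fekete limit in \eqref{eq:exponent-definition} along subsequences $n=k\ell$. Finally, the minimax step requires the relevant quantities to be finite. This holds because the support condition makes $D(\tau\|\omega)$ finite, and $H(R|B)_\tau$ is bounded by dimension terms.
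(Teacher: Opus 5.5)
Your proposal is correct and follows essentially the same route as the paper. Both arguments use time sharing over tensor products of finitely many pairs in rational proportions, extended to real proportions, then linearize with $\max\{x,0\}=\max_{0\le s\le1}sx$, exchange via Sion, reduce the infimum of the linear functional over mixtures to single pairs, and drop the endpoints by continuity. The differences are cosmetic: the paper works with the closed convex hull of the points $(D(\tau\|\omega),H(R|B)_\tau)\in\mathbb{R}^2$, which avoids choosing a topology on your space of distributions $\mu$, and it removes the endpoints using the Lipschitz bound from $|h|\le\log\dim R$ rather than your convexity argument.
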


\begin{proof}
Let $\mathcal C_{\mathcal N}$ be the closed convex hull of all finite pairs
$(d,h)=(D(\tau\|\omega),H(R|B)_\tau)$ allowed in
Lemma~\ref{egupper:lem:change-measure}. Applying that lemma to tensor products
of finitely many such pairs in rational proportions, and then approximating
real proportions, gives
\begin{equation}
 E_{\mathrm{sc,eg}}(r,\mathcal N)
 \le\inf_{(d,h)\in\mathcal C_{\mathcal N}}
       \bigl[d+\max\{r+h,0\}\bigr].
 \label{egupper:eq:convex-hull}
\end{equation}
Now use $\max\{x,0\}=\max_{0\le s\le1}sx$. For
$f(s;d,h)=d+s(r+h)$, Sion's theorem~\cite{sion1958general} gives the
oriented exchange
\begin{equation}
 \inf_{(d,h)\in\mathcal C_{\mathcal N}}\sup_{0\le s\le1}f(s;d,h)
 =\sup_{0\le s\le1}\inf_{(d,h)\in\mathcal C_{\mathcal N}}f(s;d,h).
\end{equation}
Indeed, $[0,1]$ is compact and convex, $\mathcal C_{\mathcal N}$ is convex,
and the payoff is finite, affine, and continuous on its domain. The infimum
of the linear functional $d+sh$ over the convex hull equals its infimum over
the original pairs. At $s=0$ this infimum is zero, since one may take
$\tau=\omega$. Moreover, $|h|\le\log\dim R$, so the resulting function of
$s$ is Lipschitz. Omitting the endpoints therefore does not change its
supremum, and Eq.~\eqref{egupper:eq:time-sharing} follows.
\end{proof}

\begin{lemma}[Universal pinching estimate]
\label{egupper:lem:pinching}
Choose a full-rank universal symmetric state $\sigma^u_{B^m}$ and let
$\mathcal P_m$ be its pinching map. There is a polynomial $v_m$, depending
only on $\dim B$, which bounds the number of distinct eigenvalues of
$\sigma^u_{B^m}$ and satisfies
$\sigma_{B^m}\le v_m\sigma^u_{B^m}$ for every permutation-invariant state
$\sigma_{B^m}$~\cite[Lemma~1]{li2023strong}. For any state $\omega_{RB}$, set
\begin{equation}
 \widehat\omega_{R^mB^m}
 = (\id_{R^m}\otimes\mathcal P_m)(\omega_{RB}^{\otimes m}).
 \label{egupper:eq:pinched-state}
\end{equation}
Then, for $0<s<1$ and $\alpha=1/(1-s)$,
\begin{align}
 &\sup_{\substack{\tau:\,\supp\tau\subseteq\supp\widehat\omega}}
       [-sH(R^m|B^m)_\tau-D(\tau\|\widehat\omega)]\nonumber\\
 &\qquad\ge-ms\widetilde H_\alpha^\uparrow(R|B)_\omega
                    -(1+s)\log v_m.
 \label{egupper:eq:uniform-pinching}
\end{align}
\end{lemma}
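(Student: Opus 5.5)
The plan is to exhibit one explicit test state $\tau$ and evaluate both terms in closed form. First I will rewrite the left-hand side variationally. For fixed $\widehat\omega$, the quantity $\sup_\tau[-sH(R^m|B^m)_\tau-D(\tau\|\widehat\omega)]$ is a Gibbs-type variational problem. I will write
\begin{equation}
 -sH(R^m|B^m)_\tau=\sup_{\sigma}\bigl[-sD(\tau\|I\otimes\sigma)\bigr]\cdot(-1)\cdot(-1),
\end{equation}
that is, $-H(R|B)_\tau=-\inf_\sigma D(\tau_{RB}\|I_R\otimes\sigma_B)$ up to sign conventions. Since $H(R|B)_\tau=-D(\tau\|I\otimes\tau_B)$ and $-H(R|B)_\tau=\inf_\sigma D(\tau\|I\otimes\sigma)$, the objective becomes $\inf_\sigma\{sD(\tau\|I\otimes\sigma)\}-D(\tau\|\widehat\omega)$. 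I will lower bound this by choosing $\sigma=\sigma^u_{B^m}$ but keeping the sign straight. Because the infimum is on the favourable side, the safer route is the commuting structure. After pinching, $\widehat\omega$ commutes with $I\otimes\sigma^u_{B^m}$. For commuting pairs the sandwiched quantity equals the Petz one, and the Gibbs variational principle gives, for any fixed $\sigma$,
\begin{equation}
 \sup_\tau\bigl[-sD(\tau\|I\otimes\sigma)\cdot(-1)-D(\tau\|\widehat\omega)\bigr]
 =\log\Tr\exp\bigl(\log\widehat\omega+s\log(I\otimes\sigma)^{-1}\bigr)\ \text{(suitably signed)}.
\end{equation}

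Concretely, I will take $\tau\propto\widehat\omega^{1/(1-s)\cdot}$-type tilted operator,
\begin{equation}
 \tau=\frac{\bigl[(I\otimes\sigma^u)^{-s/2}\widehat\omega(I\otimes\sigma^u)^{-s/2}\bigr]^{\alpha}}{\Tr[\cdots]},
\end{equation}
which commutes with everything involved. I will then use the bound $-sH(R^m|B^m)_\tau\ge -sD(\tau\|I\otimes\sigma^u)+$ correction; more precisely $D(\tau\|I\otimes\tau_B)\le D(\tau\|I\otimes\sigma^u)$ fails in the wrong direction. So instead I will use $\tau_{B^m}\le v_m\sigma^u$, since $\tau_{B^m}$ is permutation invariant by the symmetry of the construction. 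This gives $\log\tau_B\le\log v_m+\log\sigma^u$ by operator monotonicity, and hence $-H(R|B)_\tau\ge D(\tau\|I\otimes\sigma^u)-\log v_m$. With this, the commuting computation yields
\begin{equation}
 -sH-D\ \ge\ s\widetilde D_\alpha(\widehat\omega\|I\otimes\sigma^u)-s\log v_m,
\end{equation}
as an exact classical identity for the tilted choice.

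Next I will pass from $\widehat\omega$ back to $\omega^{\otimes m}$. The pinching inequality gives $\omega^{\otimes m}\le v_m\widehat\omega$, and $\mathcal P_m$ fixes $I\otimes\sigma^u$. I will combine data processing with this inequality. Then I will use $\sigma^u\ge v_m^{-1}\sigma_B^{\star\otimes m}$ for the optimal permutation-invariant $\sigma^\star$, which contributes at most $\log v_m$ through antimonotonicity in the second argument. Together these give $s\widetilde D_\alpha(\widehat\omega\|I\otimes\sigma^u)\ge ms\,(-\widetilde H^\uparrow_\alpha(R|B)_\omega)-\log v_m$. The pinching step costs one factor of $\log v_m$ without an $s$, and the earlier step costs $s\log v_m$, for a total of $(1+s)\log v_m$.

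The main obstacle is the first step: correctly signing the Gibbs variational identity, and ensuring that $\tau_{B^m}$ is permutation invariant with support inside $\supp\widehat\omega$. I would verify this first in the fully commuting eigenbasis, where everything reduces to a classical exponential tilting computation.
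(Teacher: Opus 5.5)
Your final argument is correct, but it obtains the first loss by a different route from the paper.

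\textbf{First loss.} The paper writes $-sH(R^m|B^m)_\tau=\inf_{\sigma}sD(\tau\|I\otimes\sigma)$ and exchanges $\sup_\tau$ and $\inf_\sigma$ by Sion's theorem. It then restricts $\sigma$ to permutation-invariant states by symmetry and convexity, bounds $\log\sigma\le\log\sigma^u+\log v_m$, and only then evaluates the remaining supremum over $\tau$ by the Gibbs principle.

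You instead plug in the Gibbs maximizer directly, $\tau\propto\widehat\omega^{\alpha}(I\otimes\sigma^u)^{1-\alpha}$. The operators commute after pinching, and $\supp\tau=\supp\widehat\omega$. You then treat the inner infimum at its actual minimizer $\tau_{B^m}$. Since $\widehat\omega$ and $I\otimes\sigma^u$ are invariant under simultaneous permutations of the $R$ and $B$ copies, so is $\tau$. Hence $\tau_{B^m}\le v_m\sigma^u$, and $-H(R^m|B^m)_\tau=D(\tau\|I\otimes\sigma^u)-D(\tau_{B^m}\|\sigma^u)\ge D(\tau\|I\otimes\sigma^u)-\log v_m$. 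A direct computation then gives exactly $sD(\tau\|I\otimes\sigma^u)-D(\tau\|\widehat\omega)=\frac1\alpha\log\Tr[\widehat\omega^\alpha(I\otimes\sigma^u)^{1-\alpha}]=s\widetilde D_\alpha(\widehat\omega\|I\otimes\sigma^u)$.

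Because a lower bound on a supremum needs only one witness, your route dispenses with the minimax theorem and with the symmetrization of $\sigma$. The paper's route is more structural, since it identifies the left side as an inf--sup. Your $\tau$ is precisely the optimizer of the paper's inner problem at $\sigma=\sigma^u$. Discard the sign-confused displays in your first paragraph; they play no role.

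\textbf{Second stage.} Two of your justifications point the wrong way, although the inequalities you need are true.

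First, data processing under $\id\otimes\mathcal P_m$, which fixes $I\otimes\sigma^u$, gives $\widetilde D_\alpha(\widehat\omega\|I\otimes\sigma^u)\le\widetilde D_\alpha(\omega^{\otimes m}\|I\otimes\sigma^u)$. That is the opposite of what you need. The required lower bound comes only from $\omega^{\otimes m}\le v_m\widehat\omega$, conjugated by $(I\otimes\sigma^u)^{-s/2}$, together with $\Tr X^\alpha\le\Tr Y^\alpha$ for $0\le X\le Y$. This costs $\frac{\alpha}{\alpha-1}\log v_m$ in $\widetilde D_\alpha$, i.e.\ $\log v_m$ after multiplying by $s$.

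Second, $\sigma^{\star\otimes m}\le v_m\sigma^u$ combined with antimonotonicity yields an \emph{upper} bound on $\widetilde D_\alpha(\cdot\|I\otimes\sigma^u)$, which is of no use here. The needed step is simply $\widetilde D_\alpha(\omega^{\otimes m}\|I\otimes\sigma^u)\ge\inf_{\sigma}\widetilde D_\alpha(\omega^{\otimes m}\|I\otimes\sigma)=-m\widetilde H_\alpha^\uparrow(R|B)_\omega$ by additivity, at no cost.

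With these corrections your accounting of $(1+s)\log v_m$ matches the paper's.
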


\begin{proof}
All conditioning states below are normalized, and logarithms of
$\widehat\omega$ are restricted to its support. Sion's theorem applied to
the compact state domain of $\tau$ and the full-rank conditioning states gives
\begin{align}
 &\sup_\tau[-sH(R^m|B^m)_\tau-D(\tau\|\widehat\omega)]
       \nonumber\\
 &=\inf_{\sigma_{B^m}>0}\sup_\tau
       [sD(\tau\|I_{R^m}\otimes\sigma_{B^m})
          -D(\tau\|\widehat\omega)].
 \label{egupper:eq:conditional-minimax}
\end{align}
The payoff is concave and continuous in $\tau$ and convex and continuous in
$\sigma_{B^m}$. The state $\widehat\omega$ is invariant under simultaneous
permutations of the $R$ and $B$ copies, since $\omega^{\otimes m}$ has this
symmetry and the pinching map commutes with permutations. Thus the supremum
over $\tau$ in Eq.~\eqref{egupper:eq:conditional-minimax} is unchanged by
permuting $\sigma_{B^m}$. By convexity, averaging $\sigma_{B^m}$ over these
permutations cannot increase this supremum, so the infimum may be restricted
to permutation-invariant states. For such states, universal domination and
operator monotonicity give
$\log\sigma_{B^m}\le\log\sigma^u_{B^m}+(\log v_m)I_{B^m}$.
Moreover, pinching ensures
$[\widehat\omega_{R^mB^m},I_{R^m}\otimes\sigma^u_{B^m}]=0$.
The Gibbs variational formula therefore evaluates the remaining supremum
as $s\widetilde D_\alpha(\widehat\omega\|
I_{R^m}\otimes\sigma^u_{B^m})$, giving
\begin{equation}
 \sup_\tau[-sH(R^m|B^m)_\tau-D(\tau\|\widehat\omega)]
 \ge s\widetilde D_\alpha(\widehat\omega\|
                    I_{R^m}\otimes\sigma^u_{B^m})-s\log v_m,
 \label{egupper:eq:universal-loss}
\end{equation}
as in Ref.~\cite[Proposition~5(iii)]{li2023strong}; see also
Ref.~\cite{mosonyi2017strong}. This is the first loss.

The pinching inequality $\omega^{\otimes m}\le v_m\widehat\omega$, followed
by congruence with
$(I_{R^m}\otimes\sigma^u_{B^m})^{-s/2}$ and eigenvalue monotonicity, gives
\begin{equation}
 s\widetilde D_\alpha(\widehat\omega\|
                    I_{R^m}\otimes\sigma^u_{B^m})
 \ge s\widetilde D_\alpha(\omega^{\otimes m}\|
                    I_{R^m}\otimes\sigma^u_{B^m})-\log v_m.
 \label{egupper:eq:pinching-loss}
\end{equation}
This is the second loss. Finally, minimizing the last divergence over its
conditioning state and using conditional-entropy additivity on products gives
the right side of Eq.~\eqref{egupper:eq:uniform-pinching}.
\end{proof}

\begin{proposition}[Entanglement-generation exponent]
\label{egupper:prop:bound}
For every finite-dimensional quantum channel $\mathcal N$ and $r\ge0$,
\begin{align}
 E_{\mathrm{sc,eg}}(r,\mathcal N)
 &\le\inf_{k\ge1}\sup_{\alpha>1}\frac{\alpha-1}{\alpha}
       \left[r-\frac1k\RenyiCoherentInformation{\alpha}(\mathcal N^{\otimes k})\right]
 \label{egupper:eq:finite-block-bound}\\
 &=\sup_{\alpha>1}\frac{\alpha-1}{\alpha}
       \big[r-\RenyiQuantumCapacity{\alpha}(\mathcal N)\big].
 \label{egupper:eq:bound}
\end{align}
Consequently, equality holds in \eqref{eq:exponent-bound}.
\end{proposition}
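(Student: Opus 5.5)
The plan is to apply the three preceding lemmas to the $k$-fold channel $\mathcal N^{\otimes k}$ in place of $\mathcal N$, with a block length $m$ of pinching, and then take $m\to\infty$.

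\emph{Reduction to $\mathcal N^{\otimes k}$.} Fix $k\ge1$. A code for $\mathcal N^{\otimes k}$ with $\ell$ uses and rate $kr$ per use is a code for $\mathcal N$ with $k\ell$ uses and rate $r$. For $n$ not divisible by $k$ we pad with at most $k-1$ idle channel uses. This changes $M$ by a bounded factor, which is harmless in the limit because $F_{\mathrm{eg}}^\star$ is monotone in $M$ up to the embedding factor $K/M$ used in Lemma~\ref{egupper:lem:change-measure}. Hence $E_{\mathrm{sc,eg}}(r,\mathcal N)\le\frac1k E_{\mathrm{sc,eg}}(kr,\mathcal N^{\otimes k})$. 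It therefore suffices to prove, for a generic channel $\mathcal M$ (later $\mathcal M=\mathcal N^{\otimes k}$),
\begin{equation*}
 E_{\mathrm{sc,eg}}(r,\mathcal M)\le\sup_{0<s<1}\bigl\{sr-s\RenyiCoherentInformation{1/(1-s)}(\mathcal M)\bigr\}.
\end{equation*}
With $s=(\alpha-1)/\alpha$, rescaling by $1/k$ then yields \eqref{egupper:eq:finite-block-bound}.

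\emph{One-shot bound via pinching.} We apply Lemma~\ref{egupper:lem:convexification} to $\mathcal M^{\otimes m}$ with rate $mr$. We then use $E_{\mathrm{sc,eg}}(r,\mathcal M)\le\frac1m E_{\mathrm{sc,eg}}(mr,\mathcal M^{\otimes m})$, which follows by the same blocking argument. For the pair $(\omega,\tau)$ we choose $\widehat\omega=(\id\otimes\mathcal P_m)(\omega_{RB}^{\otimes m})$, where $\omega_{RB}$ is the channel output of a purification of an arbitrary input $\rho_A$. This state is admissible: it comes from $m$ uses of $\mathcal M$ followed by the local channel $\mathcal P_m$ on $B^m$, and the reference $R^m$ can be embedded in a copy of $A^m$. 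Lemma~\ref{egupper:lem:pinching} bounds the inner supremum from below by $ms[-\widetilde H_\alpha^\uparrow(R|B)_\omega]-(1+s)\log v_m$. Dividing by $m$ gives
\begin{equation*}
 E_{\mathrm{sc,eg}}(r,\mathcal M)\le\sup_{0<s<1}\Bigl\{sr+s\widetilde H_{1/(1-s)}^\uparrow(R|B)_\omega+\tfrac{(1+s)\log v_m}{m}\Bigr\}.
\end{equation*}
Since $v_m$ is polynomial in $m$, the last term is at most $2\log v_m/m\to0$ uniformly in $s$. Taking the infimum over $\rho_A$ inside is the delicate point. The supremum over $s$ of an infimum over inputs is not obviously the same as the infimum of the supremum.

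\emph{Main obstacle: handling the $s$-dependence of the optimal input.} I would avoid the exchange entirely by building it into the convexification. The idea is to apply Lemma~\ref{egupper:lem:pinching} before taking the supremum over $s$, inside the quantity $\sup_{\omega,\tau}[\cdot]$ of Lemma~\ref{egupper:lem:convexification}, which is evaluated separately for each $s$. Concretely, for each fixed $s$ we choose the input $\rho_A$ that nearly attains $\RenyiCoherentInformation{1/(1-s)}(\mathcal M)$, which is allowed because the inner supremum there already ranges over all admissible $\omega$. This gives, for each $s$,
\begin{equation*}
 \frac1m\sup_{\omega,\tau}[\cdots]\ge s\RenyiCoherentInformation{1/(1-s)}(\mathcal M)-\frac{2\log v_m}{m}.
\end{equation*}
The key point is that Lemma~\ref{egupper:lem:convexification} is applied once to $\mathcal M^{\otimes m}$, so no exchange is needed. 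Letting $m\to\infty$ proves the claimed bound for $\mathcal M$, and hence \eqref{egupper:eq:finite-block-bound}.

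\emph{Equality with the regularized form.} The inequality $\ge$ in \eqref{egupper:eq:bound} follows from $\frac1k\RenyiCoherentInformation{\alpha}(\mathcal N^{\otimes k})\le\RenyiQuantumCapacity{\alpha}(\mathcal N)$. For $\le$, the function $g_k(s)=s[r-\frac1k\RenyiCoherentInformation{1/(1-s)}(\mathcal N^{\otimes k})]$ converges pointwise to its regularized limit $g(s)$. The convergence is monotone along $k=2^j$ by superadditivity. Moreover, by \eqref{eq:dimension-bounds} each $g_k$ is bounded by $s\log\dim B$ near $s=0$ and Lipschitz, and $g_k(s)\le sr$ is bounded. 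The infimum over $k$ of the supremum over $s$ can then be bounded by a Dini-type argument on $s\in[\epsilon,1-\epsilon]$. The endpoint regions are negligible: near $s=0$ both $g_k$ and $g$ are small, and near $s=1$ monotonicity in $\alpha$ gives control. This yields equality, and combining with the converse part gives equality in \eqref{eq:exponent-bound}.
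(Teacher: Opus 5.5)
Your proposal is correct and follows the paper's proof. It has the same steps: the blocking inequality; Lemma~\ref{egupper:lem:convexification} applied to the $m$-fold block, with Lemma~\ref{egupper:lem:pinching} inserted separately for each $s$ so the input may depend on $s$ (this is exactly how the paper avoids the exchange you flag); then $m\to\infty$ and substitution of $\mathcal N^{\otimes k}$. The only variation is in the final exchange of $\inf_k$ and $\sup_\alpha$: your Dini-type argument along $k=2^j$ needs continuity of each $g_k$ in $s$, which holds via the uniformly bounded derivative $H(B^k)_\zeta-H(E^k)_\zeta$ from Lemma~\ref{lem:derivative} but does not follow from \eqref{eq:dimension-bounds} as you suggest, whereas the paper's finite-grid argument uses only the one-sided bound $g_k(t)\le g_k(s)+r(t-s)$ and pointwise convergence, so it needs neither continuity in $s$ nor monotonicity in $k$.
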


\begin{proof}
The case $r=0$ is immediate. To first obtain the unregularized bound, note the
exact coding identity
\begin{equation}
 F_{\mathrm{eg}}^\star(\ell,mr;\mathcal N^{\otimes m})
 =F_{\mathrm{eg}}^\star(m\ell,r;\mathcal N),
 \label{egupper:eq:block-fidelity}
\end{equation}
where the channel after the semicolon indicates which channel defines the
optimization. Taking logarithms and using the existence of the operational
limits established after Eqs.~\eqref{eq:exponent-definition} and
\eqref{holevo:eq:exponent-definition} therefore gives
$E_{\mathrm{sc,eg}}(mr,\mathcal N^{\otimes m})
=mE_{\mathrm{sc,eg}}(r,\mathcal N)$.

Apply Lemma~\ref{egupper:lem:convexification} to the block channel
$\mathcal N^{\otimes m}$ at rate $mr$. In its inner supremum, restrict
$\omega$ to the pinched states in Eq.~\eqref{egupper:eq:pinched-state}, with
$\omega_{RB}=(\id_R\otimes\mathcal N)(\psi_{RA})$ and $\psi$ pure. Such states
are produced by the block channel followed by Bob's local pinching.
Lemma~\ref{egupper:lem:pinching}, followed by the supremum over $\psi$, yields
\begin{align}
 mE_{\mathrm{sc,eg}}(r,\mathcal N)
 &\le \sup_{0<s<1}\Bigl\{ms
       [r-\RenyiCoherentInformation{1/(1-s)}(\mathcal N)]
       +(1+s)\log v_m\Bigr\}
 \label{egupper:eq:pinching-bound}\\
 &\le m\sup_{\alpha>1}\frac{\alpha-1}{\alpha}
       [r-\RenyiCoherentInformation{\alpha}(\mathcal N)]+2\log v_m.
 \label{egupper:eq:before-limit}
\end{align}
Since $v_m$ grows polynomially, division by $m$ and $m\to\infty$ remove the
remainder. Applying the resulting bound to $\mathcal N^{\otimes k}$ at rate
$kr$, and using the block identity once more, proves
Eq.~\eqref{egupper:eq:finite-block-bound}.

It remains to exchange regularization and the order supremum; compare
\cite[Lemma~28]{li2024operational}. Nonnegativity and monotonicity in the order
imply, for $0<s<t<1$,
\begin{align}
 t\left[r-\frac1k\RenyiCoherentInformation{1/(1-t)}(\mathcal N^{\otimes k})\right]
 &\le s\left[r-\frac1k\RenyiCoherentInformation{1/(1-s)}
                    (\mathcal N^{\otimes k})\right]+r(t-s).
 \label{egupper:eq:grid}
\end{align}
The dimension bound makes the expression tend to zero as $t\searrow0$.
Defining its value to be zero at $s=0$, a grid of mesh $\varepsilon$ therefore
approximates the supremum within $r\varepsilon$, uniformly in $k$. On a fixed
finite grid, Eq.~\eqref{eq:regularization} permits $k\to\infty$ to pass through
the maximum. Letting $\varepsilon\searrow0$ shows that the finite-block
suprema converge to the regularized supremum in
Eq.~\eqref{egupper:eq:bound}. 
Each finite-block supremum is at least this limit
by \eqref{eq:regularized-supremum}, so its infimum has the same value.
Combining this upper bound with Eq.~\eqref{eq:exponent-bound} completes the
proof.
\end{proof}

\section{Other quantum communication criteria}
\label{app:quantum-criteria}

Fix an $(n,M)$ transmission code with $M\ge2^{nr}$. Let $F_{\mathrm{et}}$ be the squared
entanglement-transmission fidelity, $F_{\mathrm{sub}}$ the minimum pure-state
fidelity without a reference, and $F_{\mathrm{ref}}$ the minimum fidelity over
pure inputs including an arbitrary reference. Write
$\Lambda=\mathcal D\circ\mathcal N^{\otimes n}\circ\mathcal E$ for the
effective logical channel. All fidelities use the squared convention. Every
transmission code is a generation code, and
\begin{equation}
 F_{\mathrm{ref}}\le F_{\mathrm{et}}
 \le F_{\mathrm{eg}}^{\star}(n,r).
 \label{eq:transmission-eg-comparison}
\end{equation}
The Haar-average identity
$F_{\mathrm{av}}=(MF_{\mathrm{et}}+1)/(M+1)$
\cite[Proposition~4.4]{kretschmann2004tema} gives
\begin{equation}
 F_{\mathrm{sub}}\le F_{\mathrm{av}}
 \le2F_{\mathrm{eg}}^{\star}(n,r),
 \label{eq:subspace-eg-comparison}
\end{equation}
where $F_{\mathrm{eg}}^{\star}(n,r)\ge1/M$ for a product preparation.
Thus exponential decay of the generation fidelity implies the same property
for all these criteria. It also implies exponential convergence of normalized
diamond-distance error to one, since
$1-\tfrac12\|\Lambda-\id_M\|_\diamond\le F_{\mathrm{et}}$.

The exact exponents need not coincide. For a replacer channel
$\mathcal N(\rho)=\sigma\Tr[\rho]$, the optimal generation and transmission
fidelities are $1/M$ and $1/M^2$, giving exponents $r$ and $2r$ at rate $r$.
The exact quantum exponent in Theorem~\ref{thm:operational} is therefore
specific to entanglement generation.

\section{Classical strong-converse exponent}
\label{app:classical-exponent}

\subsection{Converse bound and positivity}

For each $\alpha>1$, the one-shot converse yields
\begin{equation}
 -\frac1n\log P_{\mathrm s}^{\star}(n,r)
 \ge\frac{\alpha-1}{\alpha}
     \left[r-\frac1n\RenyiHolevoQuantity{\alpha}(\mathcal N^{\otimes n})\right].
\end{equation}
Taking $n\to\infty$ and then the supremum over $\alpha$ proves
the lower bound on $E_{\mathrm{sc,c}}(r,\mathcal N)$ in
Eq.~\eqref{holevo:eq:exponent-bound}. If $r>\ClassicalCapacity(\mathcal N)$,
choose $0<\varepsilon<r-\ClassicalCapacity(\mathcal N)$. By
Proposition~\ref{prop:finite-block-continuity}, one may fix $\alpha>1$ close
enough to one that
\begin{equation}
 \RenyiHolevoQuantity{\alpha}(\mathcal N^{\otimes n})
 \le n[\ClassicalCapacity(\mathcal N)+\varepsilon]\qquad(n\ge1).
\label{classicalupper:eq:uniform-continuity-consequence}
\end{equation}
Substitution into the one-shot inequality gives exponential decay with a
positive exponent. For a code sequence with limiting
inferior rate above capacity, choose $r$ strictly between them and
apply this bound for all sufficiently large $n$. This proves the classical assertions of
Theorem~\ref{thm:operational}.

\subsection{Matching upper bound from Mosonyi--Ogawa}
\label{app:classical-upper}

The matching upper bound for classical communication follows from
Mosonyi and Ogawa's coding bound for quantum channels
\cite[Theorem~VI.1]{mosonyi2017strong}. That theorem gives a
single-block upper bound and a regularized lower bound. We record only
the passage from its upper bound to the regularized equality needed here.
For the alphabet consisting of all states on the input of
$\mathcal N^{\otimes k}$, Proposition~IV.2 of that paper identifies their
optimized quantity $\chi_\alpha^*$ with
$\RenyiHolevoQuantity{\alpha}(\mathcal N^{\otimes k})$: finitely supported
input distributions are exactly the finite ensembles used in our definition
\cite[Proposition~IV.2 and Section~VI]{mosonyi2017strong}.

\begin{proposition}[Consequence of Mosonyi--Ogawa]
\label{classicalupper:prop:bound}
For every finite-dimensional quantum channel $\mathcal N$ and $r\ge0$,
\begin{align}
 E_{\mathrm{sc,c}}(r,\mathcal N)
 &=\inf_{k\ge1}\sup_{\alpha>1}\frac{\alpha-1}{\alpha}
       \left[r-\frac1k\RenyiHolevoQuantity{\alpha}(\mathcal N^{\otimes k})\right]
 \label{classicalupper:eq:finite-block-bound}\\
 &=\sup_{\alpha>1}\frac{\alpha-1}{\alpha}
       \big[r-\RenyiClassicalCapacity{\alpha}(\mathcal N)\big].
 \label{classicalupper:eq:bound}
\end{align}
\end{proposition}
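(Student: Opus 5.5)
The first equality \eqref{classicalupper:eq:finite-block-bound} is obtained from two inequalities. The lower bound $\ge$ is already in hand. The converse part above gives, for each $k$, $E_{\mathrm{sc,c}}(r,\mathcal N)\ge\sup_{\alpha>1}\frac{\alpha-1}{\alpha}[r-\RenyiClassicalCapacity{\alpha}(\mathcal N)]$. Since $\frac1k\RenyiHolevoQuantity{\alpha}(\mathcal N^{\otimes k})\le\RenyiClassicalCapacity{\alpha}(\mathcal N)$ by \eqref{holevo:eq:regularized-supremum}, this lower bound is at most every finite-block supremum. It is therefore at most their infimum over $k$.

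Hence the whole proposition reduces to two claims. The first is the upper bound $E_{\mathrm{sc,c}}(r,\mathcal N)\le\sup_{\alpha>1}\frac{\alpha-1}{\alpha}[r-\frac1k\RenyiHolevoQuantity{\alpha}(\mathcal N^{\otimes k})]$ for each $k$. The second is that the infimum over $k$ of these suprema equals the regularized expression.

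For the upper bound I will invoke Mosonyi--Ogawa \cite[Theorem~VI.1]{mosonyi2017strong}. I apply it to the block channel $\mathcal N^{\otimes k}$, with input alphabet the full state space of $A^k$, at rate $kr$. With the identification $\chi^*_\alpha=\RenyiHolevoQuantity{\alpha}(\mathcal N^{\otimes k})$ recorded above, this gives
\[
E_{\mathrm{sc,c}}(kr,\mathcal N^{\otimes k})\le\sup_{\alpha>1}\frac{\alpha-1}{\alpha}\bigl[kr-\RenyiHolevoQuantity{\alpha}(\mathcal N^{\otimes k})\bigr].
\]
Next I use the block coding identity $P_{\mathrm s}^\star(\ell,kr;\mathcal N^{\otimes k})=P_{\mathrm s}^\star(k\ell,r;\mathcal N)$, the classical analogue of \eqref{egupper:eq:block-fidelity}. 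Together with existence of the operational limit (Fekete), it gives $E_{\mathrm{sc,c}}(kr,\mathcal N^{\otimes k})=kE_{\mathrm{sc,c}}(r,\mathcal N)$. Dividing by $k$ yields the claimed per-$k$ bound.

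One care point is the rounding $\lceil 2^{k\ell r}\rceil$ versus $\lceil2^{\ell kr}\rceil$. Here the two agree, and for general block lengths $n$ not divisible by $k$ the subadditive limit handles it. A second care point is that Mosonyi--Ogawa state their bound possibly with an $\alpha$-range or normalization convention ($\frac{\alpha-1}{\alpha}$ versus $s$), which I must match.

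The second equality, exchanging $\inf_k$ with $\sup_\alpha$, I will prove exactly as in the last step of Proposition~\ref{egupper:prop:bound}. Write $s=(\alpha-1)/\alpha\in(0,1)$ and $g_k(s)=s[r-\frac1k\RenyiHolevoQuantity{1/(1-s)}(\mathcal N^{\otimes k})]$. Nonnegativity and order monotonicity of $\RenyiHolevoQuantity{\alpha}$ give, for $s<t$, the Lipschitz-type bound $g_k(t)\le g_k(s)+r(t-s)$, uniformly in $k$, as in \eqref{egupper:eq:grid}. The dimension bound $0\le\frac1k\RenyiHolevoQuantity{\alpha}(\mathcal N^{\otimes k})\le\log\dim B$ gives $g_k(s)\to0$ as $s\searrow0$, uniformly in $k$.

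Consequently a finite grid of mesh $\varepsilon$ approximates each $\sup_s g_k$ within $r\varepsilon$, uniformly in $k$. On a finite grid the limit $k\to\infty$ (Fekete, \eqref{holevo:eq:regularized-supremum}) passes through the maximum. Sending $\varepsilon\to0$ shows that $\lim_k\sup_s g_k=\sup_s g_\infty$. Since $\frac1k\RenyiHolevoQuantity{\alpha}(\mathcal N^{\otimes k})\le\RenyiClassicalCapacity{\alpha}(\mathcal N)$, each $\sup_sg_k\ge\sup_sg_\infty$, so the infimum over $k$ equals this limit.

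The main obstacle I expect is the faithful transfer of Mosonyi--Ogawa's theorem. This means checking that their success-probability exponent definition, including the $\liminf$/$\limsup$ convention, the message-size rounding, and average versus maximal success, matches $P_{\mathrm s}^\star(n,r)$ with $M\ge2^{nr}$. It also means checking that their $\chi^*_\alpha$ over a continuous alphabet coincides with our finite-ensemble supremum, as noted via their Proposition~IV.2. The remaining steps are routine adaptations of arguments already given.
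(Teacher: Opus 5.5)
Your proposal follows the paper's proof essentially step for step: Mosonyi--Ogawa's Theorem~VI.1 applied to the block channel $\mathcal N^{\otimes k}$, reduction to $\mathcal N$ along block lengths $k\ell$ via existence of the operational limit, and the uniform finite-grid argument to exchange $\inf_k$ with $\sup_\alpha$. The transfer check you deferred is not a mere matching of definitions. The code sequences in their theorem are only guaranteed to reach the target rate in the limit, so a bound at rate exactly $kr$ does not directly control $E_{\mathrm{sc,c}}(kr,\mathcal N^{\otimes k})$ as defined through $M\ge2^{nr}$. The paper closes this by applying their theorem at rate $k(r+\varepsilon)$, so that the codes eventually satisfy $M\ge2^{kmr}$, and then letting $\varepsilon\searrow0$; this is harmless because $0<(\alpha-1)/\alpha<1$ makes the bound $1$-Lipschitz in the rate, uniformly in $\alpha$.
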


\begin{proof}
The case $r=0$ is immediate. For $r>0$, apply the upper bound of
\cite[Theorem~VI.1]{mosonyi2017strong} to the block channel
$\mathcal N^{\otimes k}$ at rate $k(r+\varepsilon)$.
Use code sequences with arbitrarily small excess over its exponent bound.
They are admissible for $\mathcal N$ at lengths $km$;
the rate margin $\varepsilon>0$ ensures their rates are eventually at
least $r$. Evaluating the operational limit along these block lengths,
dividing by $k$, and letting $\varepsilon\searrow0$ gives
\begin{equation}
 E_{\mathrm{sc,c}}(r,\mathcal N)
 \le\sup_{\alpha>1}\frac{\alpha-1}{\alpha}
       \left[r-\frac1k\RenyiHolevoQuantity{\alpha}(\mathcal N^{\otimes k})\right]
 \qquad(k\ge1).
 \label{classicalupper:eq:fixed-block}
\end{equation}
The passage in rate is uniform in the order, since
$0<(\alpha-1)/\alpha<1$. This invokes their coding theorem directly;
no separate classical-quantum coding construction is required.

To pass to regularization, we use a finite grid argument; compare
Ref.~\cite[Lemma~28]{li2024operational}. Put
$f_k(s)=s[r-k^{-1}\RenyiHolevoQuantity{1/(1-s)}
(\mathcal N^{\otimes k})]$ for $0<s<1$, and $f_k(0)=0$.
The dimension bound gives $f_k(s)\to0$ as $s\searrow0$, so adjoining
$s=0$ does not change the supremum.
Nonnegativity and monotonicity of the R\'enyi information imply
\begin{equation}
 f_k(t)\le f_k(s)+r(t-s),\qquad 0\le s<t<1.
 \label{classicalupper:eq:grid}
\end{equation}
Consequently, a finite grid of mesh at most $\eta$, rounding each
$t<1$ down to a grid point, approximates $\sup_{0\le s<1}f_k(s)$
within $r\eta$, uniformly in $k$. Pointwise convergence from
\eqref{holevo:eq:definition} passes through the finite grid maximum.
Letting $k\to\infty$ and then $\eta\searrow0$ proves convergence of
the order suprema to the regularized expression in
\eqref{classicalupper:eq:bound}. Each finite-block supremum is at least
this limit by \eqref{holevo:eq:regularized-supremum}, so its infimum
also equals that limit. Combining \eqref{classicalupper:eq:fixed-block}
with the lower bound \eqref{holevo:eq:exponent-bound} proves both
equalities. The uniform classical order-one limit in
Theorem~\ref{thm:main} is not needed for this characterization;
it establishes its strict positivity for $r>\ClassicalCapacity(\mathcal N)$.
\end{proof}

\begin{proof}[Proof of Theorem~\ref{thm:operational}]
The product-code argument following Eqs.~\eqref{eq:exponent-definition}
and~\eqref{holevo:eq:exponent-definition} proves existence of both operational
limits. The converse-and-positivity arguments in
Appendices~\ref{app:quantum-exponent} and~\ref{app:classical-exponent}, together
with Propositions~\ref{egupper:prop:bound}
and~\ref{classicalupper:prop:bound}, prove the two equalities and their strict
positivity above the corresponding capacities.
\end{proof}

%

\end{document}